\documentclass[11pt]{article}
\usepackage[utf8]{inputenc}
\usepackage[left=1in, top=1in, right=1in, bottom=1in]{geometry}
\usepackage[T1]{fontenc}
\usepackage{libertine}

\usepackage{microtype}
\usepackage{graphicx}
\usepackage{subfigure}
\usepackage{booktabs} 
\usepackage{amsmath, amsthm, amssymb, bm}
\usepackage{mathtools}
\usepackage{thm-restate}
\usepackage[square,numbers]{natbib}
\usepackage{color}
\usepackage{thmtools}
\usepackage{thm-restate}
\usepackage{comment}
\usepackage{enumerate}
\usepackage{hyperref}
\hypersetup{
    colorlinks=false,
    linkcolor=blue,
    filecolor=magenta,      
    urlcolor=cyan,
}

\newcommand{\AutoAdjust}[3]{\mathchoice{ \left #1 #2  \right #3}{#1 #2 #3}{#1 #2 #3}{#1 #2 #3} }
\newcommand{\Xcomment}[1]{{}}

\newcommand{\InBrackets}[1]{\AutoAdjust{[}{#1}{]}}
\newcommand{\Ex}[2][]{\operatorname{\mathbf E}_{#1}\InBrackets{#2}}
\newcommand{\Prx}[2][]{\operatorname{\mathbf{Pr}}_{#1}\InBrackets{#2}}

\newcommand{\dd}{\mathrm{d}}  

\newcommand{\eps}{\epsilon}
\newcommand{\OPT}{\mathrm{OPT}}
\newcommand{\Dis}{D^S} 
\newcommand{\MAX}{\mathrm{MAX}}
\newcommand{\MAXV}{\textsc{MaxV}}

\renewcommand{\P}{\mathbb{P}}
\newcommand{\E}{\mathbf{E}}

\newcommand{\DX}{v^S}
\newcommand{\Pandora}{PNOI } 
\newcommand{\TPandora}{T-PNOI } 

\newcommand{\Sdiscretized}{$S$-discretized}
\newcommand{\TState}{I_T}
\newcommand{\Ins}{B}
\newcommand{\DIns}{B^S}

\newcommand{\SG}{G_S}

\newcommand{\UP}{D^L}
\newcommand{\ActSet}{A}

\newcommand{\fa}{\mathrm{Fa}}
\newcommand{\SUB}{Q}

\newcommand{\End}{\mathsf{end}}

\newcommand{\PI}{{W}}


\newcommand{\poly}{\mathrm{poly}}

\newcommand{\Ut}{\mathrm{Utility}}
\newcommand{\Ls}{\mathrm{Loss}}
\newcommand{\Tbf}{T_{\textnormal{Bef}}}
\newcommand{\Tat}{T_{\textnormal{Aft}}}

\newtheorem{thm}{Theorem}[section]
\newtheorem{proposition}[thm]{Proposition}
\newtheorem{lemma}[thm]{Lemma}

\newtheorem{claim}[thm]{Claim}
\newtheorem{corollary}[thm]{Corollary}
\newtheorem{definition}[thm]{Definition}
\newtheorem{fact}[thm]{Fact}

\DeclareMathOperator*{\argmax}{arg\,max}
\DeclareMathOperator*{\supp}{supp}
\DeclareRobustCommand\iff{\;\Longleftrightarrow\;}

\newcommand{\noaccents}[1]{#1}
\newcommand{\newagentvar}[3][\noaccents]{%
\expandafter\newcommand\expandafter{\csname #2\endcsname}{#1{#3}}%
\expandafter\newcommand\expandafter{\csname #2s\endcsname}{#1{\boldsymbol{#3}}}%
\expandafter\newcommand\expandafter{\csname #2smi\endcsname}[1][i]{#1{\boldsymbol{#3}}_{-##1}}%
\expandafter\newcommand\expandafter{\csname #2i\endcsname}[1][i]{#1{#3}_{##1}}%
\expandafter\newcommand\expandafter{\csname #2ith\endcsname}[1][i]{#1{#3}_{(##1)}}%
}

\newcommand{\newvecagentvar}[3][\noaccents]{%
\expandafter\newcommand\expandafter{\csname #2\endcsname}{#1{\boldsymbol{#3}}}%
\expandafter\newcommand\expandafter{\csname #2s\endcsname}{#1{\boldsymbol{#3}}}%
\expandafter\newcommand\expandafter{\csname #2smi\endcsname}[1][i]{#1{\boldsymbol{#3}}_{-##1}}%
\expandafter\newcommand\expandafter{\csname #2i\endcsname}[1][i]{#1{\boldsymbol{#3}}_{##1}}%
\expandafter\newcommand\expandafter{\csname #2ith\endcsname}[1][i]{#1{#3}_{(##1)}}%
}


\newagentvar{alloc}{x}
\newagentvar{pay}{p}
\newagentvar{val}{v}
\newagentvar{util}{u}
\newagentvar{payment}{p}
\newagentvar{dist}{F}

\newcommand{\gittins}{\tau}
\newcommand{\amorval}{\kappa}
\newcommand{\alg}{\mathcal{A}}

\newcommand{\Dalg}{\mathcal{A}^S}
\newcommand{\threshold}{\theta}
\newcommand{\classnp}{\mathsf {NP}}
\newcommand{\classpspace}{\mathsf {PSPACE}}
\newcommand{\classp}{\mathsf {P}}
\newcommand{\leV}{\le_V}

\newcommand{\image}{R}

\newcommand{\jiawei}[1]{}
\newcommand{\daogao}[1]{}
\newcommand{\hufu}[1]{}

\title{Pandora Box Problem with Nonobligatory Inspection: Hardness and Approximation Scheme}

\author{Hu Fu
\thanks{ITCS, Shanghai University of Finance and Economics. \texttt{fuhu@mail.shufe.edu.cn}}
\and Jiawei Li 
\thanks{University of Texas at Austin. 
\texttt{davidlee@cs.utexas.edu}}
\and Daogao Liu 
\thanks{University of Washington. \texttt{dgliu@uw.edu}}
}


\date{}
\usepackage{graphicx}

\begin{document}

\maketitle

\begin{abstract}
Weitzman (1979) introduced the Pandora Box problem as a model for sequential search with inspection costs, and gave an elegant index-based policy that attains provably optimal expected payoff.
In various scenarios, the searching agent may select an option without making a costly inspection.
The variant of the Pandora box problem with non-obligatory inspection has attracted interest from both economics and algorithms researchers.
Various simple algorithms have proved suboptimal, with the best known 0.8-approximation algorithm due to Guha et al. (2008).
No hardness result for the problem was known.

In this work, we show that it is $\classnp$-hard to compute an optimal policy for Pandora's problem with nonobligatory inspection.
We also give a polynomial-time approximation scheme (PTAS) that computes policies with an expected payoff at least $(1 - \epsilon)$-fraction of the optimal, for arbitrarily small $\epsilon > 0$.
On the side, we show the decision version of the problem to be in $\classnp$.
\end{abstract}

\section{Introduction}
\label{sec:intro}

\citet{weitzman1979optimal} introduced the Pandora Box problem in 1979 as a model for sequential search with inspection costs.
An agent is to select from $n$~options, viewed as locked boxes. 
Each box~$i$ contains a value~$\vali$ drawn independently from a known distribution~$\dist_i$, but $\vali$ is revealed only if the agent opens box~$i$, incurring a search cost~$c_i$.
At any step, the agent may choose to either select a box that has been opened and quit, or to open another box.
Her goal is to maximize the expected value of the box selected, minus the search costs accrued along the way.

A policy for such a stochastic sequential problem may conceivably be adaptive in intricate ways.
It may therefore come as a surprise that \citeauthor{weitzman1979optimal} showed the problem of admitting a simple and elegant optimal policy:
there are indices, one for each box~$i$, computable from $\disti$ and~$c_i$, such that a ranking policy based on these indices maximizes the expected payoff.  

Weitzman's formulation and the index-based policy have been highly influential, and serve as the basis for many models that involve search frictions (see e.g.\@ \citet{armstrong17survey} for a survey, and \citet{mirrokni20pandora} for a recent example).  
It was later recognized that the index-based policy was a special case of \citet{gittins79banditprocesses}'s optimal algorithm for Bayesian bandits, an algorithm important in its own right, with many applications.

In \citet{weitzman1979optimal}'s motivating scenario, one searches for a technology among various alternatives; to be able to adopt any technology, research expenditure (the search cost) must be incurred before one sees the technology's benefit.
In many other scenarios, however, it is more natural to allow the agent the possibility to select an option without inspecting it.
For example, in a wireless system with multiple channels, a user intending to transmit a control packet may spend energy and time probing the transmission states of channels, but may also decide to use a channel without knowing its exact state \citep{GMS08}; 
a student making a school choice may not always pay a campus visit when she is confident enough that one choice is superior \citep{doval18}.
This problem variant is known as \emph{Pandora's problem with non-obligatory inspection} (PNOI).
In contrast to the original Pandora Box problem, various simple ranking policies are not optimal \citep{doval18}; in fact, optimal policies may be truly adaptive, in the sense that the order in which two options are inspected should depend on the outcome of the inspection of a third option.
One may easily attain at least $\tfrac 1 2$ of the optimal payoff by using the better of two simple policies: 
(i) Weitzman's index-based policy, and (ii) selecting the box with the highest expected value without any inspection.
Nontrivial algorithms with better approximation ratios have been proposed \citep{GMS08, beyhaghi2019pandora}, the best approximation ratio known so far being $0.8$ \citep{GMS08}.
On the other hand, it has been unresolved whether computing optimal policies is intractable --- the problem could be anywhere between $\classp$ and $\classpspace$-complete \citep{beyhaghi2019pandora}.  

In this work, we show that PNOI is $\classnp$-hard, giving the first hardness result for the problem (Theorem~\ref{thm:hardness}).
We also give a polynomial time approximation scheme (PTAS) for PNOI (Theorem~\ref{thm:general_ptas}).
On the side, we show the decision version of PNOI to be in $\classnp$ (Corollary~\ref{cor:NP}).

\paragraph{Computational Hardness for PNOI}

Before discussing the main idea of our hardness result,  it is helpful to first relate a structure theorem on optimal policies that we strengthen from \citet{GMS08}, 
The structure theorem by \citep{GMS08} is crucial for the $0.8$-approximation algorithm, and shows the existence of an optimal policy for which a unique box is possibly taken without inspection.
We follow their arguments a step further, and show the existence of an optimal policy~$\alg^*$ with a simple description (Theorem~\ref{thm:structure}): 
$\alg^*$ commits to a subset $T^*$ of boxes, an ordering $\sigma$ on~$T^*$, and a threshold $V_i$ for each box $i \in T^*$; $\alg^*$ opens the boxes in~$T^*$ in the ordering~$\sigma$, until either (a) a box~$i$ yields a value at least~$V_i$, at which point $\alg^*$ switches to running the index-based policy on the rest of the boxes (taking the best value seen so far as a free outside option), or (b) none of the first $|T^*| - 1$ boxes yield values passing their thresholds, at which point $\alg^*$ takes the last box in~$T^*$ without inspection.
As special cases, if $T^* = \emptyset$, $\alg^*$ is Weitzman's index policy; if $|T^*| = 1$, $\alg^*$ takes the box in~$T^*$ without inspection.
Our strengthened structure theorem shows the existence of a succinctly representable optimal policy, and implies the decision version of PNOI to be in~$\classnp$.
To take another interesting perspective, \citet{doval18} observed that, in an optimal policy for a PNOI instance, the outcome from probing a box may influence the order in which the other boxes are probed.  
Our structure theorem shows that, such order switching follows a simple structure: only during the stage of probing boxes in~$T^*$ can the order of future queries change (from~$\sigma$), and that change is triggered only when a value high enough is revealed, at which point the policy switches to the index policy.

To show hardness, we study a family of PNOI instances where it is easy to determine $T^*$ and the thresholds $V_1, \cdots, V_n$ in the above description of the optimal policy, so that the hardness is solely from deciding the ordering~$\sigma$.
In these instances, each value distribution $\disti$ is supported on $\{0, \tfrac 1 2, 1\}$, with expectation less than $\tfrac 1 2$, and index (in \citeauthor{weitzman1979optimal}'s sense) at least $\tfrac 1 2$.  
It can be shown that for such instances an optimal policy must be of the form stated in the structure theorem, with $T^*$ being the set of all boxes, and $V_i = \frac 1 2$ for each~$i$.

The possibility of switching to the index policy after each inspection adds difficulty to calculating the policy's expected payoff, but this calculation is necessary for a reduction.
Key to our analysis is to observe that a closely related policy has the \emph{non-exposed} property, a property that was first crystallized by Kleinberg, Waggoner, and Weyl (\citeyear{kleinberg2016descending}) and has been instrumental in several works in optimal search (e.g., \citep{singla18poi}; \citep{gamlath19matching}; \citep{beyhaghi2019pandora}).  
This property allows us to derive a relatively clean expression for the \emph{difference} between a policy's expected payoff and that of \citeauthor{weitzman1979optimal}'s index-based policy (Lemma~\ref{lm:utility_permutation}).
Computing an optimal policy boils down to finding an ordering~$\sigma$ that maximizes this difference.

Finally, we give a fairly technical reduction from the classical Partition problem: given a set~$S$ of $n$ positive integers, decide whether they can be partitioned into two subsets with equal sums.
We embed the $n$ integers in the parameters of $n$ boxes, and add two auxiliary boxes, $B_{n+1}$ and~$B_{n+2}$.
Box $B_{n+2}$ has both high index and high cost, so designed that $B_{n+2}$ is the unique box possibly selected without inspection, but is the first to be inspected if a value~$\tfrac 1 2$ is found.
This creates an exquisite balance between, on one hand, the saving in cost when a high-cost box is selected without inspection, and, on the other, the motive to inspect a high-index box early on.
The time point at which to switch to the index-based policy is affected by the position of~$B_{n+1}$.
We are able to set the parameters so that the most balanced partition of~$S$ is realized in the ordering~$\sigma$ of the optimal policy: the boxes before $B_{n+1}$ and those after form the partition.
The reduction is fairly involved technically due to the need for various approximations --- the expected payoff even for such simple instances of PNOI is still complex, and takes a fair amount of massaging to have terms bearing resemblance to the sums in the Partition problem.

\paragraph{Polynomial Time Approximation Scheme (PTAS)}


Our PTAS is built on a framework by \citet{fu2018ptas} that gives PTAS for a broad class of stochastic sequential optimization problems.
The main idea of the framework, to put it very roughly, is to start by considering systems with only an $O(1)$ number of possible states. 
For such systems, one can show that, in the decision tree of a policy, nodes may be grouped into a small number of blocks --- within each block, the system's state remains the same and the ordering of the actions matters little for the eventual objective.
One may therefore use dynamic programming to exhaustively optimize over decision trees consisting of such blocks, with a loss of only a small fraction of the payoff.
A natural way to cast PNOI in this framework is to let the state of the system be the highest value revealed so far.
Further manipulations enable us to inherit the main theorem of \citeauthor{fu2018ptas} and to obtain a PTAS for PNOI when there are only $O(1)$ possible values (Proposition~\ref{thm:const}).

To generalize from these restricted instances, it is natural to consider discretizing the values before applying the framework.  
For $\eps > 0$, standard discretization can reduce the support size of value distributions to $\poly(1 / \eps)$ and lose $O(\eps)$ fraction of the payoff \emph{if all values are within $\poly( 1 / \eps)$ factor of the optimal expected payoff}.  
Let's call a value \emph{large} if it is at least $\frac 1 \eps$ times the optimal payoff.
If one simply ignores large values, a sizable fraction of the payoff may be lost.
A major technical contribution of ours is a separate method to discretize large values.  
We observe that, once a large value is found, the expected additional payoff is upper-bounded and approximated by a well-behaved additive function.
We use this function to define $O(1 / \eps)$ discretization points to which we round up the large values.
These discretization points are potentially far apart, and we cannot round the values in the usual way, as that again may introduce too much error.
Crucially, we only round up values yielded by opened boxes, i.e., overestimate what one currently has, but do not round up values of boxes to be opened.
In other words, in the discretized problem, when a box yields a large value~$v$, our payoff is $v$ minus the current highest value (if $v$ is larger) and the search cost, but then pretends from this point on that the highest value is the discretized~$v$, i.e., the smallest discretization point larger than~$v$.
We show that, this non-standard discretization controls the error introduced in the payoff, and still supports optimization within \citeauthor{fu2018ptas}'s framework.

\subsection{Additional Related Works}
\label{sec:related}

In the context (and disguise) of channel probing in wireless systems, \citet{GMS08} gave a 0.8-approximation algorithm for PNOI, based on a structure theorem they showed for optimal policies.
Since the work predated the rediscovery of the Pandora Box problem in the computer science community, and was not presented as part of this line of work, it has remained little known to this community.
\citet{doval18} revived the problem in the economics literature, and observed the complex behaviors of optimal policies for PNOI.
\citet{beyhaghi2019pandora} reintroduced the algorithmic question to the econ-CS literature, and drew a connection to the adaptivity gap in stochastic submodular function maximization, which yields a simple $(1 - 1/e)$-approximation algorithm.
Through personal communications, we learned that \citet{BC22} independently obtained a PTAS for PNOI; they also strengthened \citeauthor{GMS08}'s structure theorem to a version equivalent to our Theorem~\ref{thm:structure}.

Algorithms for natural variants of the Pandora box problem have received much attention lately.
To name a few examples, \citet{leonardi20order} studied the optimal search problem when there are constraints on the order in which the boxes may be inspected;
\citet{chawla20correlated} studied the case when values in the boxes are correlated;
\citet{fu2018ptas} and \citet{segev21eptas} used \emph{Pandora box problem with commitment} as an application for their frameworks of designing PTAS and EPTAS, respectively, for stochastic combinatorial optimization problems.
\citet{singla18poi} generalized the optimal search problem to settings known as \emph{Price of Information}, where the set of options that may be selected is governed by combinatorial feasibility systems;
\citet{gamlath19matching} and \citet{fu21matching} studied such settings when the feasibility systems are given by matchings in graphs.

Hardness for computing online optimal policies in Bayesian selection problems is relatively sparse in the literature, but has been gaining attention recently.
\citet{agrawal20ordering} showed $\classnp$-hardness for choosing the optimal ordering in an online stopping problem.
\citet{papadimitriou21matching} showed $\classpspace$-hardness for the online stochastic bipartite matching problem.

\section{Preliminaries}
\label{sec:prelim}

In an instance of the classical Pandora box problem, we are given $n$ sealed boxes, each box~$i$ labeled with a search cost~$c_i \geq 0$ and a distribution $\disti$.  
Box~$i$ contains a value $\vali \geq 0$, initially hidden, drawn independently from~$\disti$, and one may open the box at cost~$c_i$ to reveal~$\vali$.
At any point, a policy may (adaptively) choose to open a sealed box, or to take the highest value revealed so far and quit.
Upon quitting, the payoff is the value taken minus the costs incurred along the way.
Our goal is to maximize the expected payoff, where the expectation is over the values and the possible randomness in the policy.
In a problem of \emph{Pandora box with non-obligatory inspection} (PNOI), it is allowed to take a box that has not been opened, in which case the payoff is the unseen value in that box minus the costs incurred before taking the box.


\paragraph{Weitzman's Index-based Policy.}
For box~$i$, define its \emph{index} $\gittins_i$ to be the unique solution to the equation $\Ex[\vali \sim \disti] {(\vali - \gittins_i)_+} = c_i$, where $(\vali - \gittins_i)_+$ denotes $\max \{0, \vali - \gittins_i\}$.
Let $\amorval_i$ be $\min \{\vali, \gittins_i\}$.

Weitzman's index-based policy first writes on each box its index, then at each stage, if the largest positive number written on the boxes is an index, the policy opens that box and writes the value revealed in place of its index; or else the largest written number is a value, and the policy takes the box with that value and terminates.\footnote{In the degenerate case where all indices are negative to begin with, the policy does nothing and quits, getting a payoff of $0$.}

\citet{kleinberg2016descending} gave a new proof for the optimality of the index-based policy for the classical Pandora box problem.  
The proof has proved powerful and inspired multiple algorithmic works \citep[e.g.][]{singla18poi, gamlath19matching, beyhaghi2019pandora}.
Part of its power is to isolate the so-called \emph{non-exposed} property (Definition~\ref{def:non-exposed}), which we use in our reduction in Section~\ref{sec:hardness}.
For completeness, we give the proof in Appendix~\ref{sec:prelim-app}.

\begin{restatable}{thm}{thmindex}[\citealp{weitzman1979optimal, kleinberg2016descending}]
\label{thm:index}
The index-based policy maximizes the expected payoff in the classical Pandora box problem.  
Its expected payoff is $\Ex{\max_{i \in [n]} \amorval_i}$.
\end{restatable}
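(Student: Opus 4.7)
The plan is to prove a universal upper bound $\Ex{\text{payoff}} \le \Ex{\max_{i \in [n]} \amorval_i}$ that holds for every policy, and then verify that Weitzman's index policy attains it by turning every inequality in the upper bound argument into an equality in each realization.

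For the upper bound, the starting point is the pointwise identity $\vali = \amorval_i + (\vali - \gittins_i)_+$, which follows directly from $\amorval_i = \min\{\vali, \gittins_i\}$. Fix any policy and let $O$ denote the (random) set of opened boxes and $j$ the box ultimately taken; since every selected box must be opened in the classical setting, the payoff satisfies
\[
\vali[j] - \sum_{i \in O} c_i = \amorval_j + (\vali[j] - \gittins_j)_+ - \sum_{i \in O} c_i \le \amorval_j + \sum_{i \in O}(\vali - \gittins_i)_+ - \sum_{i \in O} c_i.
\]
The crucial probabilistic observation is that the event $\{i \in O\}$ is determined by the values of boxes opened strictly before $i$ together with the policy's own randomness, and is therefore independent of $\vali$. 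Combined with the defining equation $\Ex[\vali \sim \disti]{(\vali - \gittins_i)_+} = c_i$, this \emph{non-exposed} property gives $\Ex{(\vali - \gittins_i)_+ \Indicator{i \in O}} = c_i \cdot \Prx{i \in O}$; summing over $i$ yields $\Ex{\sum_{i \in O}(\vali - \gittins_i)_+} = \Ex{\sum_{i \in O} c_i}$. The two sums cancel upon taking expectations, so $\Ex{\text{payoff}} \le \Ex{\amorval_j} \le \Ex{\max_i \amorval_i}$.

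For tightness under Weitzman's policy, I would label the opened boxes in the order Weitzman opens them as $i_1, \ldots, i_m$, so $\gittins_{i_1} \ge \cdots \ge \gittins_{i_m}$ (set $\gittins_{i_{m+1}} := 0$ if $m = n$). The rule for continuing past step $\ell < m$ forces $\max_{\ell' \le \ell}\vali[i_{\ell'}] < \gittins_{i_{\ell+1}} \le \gittins_{i_\ell}$, so $(\vali[i_\ell] - \gittins_{i_\ell})_+ = 0$ for every $\ell < m$. This makes the inequality in the upper bound an equality in every realization: the right-hand sum collapses to the single term $(\vali[i_m] - \gittins_{i_m})_+$, which in turn equals $(\vali[j] - \gittins_j)_+$ (both vanish when $j \ne i_m$). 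It remains to show $\amorval_j = \max_i \amorval_i$ pointwise; for every unopened $i$ one uses $\amorval_i \le \gittins_i \le \gittins_{i_{m+1}} \le \vali[j]$, and for other opened $i$ one splits on whether $\vali[j] \le \gittins_j$.

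The main obstacle I anticipate is this last case split, specifically when $\vali[j] > \gittins_j$: then $\amorval_j = \gittins_j$, and one must rule out any other opened box having a larger amortized value. The trick is to observe that $\vali[j] > \gittins_j = \gittins_{i_m} \ge \gittins_{i_{m+1}}$ forces Weitzman to stop immediately after opening $j$, so $j = i_m$; the continuation condition at step $m - 1$ then already gives $\max_{\ell' < m}\vali[i_{\ell'}] < \gittins_{i_m} = \gittins_j$, whence $\amorval_{i_{\ell'}} \le \vali[i_{\ell'}] < \gittins_j = \amorval_j$ for every $\ell' < m$, as required.
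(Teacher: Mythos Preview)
Your argument is correct and follows essentially the same amortization approach as the paper (and \citet{kleinberg2016descending}): bound any policy's payoff by $\Ex{\max_i \amorval_i}$ via the identity $\vali = \amorval_i + (\vali - \gittins_i)_+$ together with the independence of the opening event from~$\vali$, then verify that Weitzman's rule makes both inequalities tight pointwise. One terminological quibble: what you label the ``non-exposed'' property is simply the independence of $\{i \in O\}$ from~$\vali$, which holds for \emph{every} policy; in the paper's usage (Definition~\ref{def:non-exposed}), non-exposed refers instead to the property you use for tightness, namely that whenever $\vali > \gittins_i$ the policy is guaranteed to take box~$i$.
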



\begin{definition}
\label{def:non-exposed}
A policy is non-exposed if, when it opens a box~$i$ and finds $\vali > \gittins_i$, it is guaranteed to take box~$i$.
More formally, a policy is non-exposed if  $\Prx{(I_i - A_i) (\vali - \gittins_i)_+} = 0$ with probability~$1$, for each~$i$.
\end{definition}

\paragraph{Decision Trees.}
In various arguments, it is convenient to analyze policies as decision trees. 
A deterministic policy~$\alg$ on an instance $\Ins$ of \Pandora is fully described by a decision tree~$T$. 
At each node~$u$, $\alg$ chooses an \emph{action}~$a_u$, which may be opening a box, taking a box without opening it, or taking a maximum value seen so far.
The latter two categories of actions lead to a leaf signifying the end of the process; upon opening a box, the system transits probabilistically to another node depending on the value revealed.

We use $\P(\alg, \Ins)$ to denote the expected profit of policy $\alg$ on a PNOI instance~$\Ins$. 
When the instance is clear from the context, we omit the second argument and write $\P(\alg)$.
For a given instance of PNOI, we let $\OPT$ be the maximum expected profit achievable by any policy.

\paragraph{Strengthened Structure Theorem}

Key to \citet{GMS08}'s approximation algorithm is their structure theorem, which states the existence of an optimal policy for which there is a unique box possibly taken without inspection.
We strengthen the theorem and show the existence of a well-structured, succinctly describable optimal policy.
Our proof largely follows the arguments in~\citep{GMS08}.
The theorem immediately shows the decision version of PNOI to be in $\classnp$.
Our main technical results do not rely on this structure theorem, although the optimal policy's structure, which features an ordering of boxes, sheds some light on our reduction in Section~\ref{sec:hardness}.




\begin{restatable}{thm}{thmstructure}
\label{thm:struc}
\label{thm:structure}
For any PNOI instance, there is an optimal policy~$\alg$ described by a subset of boxes~$T^*$, an ordering~$\sigma$ on~$T^*$, and a threshold value~$V_i$ for each box $i \in T^*$ (except the last one according to $\sigma$).
If $T^* = \emptyset$, $\alg$ is the index-based policy.
Otherwise $\alg$ opens the boxes in~$T^*$ according to the ordering~$\sigma$ until either it sees a value at least~$V_i$ from a box~$i$, or only one box remains unopened in~$T^*$.
Once it sees a value $v_i \geq V_i$ from box $i \in T^*$, $\alg$ switches to running an index-based policy on the remaining boxes, taking the highest value seen so far as a free outside option;
if this does not happen till only one box in~$T^*$ remains unopened, $\alg$ takes that box without inspection.
\end{restatable}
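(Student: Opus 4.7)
The plan is to start from \citet{GMS08}'s original structure theorem, which guarantees the existence of an optimal policy $\alg$ in which at most one box is ever taken without inspection. If no box is taken uninspected, then the policy solves a classical Pandora instance, and Theorem~\ref{thm:index} lets us replace $\alg$ by Weitzman's index policy, which corresponds to $T^* = \emptyset$. So I focus on the case where there is a unique leaf of the decision tree at which some box $b^*$ is taken unopened.

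The decision tree of $\alg$ contains a unique root-to-leaf path $P$ ending at this ``take $b^*$ unopened'' leaf. Let the boxes opened along $P$, in order, be $b_1, b_2, \ldots, b_k$, and let $b_{k+1} = b^*$. I will show that, without loss of payoff, every off-path subtree can be replaced by Weitzman's index policy on the still-unopened boxes, with the current running maximum value treated as a free outside option. At the node where box $b_j$ has just been opened and yielded value $v$, the subproblem faced by $\alg$ is a classical Pandora instance on $\{1,\dots,n\}\setminus\{b_1,\dots,b_j\}$, plus an outside option equal to $\max\{v,\text{prior max}\}$, \emph{together with} the residual option of still taking $b^*$ unopened. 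The latter option, however, is dominated once one is no longer committed to the path: by Theorem~\ref{thm:index}, the optimal continuation conditional on not taking $b^*$ unopened is the index policy; thus at each such node the policy optimally chooses between (a) continuing down $P$ to eventually reach the uninspected $b^*$, and (b) switching to the index policy. Replacing the off-path subtree of $\alg$ by (b) is therefore payoff-preserving, and we may assume $\alg$ has exactly this form.

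It remains to show that the (a)-vs-(b) decision at node $j$ takes the form of a threshold $V_j$ on the newly revealed value $v$. Fix the prior running maximum $m$ and the revealed value $v$, and let $f_j(\max\{v,m\})$ denote the expected payoff of option (b) from this node forward (net of costs already sunk). By Theorem~\ref{thm:index}, $f_j(\max\{v,m\}) = \E[\max\{\max\{v,m\},\max_{i\notin\{b_1,\dots,b_j\}}\amorval_i\}] - (\text{future inspection costs absorbed into }\amorval)$, which is a continuous, piecewise-linear, non-decreasing, convex function of $\max\{v,m\}$ with slope at most $1$. The expected payoff of option (a), by contrast, depends on $v$ only through $\max\{v,m\}$ as an outside option that competes with the eventual unseen value of $b^*$ and with the continuation's branches; a standard calculation shows it is also continuous in $\max\{v,m\}$ with slope at most $1$, but strictly smaller slope than $f_j$ once $\max\{v,m\}$ exceeds the support of $b^*$'s payoff contribution. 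Hence the difference $f_j - (\text{value of (a)})$ is monotone non-decreasing in $v$, giving a well-defined threshold $V_j$ above which (b) is preferred. Setting $\sigma = (b_1,\dots,b_{k+1})$ and $T^* = \{b_1,\dots,b_{k+1}\}$ yields the claimed description.

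The main obstacle I anticipate is the monotonicity argument for the threshold: one must verify that the marginal benefit of raising the outside option is strictly larger for the index-policy continuation than for the ``commit to $b^*$ unopened'' continuation, accounting for the possibility that the ``commit'' branch may itself still inspect further boxes before reaching $b^*$. The cleanest way I see is to couple the randomness across the two branches and bound the derivative with respect to the outside option on each side, using that in branch (a), raising the outside option helps only in the event that $b^*$'s (unseen) value plus later inspection gains fail to beat it, while in branch (b) it helps in the analogous event for the index-policy's realized $\max_i \amorval_i$; because (b) has strictly more flexibility (it can adapt to revealed values), its derivative dominates, which gives the single-crossing property and hence the threshold form.
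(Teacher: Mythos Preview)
Your proposal has a genuine gap at the very first step. \citet{GMS08}'s structure theorem guarantees only that there is an optimal policy in which a unique \emph{box} $b^*$ is ever taken uninspected; it does \emph{not} guarantee a unique \emph{leaf} at which this happens. In general there may be many root-to-leaf paths ending in ``take $b^*$ unopened,'' and along these paths different boxes may be opened in different orders (e.g., after opening box~1, the policy might open box~2 if $v_1=0$ but box~3 if $v_1=0.1$, with both branches eventually reaching a backup). Your sentence ``the decision tree of $\alg$ contains a unique root-to-leaf path $P$'' is therefore unjustified, and this is precisely the heart of the theorem: showing that all such paths can be collapsed into a single ordered sequence $b_1,\dots,b_k,b^*$ is the main work.

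The paper handles this with an exchange argument (its Lemma~\ref{lm:induction}) rather than your slope-comparison idea. The key observation is: if at node~$m$ the policy opens box $B_j$, and seeing $v_{B_j}=V$ leads (somewhere downstream) to a backup, then the subtree $T_V$ reached after $v_{B_j}=V$ never uses $v_{B_j}$ at all---along its $\leq V$ path it ends in the backup, and off that path a value exceeding $V$ has been seen, dominating $v_{B_j}$. Hence $T_V$ is a feasible continuation after seeing any $h<V$, and since the optimal continuation value is monotone in the running maximum, replacing $T_h$ by $T_V$ for every $h<V$ loses nothing. This simultaneously collapses the tree below~$m$ into a $\leq V$ tree (establishing the single-path structure) and delivers the threshold~$V_i$ for free, without any derivative comparison. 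Your monotonicity/single-crossing argument, by contrast, is underspecified: both continuations (a) and (b) are nondecreasing in the outside option with slope in $[0,1]$, and ``(b) has strictly more flexibility'' is not a proof---(a) retains the option to switch to the index policy later \emph{and} the option to take $b^*$ uninspected, so it is not obvious which derivative dominates.
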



\begin{corollary}
  \label{cor:NP}
  The following decision version of PNOI is in $\classnp$: given a PNOI instance and $P > 0$, decide whether there is a policy with an expected payoff at least~$P$.
\end{corollary}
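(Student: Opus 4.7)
The plan is to use Theorem~\ref{thm:structure} to exhibit a succinct NP certificate for the decision version of PNOI, and to argue that the expected payoff of any policy of the form described in that theorem can be verified in polynomial time.

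Given an instance and target value~$P$, the certificate is the triple $(T^*, \sigma, \{V_i\})$ guaranteed by Theorem~\ref{thm:structure}. Since each threshold $V_i$ enters the policy only through the test $v_i \geq V_i$ and $v_i$ is drawn from $\supp(D_i)$, we may assume without loss of generality that $V_i \in \supp(D_i) \cup \{+\infty\}$, where $+\infty$ encodes ``never trigger.'' Because the distributions are given explicitly in the input, this triple has size polynomial in the input length. The verifier accepts iff the described policy~$\alg$ satisfies $\P(\alg) \geq P$.

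To compute $\P(\alg)$, I would decompose over a collection of mutually exclusive ``trigger events'': for each position $k \in \{1, \ldots, |T^*|-1\}$ and each atom $v \in \supp(D_{\sigma(k)})$ with $v \geq V_{\sigma(k)}$, the event that the first $k-1$ boxes in $\sigma$ each revealed a value strictly below its threshold and that box $\sigma(k)$ revealed exactly~$v$; plus one ``no-trigger'' event, in which $\alg$ takes $\sigma(|T^*|)$ unopened. Each event has probability expressible as a product of atomic probabilities from the independent $D_i$'s. Conditioned on a trigger at position $k$ with value~$v$, the expected payoff is the expected payoff of Weitzman's index policy on the remaining boxes with a free outside option $M = \max(v, \max_{j<k} v_{\sigma(j)})$, minus the inspection costs $\sum_{j \leq k} c_{\sigma(j)}$ already paid; by Theorem~\ref{thm:index}, this expected payoff equals $\Ex{\max(M, \max_{i \in \mathrm{rem}} \amorval_i)}$. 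Conditioned on no trigger, the payoff is simply $\Ex{v_{\sigma(|T^*|)}} - \sum_{j < |T^*|} c_{\sigma(j)}$.

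The main technical concern is avoiding an exponential blow-up when evaluating these conditional expectations, since the prefix involves up to $|T^*|$ boxes. The observation that resolves this is that the only feature of the non-triggering prefix relevant to the index-policy expectation is the conditional distribution of the running maximum~$M$, whose CDF is $\Pr[M \leq t] = \prod_{j<k} \Pr[v_{\sigma(j)} \leq t \mid v_{\sigma(j)} < V_{\sigma(j)}]$ and can be tabulated over the union of the atomic supports in polynomial time. Then $\Ex{\max(M, \max_{i \in \mathrm{rem}} \amorval_i)}$ is obtained from the independence formula $\Pr[\max(M, \max_i \amorval_i) \leq t] = \Pr[M \leq t] \prod_i \Pr[\amorval_i \leq t]$, summed over the polynomially many atoms~$t$. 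Since there are only polynomially many trigger events and each contributes a polynomial-time computation, the verifier runs in polynomial time, placing the decision version of PNOI in $\classnp$.
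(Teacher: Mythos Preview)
Your proposal is correct and follows the paper's intended approach: the paper states Corollary~\ref{cor:NP} as an immediate consequence of Theorem~\ref{thm:structure} without spelling out a verification procedure, and you have supplied precisely those details. The certificate $(T^*,\sigma,\{V_i\})$ is the one the paper has in mind, and your decomposition into trigger events together with the CDF-product computation of $\Ex{\max(M,\max_{i\in\mathrm{rem}}\amorval_i)}$ is a sound way to carry out the polynomial-time check that the paper leaves implicit.
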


\section{Hardness of PNOI}
\label{sec:hardness}

In this section, we show that computing optimal policies for PNOI is $\classnp$-hard.
Our reduction makes use of the following class of PNOI instances.




\begin{definition}
\label{def:LCLRS3}
  An instance of PNOI is a \emph{low-cost-low-return-support-3 (LCLRS3)} instance if the following conditions hold: 
\begin{enumerate}
  \item each value distribution~$\dist_i$ is supported on $\{0, \frac {1} {2}, 1\}$, with probability masses $p_i \coloneqq \Prx{\vali = 1} > 0$, $q_i \coloneqq \Prx{\vali = \frac 1 2}$, $r_i \coloneqq 1 - p_i - q_i = \Prx{\vali = 0}$;
  \item for each box~$i$, the cost $c_i > 0$, and the expected value $\Ex{\vali}= p_i + \frac{q_i}{2} < \frac{1}{2}$;
  \item for each box~$i$, the index $\gittins_i \geq \frac 1 2$, which implies $\gittins_i = 1 - \frac{c_i}{p_i}$.
\end{enumerate}
  \label{def:hard}
\end{definition}

In Section~\ref{sec:normal} we show that, optimal policies for LCLRS3 instances are particularly simple --- in the language of Theorem~\ref{thm:structure}, only the ordering $\sigma$ remains to be determined.
In Section~\ref{sec:reduction} we reduce the partition problem to computing optimal policies for LCLRS3 instances.


\begin{thm}
  It is $\classnp$-hard to compute optimal policies for LCLRS3 instances of PNOI.
  \label{thm:hardness}
\end{thm}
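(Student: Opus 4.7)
The plan is to reduce the \textsc{Partition} problem to the task of computing an optimal policy on LCLRS3 instances. The reduction proceeds in three stages: (i) pin down the structure of optimal policies on LCLRS3 instances so that only an ordering needs to be chosen; (ii) derive a closed-form expression for the expected payoff as a function of the ordering; and (iii) embed \textsc{Partition} into that expression via two carefully tuned auxiliary boxes.

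First I would invoke the strengthened structure theorem (Theorem~\ref{thm:structure}) and specialize it to LCLRS3 instances. Because every $\gittins_i\ge \tfrac12$ and every $\E[\vali]<\tfrac12$, no value strictly below $\tfrac12$ can ever trigger a switch to the index policy (the running ``best'' is too small to be worth committing to), so the only useful threshold is $V_i=\tfrac12$. Similarly, because each $\E[\vali]<\tfrac12$ is small, the only box that can be taken without inspection is the very last one in $\sigma$; and if we were to leave some box out of $T^*$ entirely, the argument in \citet{GMS08} (pushed one step further as in Section~\ref{sec:normal}) would contradict optimality. This lets me conclude that $T^*$ contains every box and $V_i=\tfrac12$ for all $i\in T^*$, so only the ordering $\sigma$ remains to be chosen.

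Next I would write $\P(\alg)$ for such a policy in closed form. I would argue that a slightly modified policy $\alg'$, which behaves identically but ``always takes box~$i$'' as soon as $\vali\ge \gittins_i=1-c_i/p_i$, is non-exposed in the sense of Definition~\ref{def:non-exposed}. Applying the Kleinberg--Waggoner--Weyl accounting gives $\P(\alg')=\E[\max_i\amorval_i]-\text{(loss terms)}$, where the loss terms depend only on the ordering $\sigma$ and on the probability of reaching each position in~$\sigma$ without having seen a value~$\tfrac12$ or~$1$. The key deliverable of this step is Lemma~\ref{lm:utility_permutation}, which expresses $\P(\alg)-\P(\mathrm{Weitzman})$ as a sum over positions of a clean term of the form $\bigl(\prod_{j<k}r_j\bigr)\cdot f_k$, where $f_k$ captures the marginal gain from placing box~$\sigma(k)$ at position~$k$. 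The combinatorial content of choosing $\sigma$ is then exactly the problem of maximizing this weighted sum.

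Finally I would design the reduction from \textsc{Partition}. Given positive integers $a_1,\ldots,a_n$ summing to $2S$, I construct $n+2$ LCLRS3 boxes: boxes $B_1,\ldots,B_n$ encode the $a_i$'s by setting $p_i,q_i,c_i$ so that their contribution to the telescoping expression from Lemma~\ref{lm:utility_permutation} is an affine function of $a_i$ and of how many other encoded boxes precede them. Box $B_{n+1}$ is a ``pivot'' whose role is to mark where the policy switches from exploration to commitment: its parameters are tuned so that any optimal ordering places it in a position determined by the partial sum of the $a_i$'s preceding it. Box $B_{n+2}$ is the ``non-inspected'' box, with simultaneously high index (so it is eligible to top the Weitzman ranking once a $\tfrac12$ is uncovered) and high cost (so that it is strictly preferable to take it unopened in the branch where no high value ever shows). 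With these choices the gain over Weitzman becomes, up to negligible error, a quadratic expression in the sum of the $a_i$'s appearing before $B_{n+1}$, uniquely maximized when that sum equals $S$; thus an optimal ordering yields a balanced partition, and conversely a balanced partition realizes the maximum.

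The main obstacle I anticipate is the third step: the raw payoff expression from Lemma~\ref{lm:utility_permutation} is not literally a quadratic in a partial sum, because the weights $\prod_{j<k}r_j$ couple the positions multiplicatively. To turn this into a clean \textsc{Partition} instance I will need all $r_j$ to be very close to~$1$ (so that $\prod r_j\approx 1-\sum(1-r_j)$ up to $O(\eps^2)$ error) and to choose the scaling of the integers and of the parameters of $B_{n+1},B_{n+2}$ so that the $O(\eps^2)$ slack is smaller than the granularity between a balanced and an unbalanced partition. Getting these approximations tight while keeping all numbers polynomially bounded in the input size, and verifying that the structural reductions of step~(i) really force $T^*$ to be everything and $V_i=\tfrac12$ for the constructed instance, is where the bulk of the technical work lies.
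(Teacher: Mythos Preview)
Your proposal is correct and follows essentially the same approach as the paper: the normality reduction (so only~$\sigma$ matters), the payoff-difference formula via a non-exposed auxiliary policy (Lemma~\ref{lm:utility_permutation}), and the reduction from \textsc{Partition} with two auxiliary boxes playing exactly the roles you assign them. Two small refinements you will encounter when filling in details: the non-exposed policy~$\alg'$ is obtained by having it \emph{open} the last box $\sigma(n)$ and pay its cost (rather than take it blind), and the loss as a function of the partial sum is an exponential expression $h(x)=e^{-2x}(1-\tfrac{k_2}{k_1}e^{-y+x})$ rather than a literal quadratic, with the ``balanced partition is optimal'' conclusion coming from strong convexity of~$h$ near its minimizer at $x^*\approx y/2$.
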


Before proving the theorem, we quickly remark that the value $\frac 1 2$ in the support is necessary for a hardness result.
This was already observed by \citet{GMS08}.  We provide a proof for completeness in Appendix~\ref{sec:hardness-app}.

\begin{restatable}{proposition}
  {supporttwo}\citep{GMS08}
    \label{prop:support2}
There is a polynomial-time computable optimal policy for PNOI instances where all value distributions are supported on $\{0, 1\}$.
\end{restatable}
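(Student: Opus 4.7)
The plan is to show that when all value distributions are supported on $\{0,1\}$, an optimal PNOI policy has a very restricted form, and the problem reduces to $n+1$ applications of Weitzman's index-based policy (Theorem~\ref{thm:index}). Write $p_i \coloneqq \Prx{\vali = 1}$. First I would put policies into a normal form: since no value exceeds~$1$, observing $\vali = 1$ is always a stopping condition (continuing can only accrue cost without gaining value); and since observing $\vali = 0$ carries no residual information about other boxes, the continuation is independent of history. Consequently every policy can be transformed, without loss of expected payoff, into one that opens a fixed sequence $i_1, \ldots, i_k$ of boxes, takes the first revealed~$1$, and if all reveal~$0$ either takes a predesignated unopened box $j \notin \{i_1, \ldots, i_k\}$ or quits with payoff~$0$. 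This normal form can be argued directly or by specializing Theorem~\ref{thm:structure}: for $\{0,1\}$ values the only useful threshold is $V_i = 1$, and once triggered the subsequent index policy trivially grabs the revealed~$1$.

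Next I would enumerate the $n+1$ choices for the ``final unopened box'' $j$ (including $j = \bot$, meaning no unopened box is taken). For each candidate~$j$, augment the instance with a phantom box having deterministic value $p_j$ (with $p_\bot \coloneqq 0$) and cost~$0$, and run Weitzman's index policy on this augmented classical Pandora instance. The output of the overall algorithm is the best policy across the $n+1$ runs. Since each Weitzman run is polynomial and there are only $n+1$ runs, the overall running time is polynomial in~$n$.

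Correctness rests on a lossless bijection between PNOI policies in normal form with final choice~$j$ and Pandora policies on the augmented instance: ``take~$j$ unopened'' corresponds exactly to ``open the phantom and take its revealed value $p_j$''. Because the phantom has cost~$0$, expected payoffs match term by term, and any Pandora policy that opens the phantom in the middle can be reordered to open it last without changing its expected payoff. Weitzman's optimality (Theorem~\ref{thm:index}) then implies that, for each fixed~$j$, the constructed policy is optimal among all PNOI policies in normal form with final choice~$j$; taking the maximum over the $n+1$ values of~$j$ yields a globally optimal PNOI policy.

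The main technical obstacle I anticipate is verifying the normal-form reduction carefully --- in particular, ruling out that randomization or nontrivial history-dependence (beyond the ``stop at~$1$'' rule) can help. The cleanest route is to invoke Theorem~\ref{thm:structure}: specializing its conclusion to $\{0,1\}$ distributions forces $V_i = 1$ for every $i \in T^*$ (except the last), and collapses the post-switch index-based continuation into ``take the revealed~$1$'' since no index can exceed~$1$. The policy therefore reduces to a fixed opening sequence followed either by a final unopened box or by quitting --- exactly the normal form that the phantom-box enumeration exploits.
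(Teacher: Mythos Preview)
Your approach is essentially the paper's: reduce to a normal form (open a fixed sequence, stop at the first~$1$, otherwise take a designated backup box or quit), enumerate the $n{+}1$ choices of backup, and for each choice invoke Weitzman's optimality to pin down the rest. The paper argues directly that, once backup~$j$ is fixed, one opens the \emph{other} boxes with index at least $p_j$ in decreasing index order; your phantom-box device reaches the same conclusion via Theorem~\ref{thm:index}.

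There is, however, a real gap in your phantom-box reduction as stated. You must \emph{remove} box~$j$ before adding the phantom; merely ``augmenting'' the instance breaks the claimed bijection, because Weitzman on the augmented instance may open box~$j$ itself (whenever $\gittins_j > p_j$) in addition to the phantom, and that run corresponds to no valid PNOI policy with backup~$j$. Concretely, take two boxes with $p_1=\tfrac12,\ c_1=0.1$ (so $\gittins_1=0.8$) and $p_2=0.4,\ c_2=0.01$ (so $\gittins_2=0.975$). For $j=1$ your augmented instance contains box~1, box~2, and a phantom of value~$\tfrac12$; Weitzman opens box~2, then box~1, then the phantom, for expected payoff $0.78$. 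But the best PNOI policy with backup~1 (open box~2; if $0$, take box~1 unopened) earns $0.69$, which is in fact the true PNOI optimum here. Your algorithm would thus report an unachievable value and output an invalid policy. The fix is immediate: for each $j\neq\bot$, run Weitzman on $([n]\setminus\{j\})$ together with the phantom --- which is exactly the paper's ``outside option of value $p_j$'' formulation.
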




\subsection{Normal Policies and Their Payoffs}
\label{sec:normal}

In this section, we show that optimal policies for LCLRS3 instances are of a format that we call \emph{normal} (Definition~\ref{def:normal}).
We then make use of ideas from \citet{kleinberg2016descending}'s proof for the index-based policy, and derive an expression for normal policies' payoff (Lemma~\ref{lm:utility_permutation}), which plays a crucial role in the reduction we present in Section~\ref{sec:reduction}.

\begin{definition}
\label{def:normal}
A policy $\alg$ for a LCLRS3 instance is said to be \emph{normal} if
  \begin{itemize}
    \item If $\alg$ sees 0 in the first $n - 1$ boxes it opens, then $\alg$ takes the last box without inspection; this is the only situation in which $\alg$ exercises the option to bypass inspection.
    \item Whenever $\alg$ opens a box and sees value~$1$ in it, it immediately takes the box and stops.
    \item Whenever $\alg$ opens a box and sees value~$\frac{1}{2}$ in it, it forsakes the option to take a box without inspection; on the remaining boxes, $\alg$ runs the index-based policy, with an outside option of value $\frac 1 2$.  
  \end{itemize}
\end{definition}

In the language of Theorem~\ref{thm:structure}, a policy is normal if $T^*$ is the set of all boxes, and $V_i = \tfrac 1 2$ for each box~$i$.
The following lemma is straightforward; we relegate its proof to Appendix~\ref{sec:hardness-app}.
\begin{restatable}{lemma}{normal}
\label{lem:normal}
    For any LCLRS3 instance, an optimal policy~$\alg$ is normal.
\end{restatable}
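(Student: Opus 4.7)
The plan is to show that any optimal policy $\alg^*$ already acts according to Definition~\ref{def:normal} at every state reached with positive probability, by a case analysis on the information revealed so far. Since each $\disti$ is supported on $\{0,\tfrac12,1\}$, the substantive states fall into four classes: (i)~a value~$1$ has been revealed; (ii)~a value~$\tfrac12$ but no~$1$ has been revealed; (iii)~all revealed values are~$0$ and exactly one unopened box remains; (iv)~all revealed values are~$0$ and at least two unopened boxes remain. In each class I will argue that the normal action is (strictly) optimal, using one or more of the LCLRS3 conditions.

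Cases~(i)--(iii) are direct. For~(i): every index satisfies $\gittins_i = 1 - c_i/p_i < 1$, and any unopened box yields expected payoff $<\tfrac12<1$, so Weitzman's optimal continuation with outside option~$1$ is simply to take the value-$1$ box. For~(ii): after seeing~$\tfrac12$, any ``take without inspection'' gives expected $<\tfrac12$, strictly worse than holding the~$\tfrac12$; thus the option to bypass inspection is permanently dominated, and the continuation reduces to a classical Pandora problem with outside option~$\tfrac12$, optimally solved by the index policy. For~(iii): if $j$ is the unique unopened box, taking~$j$ without inspection yields $\Ex{\vali[j]}$ while opening it yields at most $\Ex{\vali[j]}-c_j$, so taking is strictly better by $c_j>0$.

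The heart of the proof is~(iv). Suppose for contradiction that $\alg^*$ takes some unopened box~$j$ without inspection, while another unopened box~$l$ is also available. Consider the alternative $\alg'$ that, from this state, opens~$l$ first and then: takes~$l$ if $\vali[l]=1$; switches to the index policy with outside option~$\tfrac12$ on the still-unopened boxes if $\vali[l]=\tfrac12$; and takes~$j$ without inspection if $\vali[l]=0$. Writing $V^{1/2}_{\mathrm{rem}}$ for the index-policy payoff (outside option~$\tfrac12$) on boxes unopened after~$l$, the continuation-payoff gain of $\alg'$ over $\alg^*$ from this state is
\[
  \Delta \;=\; -c_l + p_l\cdot 1 + q_l\, V^{1/2}_{\mathrm{rem}} + r_l\, \Ex{\vali[j]} - \Ex{\vali[j]}
  \;=\; p_l\bigl(1-\Ex{\vali[j]}\bigr) + q_l\bigl(V^{1/2}_{\mathrm{rem}}-\Ex{\vali[j]}\bigr) - c_l.
\]
All three LCLRS3 conditions now combine: $\Ex{\vali[j]}<\tfrac12$ gives $1-\Ex{\vali[j]}>\tfrac12$; $\gittins_l\geq\tfrac12$ gives $c_l=p_l(1-\gittins_l)\leq p_l/2$; and $V^{1/2}_{\mathrm{rem}}\geq\tfrac12>\Ex{\vali[j]}$ since the outside option $\tfrac12$ is always available. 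The first two conditions yield $p_l(1-\Ex{\vali[j]})-c_l > p_l/2 - p_l/2 = 0$, the third makes the $q_l$-bracket nonnegative, and $p_l>0$ gives strict positivity. Hence $\Delta>0$, contradicting optimality.

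The main obstacle is precisely case~(iv). The naive remedy of appending~$l$ to the tail of $\alg^*$'s ordering (so that it is opened only if $\vali[j]$ turns out to be~$0$ after \emph{taking}~$j$) only harvests a gain of $r_l\Ex{\vali[j]}$, which can be dwarfed by the extra cost~$c_l$. Opening~$l$ \emph{before} the would-be uninspected selection of~$j$ is what unlocks the dominant term $p_l(1-\Ex{\vali[j]})$ and lets all three defining inequalities of LCLRS3 be applied in concert.
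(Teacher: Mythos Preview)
Your proof is correct and follows essentially the same approach as the paper. The paper handles cases~(i)--(iii) in the same way, and for the key case~(iv) simply invokes observation~(1) from the proof of Proposition~\ref{prop:support2} (an optimal policy never takes a box without inspection while another unopened box has a strictly higher index), combined with $\Ex{\vali[j]}<\tfrac12\leq\gittins_l$; your explicit construction of~$\alg'$ and computation of~$\Delta$ is a self-contained derivation of exactly that observation specialized to LCLRS3.
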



Since both the set $T^*$ and the thresholds $V_i$'s for a normal policy are fixed, finding an optimal policy for an LCLRS3 instance amounts to finding an optimal permutation~$\sigma$.
For a given LCLRS3 instance and a policy~$\alg$ on it, recall that we denote the expected payoff of $\alg$ as~$\P(\alg)$.
The next observation is that, for a normal policy~$\alg$, the difference between $\P(\alg)$ and the payoff of the classical index-based policy admits a relatively clean expression in terms ofthe ordering~$\sigma$ of~$\alg$.
This is by considering an intermediate policy $\alg'$, whose payoff admits simplifications using ideas from \citet{kleinberg2016descending}'s proof (Theorem~\ref{thm:index}).


As in the proof of Theorem~\ref{thm:index}, define $\amorval_i \coloneqq \min \{\vali, \gittins_i\}$.

\begin{restatable}{lemma}{utilitypermutation}
\label{lm:utility_permutation}
    Given an LCLRS3 instance and a normal policy~$\alg$ for it, let $\sigma$ be the corresponding ordering.  
For box~$i$, let $T_{\sigma}(i)$ be the set of boxes ordered after box~$i$ by~$\sigma$, with Gittins indices strictly larger than $\gittins_i$; that is, $T_\sigma(i) \coloneqq \{j \in [n]: \sigma^{-1}(j) > \sigma^{-1}(i), \gittins_{j} > \gittins_i\}$.  
For $i \in [n]$ and $T \subseteq [n]$, define $g(i, T) \coloneqq \Ex{(\max_{j \in T} \amorval_j - \gittins_i)_+}$.\footnote{$g(i, \emptyset) \coloneqq 0$.}
  Let $\alg_P$ be the index-based policy on the instance.   Then 
    \begin{align}
      \P(\alg_P) = \P(\alg) + \sum_i p_{i}g(i, T_\sigma(i)) \prod_{j = 1}^{\sigma^{-1}(i)-1} r_{\sigma(j)} 
      - c_{\sigma(n)}\prod_{i = 1}^{n-1} r_{\sigma(i)}.
      \label{eq:diff-pandora}
    \end{align}
\end{restatable}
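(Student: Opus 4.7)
The plan is to introduce an intermediate policy $\alg'$ that differs from $\alg$ only in the ``take-without-inspection'' step, and then invoke the non-exposed payoff decomposition from the proof of Theorem~\ref{thm:index}.

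First I would define $\alg'$ to agree with $\alg$ everywhere except at the final step: when $\alg'$ has seen value $0$ in the first $n-1$ boxes of $\sigma$, it pays $c_{\sigma(n)}$ to open the last box and then takes $\max(0, v_{\sigma(n)})$. Conditioned on this $(n{-}1)$-fold zero event (probability $\prod_{i<n} r_{\sigma(i)}$), $\alg$ collects expected value $\Ex{v_{\sigma(n)}}$ while $\alg'$ collects $\Ex{v_{\sigma(n)}} - c_{\sigma(n)}$; on the complementary event the two policies behave identically. Hence $\P(\alg) = \P(\alg') + c_{\sigma(n)}\prod_{i<n} r_{\sigma(i)}$.

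Second, I would verify that $\alg'$ is non-exposed. In an LCLRS3 instance $c_i > 0$ while $v_i \le 1$ almost surely and $p_i > 0$, which forces $\gittins_i < 1$. So the only event on which $\alg'$ sees $v_i > \gittins_i$ is $v_i = 1$, and the structure of a normal policy guarantees $\alg'$ takes such a box immediately (and $\alg'$'s extra ``final open'' step also takes $\sigma(n)$ whenever $v_{\sigma(n)}=1$). Hence the non-exposed condition $A'_i = 1$ holds on $\{v_i > \gittins_i\}$. Invoking the identity used to prove Theorem~\ref{thm:index} then gives $\P(\alg') = \Ex{\sum_i A'_i \amorval_i}$ and $\P(\alg_P) = \Ex{\max_i \amorval_i}$.

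Third, I would compute $\P(\alg_P) - \P(\alg')$ by conditioning on the position $k$ at which $\alg'$ first sees a non-zero value, and on whether that value is $1/2$ or $1$. The all-zeros event contributes $0$. On the event $E_k^{1/2}$ that the first non-zero entry is at position $k$ and equals $1/2$, $\alg'$ runs Weitzman's index policy on the remaining unopened boxes with outside option~$1/2$; by Theorem~\ref{thm:index} it captures amorval $\max(1/2,\max_{j>k}\amorval_{\sigma(j)})$, which equals $\max_i\amorval_i$ pointwise since previously-opened boxes have $\amorval=0$ and $\amorval_{\sigma(k)}=1/2$ (using $\gittins_{\sigma(k)}\ge 1/2$). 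This event contributes nothing to the difference. On the event $E_k^1$, $\alg'$ captures $\amorval_{\sigma(k)}=\gittins_{\sigma(k)}$ while $\max_i\amorval_i = \max(\gittins_{\sigma(k)},\max_{j>k}\amorval_{\sigma(j)})$; the per-realization excess $(\max_{j>k}\amorval_{\sigma(j)}-\gittins_{\sigma(k)})_+$ reduces, via $\amorval_j\le\gittins_j$, to $(\max_{j\in T_\sigma(\sigma(k))}\amorval_j-\gittins_{\sigma(k)})_+$. By independence of future values from $E_k^1$, the contribution equals $p_{\sigma(k)}\prod_{j<k} r_{\sigma(j)}\cdot g(\sigma(k),T_\sigma(\sigma(k)))$. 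Summing over $k$ and reindexing $i=\sigma(k)$ combines with the first step to yield \eqref{eq:diff-pandora}.

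The main obstacle will be the pointwise identification on $E_k^{1/2}$: the fact that the index policy's captured amorval under outside option~$1/2$ exactly equals $\max_i \amorval_i$ is what collapses all ``non-switching'' contributions and hinges on the careful alignment between the support value $1/2$ and the lower bound $\gittins_i \ge 1/2$ built into the LCLRS3 definition.
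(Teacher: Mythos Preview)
Your proposal is correct and follows essentially the same approach as the paper: introduce the intermediate policy~$\alg'$ that opens (rather than blindly takes) the last box, verify $\alg'$ is non-exposed so that $\P(\alg') = \Ex{\sum_i A'_i \amorval_i}$, and then compare with $\P(\alg_P)=\Ex{\max_i \amorval_i}$ by identifying the unique event on which the captured $\amorval$ falls short of the maximum. Your case analysis by first non-zero position is a slightly more explicit unpacking of the paper's one-line observation that strict inequality occurs only when all predecessors yield~$0$ and box~$i$ yields~$1$; the reduction from ``all later boxes'' to $T_\sigma(i)$ via $\amorval_j\le\gittins_j$ is the same in both.
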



The proof can be found in Appendix~\ref{sec:hardness-app}.
Here we briefly explain the terms in~\eqref{eq:diff-pandora}.
Recall that $\P(\alg_P)=\E[\max_{j\in[n]}\amorval_j]$.
For each box~$i$, with probability $p_{i} \prod_{j = 1}^{\sigma^{-1}(i)-1} r_{\sigma(j)} $, the boxes before~$i$ all take value 0, and box $i$ takes value 1, with $\amorval_i=\gittins_i$.
Conditioning on this event, the expected gap between $\max_{j \in [n]} \amorval_j$ and $\amorval_i$
is $g(i, T_\sigma(i))$.
The advantage of the normal policy $\alg$ is that it can save the cost $c_{\sigma(n)}$ to open the last box with probability $\prod_{i = 1}^{n-1} r_{\sigma(i)}$, and this leads to the last term $- c_{\sigma(n)}\prod_{i = 1}^{n-1} r_{\sigma(i)}$.

\subsection{Reduction}
\label{sec:reduction}
In this section we give a polynomial-time reduction from the classical partition problem to PNOI.

\begin{definition}[Partition Problem]
Given a multiset $S$ of positive integers $s_1,\cdots,s_n$, 
decide whether $S$ can be partitioned into two subsets $S_1$ and $S_2$ such that the sum of the numbers in $S_1$ equals the sum of the numbers in $S_2$.
\end{definition}

It is well-known that Partition problem is $\classnp$-complete \citep[see, e.g.][]{garey1979computers}.
It is also not difficult to show that the problem is still $\classnp$-hard when $1\leq s_1\leq \cdots \leq s_n\leq 2^n$.
We assume so in the following reduction.
We first formally give the reduction, and explain the intuition below.

\paragraph{Reduction from LCLRS3 to Partition.}
Given the multiset $S = \{s_1, \ldots, s_n \}$ of integers between $1$ and~$2^n$, fix two constants $\Gamma=2^{8n}$ and $\Delta = 2^{-7n}$. 
We construct an LCLRS3 instance with $n+2$ boxes, denoted as $B_1,\cdots,B_{n+1},B_{n+2}$.

For box $B_{n+1}$, set $p_{n+1}=1/\Gamma,q_{n+1}=1 - 41/\Gamma$ and $c_{n+1}=p_{n+1}/2$. 
This makes $\tau_L \coloneqq \gittins_{n+1}=\frac{1}{2}$ and $r_{n+1}=40/\Gamma$.

For box $B_{n+2}$, set $p_{n+2}=q_{n+2}=1/8$, $c_{n+2}=1/32$ and thus $\gittins_{n+2}=3/4$. 

For each $i\in [n]$, set $p_i=q_i=s_i/\Gamma$. 
Set a constant $\gittins_H = 3/4 - O(\Delta)$, whose precise value is to be determined later (see Claim~\ref{clm:setting_t_tau_H}). 
Set $c_i$ to make
$$\tau_i = \tau_H + \frac{p_i p_{n+1} (1-p_{n+2})(\tau_H - \tau_L)}{2p_{n+2}} = \tau_H + O(\Delta^2).$$

Note that $p_i \leq \Delta$ for any $i \in [n+1]$, since we assumed $s_i \leq 2^n$.
The construction ensures $\tau_{n+2}>\tau_i > \gittins_H >\tau_L = 1/2$ for any $i \in [n]$.

\vskip 25pt

\begin{figure}[ht!]
		\centering
		\includegraphics[scale=0.65]{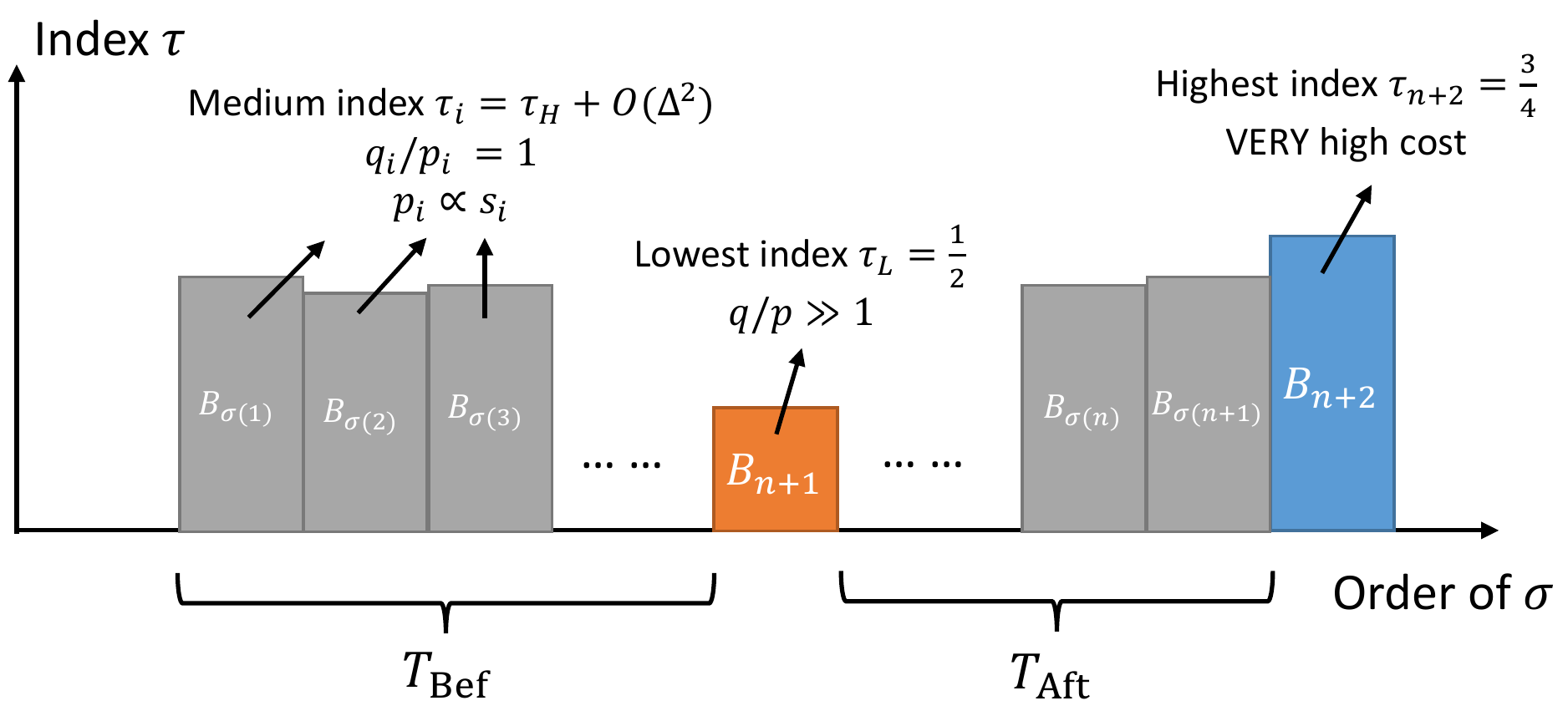}
		\caption{An overview of our reduction. Rectangles represent boxes, shown from left to right in the order of $\sigma$. The height of each rectangle represents the box's index.}
		\label{fig:partition}
	\end{figure}
Recall that the optimal solution to any LCLRS3 instance can be represented by a permutation $\sigma$.
Given the permutation $\sigma$, the position of $B_{n+1}$ in the permutation plays a crucial role in the following analysis.
Let the position of $B_{n+1}$ in~$\sigma$ be $\xi$, i.e. $\sigma(\xi)=n+1$.
Thus $B_1, \cdots, B_n$
are partitioned into two sets in~$\sigma$: those before~$B_{n+1}$ and those after, which we denote by $\Tbf$ and $\Tat$, respectively.
Formally, $\Tbf \coloneqq \{i: 1\leq i\leq n, \sigma^{-1}(i)< \xi\}$ and $\Tat \coloneqq \{i: 1\leq i\leq n, \sigma^{-1}(i)>\xi\}$.
The next key lemma builds the bridge between the partition problem and LCLRS3 instances:
\begin{lemma}
\label{lm:reduction}
The answer to the Partition problem with input~$S$ is \textsc{Yes} if and only if $\sum_{i\in\Tat^*}p_i=\sum_{i\in\Tbf^*}p_i$ in the permutation $\sigma^*$ that corresponds to an optimal policy for the LCLRS3 instance.
\end{lemma}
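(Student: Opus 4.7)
The plan is to use Lemma~\ref{lm:utility_permutation} to express $\P(\alg)$ as a function of the permutation $\sigma$, and then show that, after discarding lower-order terms, this function depends on the partition $(\Tbf, \Tat)$ only through a strictly concave function of $x \coloneqq \sum_{i \in \Tbf} p_i$ with a unique maximum at $x = \tfrac{1}{2}\sum_{i \in [n]} p_i$. Since $p_i = s_i/\Gamma$ with $s_i$ positive integers, this maximum is attained by some achievable subset iff the Partition instance is a YES-instance.

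First, I would pin down the coarse shape of $\sigma^*$. The cost $c_{n+2} = 1/32$ is of constant order while all other $c_i$ are $O(\Delta)$, so the cost-savings term $c_{\sigma(n+2)} \prod_{j \leq n+1} r_{\sigma(j)}$ in Lemma~\ref{lm:utility_permutation} strictly prefers $B_{n+2}$ to be the last box; a short swap argument rules out placing $B_{n+1}$ or any $B_i$ ($i\in[n]$) last. The only remaining freedom in $\sigma^*$ is then the relative order among $B_1, \ldots, B_{n+1}$, parameterized by the position $\xi \coloneqq (\sigma^*)^{-1}(n+1)$ of the divider $B_{n+1}$ and by the assignment of $B_1, \ldots, B_n$ to the two sides of it.

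Next, I would expand each product $\prod_j r_{\sigma(j)} = \prod_j (1 - p_{\sigma(j)} - q_{\sigma(j)})$ to low order in the small quantities $p_j + q_j = O(\Delta)$, and similarly expand each $g(i, T_\sigma(i))$. Because $\tau_i - \tau_H = O(\Delta^2)$ for every $i \in [n]$, while $\tau_{n+2} = 3/4 > \tau_H > \tau_L = 1/2$, the set $T_\sigma(i)$ decomposes into $B_{n+2}$ (always placed after $i$ by the first step) plus the higher-order contributions of the $B_j$ ($j \in [n]$) with $s_j > s_i$ lying after $i$; moreover $T_\sigma(n+2) = \emptyset$ drops out entirely. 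The main technical step is then to show that, once expanded, the permutation-dependent part of $\P(\alg)$ separates into a piece depending only on which boxes fall in $\Tbf$ versus $\Tat$ --- not on the internal ordering within each side --- of the form $c_0 + c_1 \bigl(\sum_{i\in\Tbf} p_i\bigr)\bigl(\sum_{i\in\Tat} p_i\bigr)$ for a positive computable constant $c_1$, plus error terms of strictly smaller magnitude. The precise calibration $\tau_i = \tau_H + \frac{p_i p_{n+1}(1-p_{n+2})(\tau_H-\tau_L)}{2 p_{n+2}}$ is what enables this separation, causing the within-a-side ordering contributions from the $g$'s to cancel against the cross-terms generated by expanding the $r$-products.

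Finally, I would verify that the smallest non-zero gap in achievable values of $x(P - x)$, where $P = \sum_{i\in[n]} p_i$, is $\Omega(1/\Gamma^2)$ by integrality of the $s_i$; the choice $\Gamma = 2^{8n}$, $\Delta = 2^{-7n}$ is made precisely so this gap dominates the absorbed error terms. Hence $\sigma^*$ attains the exact maximum $P^2/4$ of $x(P-x)$ iff some subset of the $p_i$'s sums to $P/2$ iff the Partition instance is a YES-instance. The main obstacle will be the third step --- tracking the expansion of $g(i, T_\sigma(i))$ for each $i$ and using the calibration of $\tau_i$ to verify the cancellation of within-side ordering terms so that only the clean $x(P-x)$ dependence survives at leading order.
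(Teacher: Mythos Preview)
Your overall plan mirrors the paper's, but two of the attributed mechanisms are wrong in ways that would block the execution.

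The independence from within-side ordering does \emph{not} come from the $\tau_i$ calibration. It comes from the choice $p_i=q_i$ for all $i\in[n]$, which forces $p_i/(1-r_i)=\tfrac12$ to be constant across $i$ and makes the partial sums $\sum_{i} p_{\sigma(i)}\prod_{j<i} r_{\sigma(j)}$ telescope to $\tfrac12\bigl(1-\prod_i r_i\bigr)$ regardless of the ordering (the paper isolates this as Lemma~\ref{lm:from_scheduling}). The $\tau_i$ calibration serves a different and much smaller purpose: it pins the correction $g_i - p_{n+2}(\tau_{n+2}-\tau_H)$ to a specific $p_i$-proportional form so that a resulting $\sum p_i^2$ term can be matched against a second-order residue later.

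The more serious gap is the claim that the leading $\sigma$-dependent term is automatically the symmetric product $c_1\,x(P-x)$. After the telescoping above, the paper arrives at $\Ls(\sigma)\approx C+k_1 f(\Tbf)-k_2 f(\Tbf)g(\Tat)$ with $f(T)=\prod_{i\in T}(1-2p_i)$ and $g(T)=\prod_{i\in T}(1-p_i)$; expanding to second order gives a quadratic in $x=\sum_{i\in\Tbf}p_i$ whose linear coefficient is proportional to $k_2-2k_1$. This linear term vanishes \emph{only} because the construction tunes $\tau_H$ so that $k_2/k_1=2e^{y/2}+O(\Delta^2)$ (Claim~\ref{clm:setting_t_tau_H}). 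You never mention $\tau_H$, and without that tuning the parabola is not centered at $P/2$, so there is no reason the optimal $\sigma^*$ should realize a balanced partition. A related precision issue: approximating $\prod(1-2p_i)\approx e^{-2\sum p_i}$ to first order leaves $O(n\Delta^2)$ error, which swamps the $\Omega(1/\Gamma^2)=2^{-16n}$ gap you need; the paper must go to second order, using $e^{-2\sum(p_i+p_i^2)}$ and working with $x=\sum_{i\in\Tbf}(p_i+p_i^2)$ rather than $\sum_{i\in\Tbf} p_i$, to push the error down to $O(n\Delta^3)$.
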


Theorem~\ref{thm:hardness} follows immediately from Lemma~\ref{lm:reduction} and the fact that Partition problem is NP-hard.
It remains to prove Lemma~\ref{lm:reduction}.

\paragraph{Intuition of the Reduction and Proof Overview.}
By Lemma~\ref{lem:normal} and Lemma~\ref{lm:utility_permutation}, giving an optimal policy for the LCLRS3 instance boils down to finding a permutation $\sigma$ that maximizes the objective value
\begin{align}
\label{eq:object}
    \Ut(\sigma):=
    c_{\sigma(n+2)}\prod_{i = 1}^{n+1} r_{\sigma(i)}-\sum_i p_{i} g(i, T_\sigma(i)) \prod_{j = 1}^{\sigma^{-1}(i)-1} r_{\sigma(j)}.
\end{align}
 
We would like to focus on the more complex second term (which we call the loss term \eqref{eq:loss_sigma_original}).
The role of box~$n+2$ is to fix the first term: $c_{n+2}$ is a constant whereas $c_1, \ldots, c_{n+1}$ are exponentially small.
$c_{n+2}$ is so large compared with all other terms in~\eqref{eq:object} that
any reasonable policy must leave box~$n+2$ till the end:

\begin{restatable}{claim}{claimFixFinalBox}
\label{clm:fix_final_box}

Let $\sigma^*$ be a permutation which maximizes~\eqref{eq:object}.  Then
    $\sigma^*(n+2)={n+2}$.
\end{restatable}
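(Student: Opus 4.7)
The plan is to lower-bound $\Ut(\sigma)$ when $\sigma(n+2)=n+2$ and upper-bound it when $\sigma(n+2)\neq n+2$, and then show the former strictly exceeds the latter. The intuition is that $c_{n+2}=1/32$ dwarfs every other cost: $c_{n+1}=1/(2\Gamma)$, and $c_k = p_k(1-\tau_k)\leq 2^{n-1}/\Gamma$ for $k\in[n]$. Thus placing $B_{n+2}$ last (which is what $c_{\sigma(n+2)}$ picks out in the first term of~\eqref{eq:object}) produces a $\Theta(1/\Gamma)$ gain; the delicate part is to confirm that the accompanying change in the loss term cannot cancel this gain.

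First, I plug in $r_{n+1}=40/\Gamma$, $r_{n+2}=3/4$, and $\prod_{j\in[n]} r_j = 1 - O(n/2^{7n})$ to evaluate the first term $c_{\sigma(n+2)}\prod_{i=1}^{n+1} r_{\sigma(i)}$ case by case: it equals $(5/(4\Gamma))(1-o(1))$ when $\sigma(n+2)=n+2$, is at most $3/(8\Gamma)$ when $\sigma(n+2)=n+1$, and is only $O(2^n/\Gamma^2) = o(1/\Gamma)$ when $\sigma(n+2)\in[n]$ (because the dominant factor $c_k \le 2^{n-1}/\Gamma$ is already polynomially smaller than $1$). Hence the first term alone favors $\sigma(n+2)=n+2$ over the best alternative by roughly $5/(4\Gamma) - 3/(8\Gamma) = 7/(8\Gamma)$.

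Next, I uniformly upper-bound the loss term $\sum_i p_i g(i,T_\sigma(i)) \prod_{j<\sigma^{-1}(i)} r_{\sigma(j)}$ across all $\sigma$. Note first that $B_{n+2}$ itself contributes zero, since $\tau_{n+2}=3/4$ strictly exceeds every other $\tau_j$, forcing $T_\sigma(n+2) = \emptyset$. For the remaining boxes I apply the subadditivity inequality $g(i,T) \leq \sum_{j\in T} \E[(\kappa_j - \tau_i)_+]$. The only non-negligible term in this sum is $j=n+2$, since $p_{n+2}=1/8$, whereas every other $p_j \leq \Delta$ is exponentially small. This yields $g(n+1, T) \leq 1/32 + o(1)$ (from $\tau_{n+1}=1/2$ and $\E[(\kappa_{n+2}-1/2)_+] = (1/8)(1/4) = 1/32$) and $g(i,T) = O(\Delta)$ for $i\in[n]$ (from $\tau_i \approx 3/4$ and $\E[(\kappa_{n+2}-\tau_i)_+] \leq (1/8)(3/4 - \tau_i) = O(\Delta)$). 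Multiplying by $p_i$ and summing gives a total loss of at most $1/(32\Gamma) + o(1/\Gamma)$, uniformly in $\sigma$.

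Combining: when $\sigma(n+2)=n+2$, $\Ut(\sigma) \geq 5/(4\Gamma) - 1/(32\Gamma) - o(1/\Gamma) = 39/(32\Gamma) - o(1/\Gamma)$; when $\sigma(n+2) \neq n+2$, since the loss term is non-negative, $\Ut(\sigma)$ is bounded above by the first term, hence by $12/(32\Gamma) + o(1/\Gamma)$. Since $39/32 > 12/32$, the claim follows for large enough $n$. The main obstacle is the tightness of the loss bound: the naive estimate $g(i,T) \leq 1/4$ would produce total loss $O(n\cdot 2^n/\Gamma)$, which would overwhelm the first-term gap; the refinement via subadditivity, leveraging the exponential smallness of $p_i$ for $i \neq n+2$, is what makes the comparison go through.
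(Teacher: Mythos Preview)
Your proposal is correct and follows essentially the same approach as the paper's proof: both compare the three cases for $\sigma(n+2)$, lower-bounding $\Ut$ when $\sigma(n+2)=n+2$ (the paper obtains $\geq 38/(32\Gamma)$, you obtain $\geq 39/(32\Gamma)-o(1/\Gamma)$) and upper-bounding it otherwise by the first term alone (at most $3/(8\Gamma)$ for $\sigma(n+2)=n+1$, and $O(\Delta^2)$ for $\sigma(n+2)\in[n]$). Your use of the subadditivity bound $g(i,T)\le \sum_{j\in T}\Ex{(\amorval_j-\gittins_i)_+}$ to control the loss term is exactly what the paper does implicitly, and the numerical conclusions match.
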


From the discussion above, an optimal policy must leave box~$n+2$ to the last in its permutation~$\sigma$, and the ordering of the other boxes must minimize the loss term:
\begin{align}
\label{eq:loss_sigma_original}
    \Ls(\sigma) \coloneqq \sum_i p_{i} g(i, T_\sigma(i)) \prod_{j = 1}^{\sigma^{-1}(i)-1} r_{\sigma(j)}.
\end{align}

There is a non-trivial trade-off for deciding the position of $B_{n+1}$.
On the one hand, it can be shown that if all the $n+1$ boxes before $B_{n+2}$ have the same index~$\tau$, the ones with higher ratios of $q_i / p_i$ should be opened early to maximize the utility, i.e., the box with a higher ratio of $q_i/p_i$ should be put earlier in the permutation.
On the other hand,
if the ratio $q_i / p_i$ is the same for each of these $n+1$ boxes, the box with a higher index should be opened earlier.
 The special box $B_{n+1}$ has a smaller index than boxes $B_1, \ldots, B_n$, but a much higher ratio of $q_{n+1} / p_{n+1}$. 
 Finding the non-trivial trade-off in deciding the position of $B_{n+1}$ in~$\sigma$ can be shown $\classnp$-hard.

\subsection{Correctness of Reduction: A Sketch}
\label{sec:reduction-correct}
We demonstrate all the statements and prove the key Lemma~\ref{lm:reduction}.
The omitted proofs can be found in Appendix~\ref{sec:proof_for_sketch}.
We show the key lemma by using a function with $\sum_{i\in \Tat}p_i+p_i^2$ as the single variable to approximate Equation~(\ref{eq:loss_sigma_original}). 
For ease of notation, define  \begin{align*}
y \coloneqq \sum_{i\in S} \frac {s_i}{\Gamma}+ \left(\frac {s_i}\Gamma \right)^2=\sum_{i\in\Tat\cup\Tbf}p_i+p_i^2, \quad 
x \coloneqq \sum_{i\in \Tbf}p_i+p_i^2.
\end{align*}
Note that $y$ is fixed once $S$ is given, whereas $x$ is a function of $\Tbf$ and hence of~$\sigma$.


\begin{restatable}{lemma}{lemmaApproxLossByh}
\label{clm:approx_loss_by_h}
The parameters of the instance can be set up so that 
\begin{align}
\label{eq:approx_loss_by_h}
h(x)-O(n^2\Delta^4)
   & \leq \frac{\Ls(\sigma)-C 
   \pm O(n^2\Delta^4)}{k_1} 
   \leq  
   h(x) + O(n\Delta^3), \\
   \frac{k_2}{k_1} & = 2e^{y/2} \pm O(\Delta^2), \nonumber
\end{align}
where 
\begin{align}
\label{eq:h}
    h(x):=e^{-2x}\left(1-\frac{k_2}{k_1}e^{-y+x}\right),
\end{align}
with $C,k_1,k_2$ as constants independent of $\sigma$: 
\begin{align}
\label{eq:k1}
    k_1:=- \frac 1 2  {p_{n+2}(\tau_{n+2}-\tau_H)(p_{n+1}+q_{n+1})} + p_{n+1}[(1-p_{n+2})(\tau_H-\tau_L)+p_{n+2}(\tau_{n+2}-\tau_L)],
\end{align}
\begin{align}
\label{eq:k2}
    k_2:=p_{n+1}(1-p_{n+2})(\tau_{H}-\tau_L),
\end{align}
\begin{align}
\label{eq:C}
C:=\frac 1 2 {p_{n+2}(\tau_{n+2}-\tau_H)} \left(1-\prod_{i\in[n+1]}r_i \right) + \frac 1 2 k_2 \sum_{i = 1}^n p_i^2.
\end{align}
\end{restatable}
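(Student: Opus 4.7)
The plan is to decompose $\Ls(\sigma)$ according to the role each of the $n+2$ boxes plays as the outer summand index $i$, approximate each piece to the needed precision, and assemble them into $k_1 h(x) + C$. By Claim~\ref{clm:fix_final_box}, box $n+2$ is last, so writing $\Ls(\sigma) = L_{\Tbf} + L_{n+1} + L_{\Tat} + L_{n+2}$, the piece $L_{n+2}$ vanishes because $T_\sigma(n+2) = \emptyset$. Every term in $L_{\Tat}$ carries the factor $r_{n+1} = 40/\Gamma$; a first-order expansion yields $L_{\Tat} \approx \tfrac{1}{2} p_{n+2}(\tau_{n+2} - \tau_H)\, r_{n+1}(e^{-2x} - e^{-2y})$, and this piece will combine with $\prod_{i \in [n+1]} r_i$ inside $C$ to cancel to within $O(n \Delta^4 / \Gamma)$, well below the stated error.

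For $L_{\Tbf}$, note that $T_\sigma(i) \ni n+2$ but $T_\sigma(i) \not\ni n+1$ (since $\tau_L < \tau_i$). Conditioning on $v_{n+2}$, the dominant contribution to $g(i, T_\sigma(i))$ is $p_{n+2}(\tau_{n+2} - \tau_i)$; the residual from events $\{v_j = 1,\, j \in [n] \setminus \{i\}\}$ is $O(n \Delta^3)$ per $i$ because $\tau_j - \tau_i = O(\Delta^2)$, and summing with the weight $p_i$ gives $O(n^2 \Delta^4)$ overall. Substituting the closed form of $\tau_i$ yields $g(i, T_\sigma(i)) = p_{n+2}(\tau_{n+2} - \tau_H) - \tfrac{1}{2} p_i k_2 + O(n \Delta^3)$. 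I will then invoke the algebraic identity $\sum_i p_i \prod_{j < i}(1 - 2 p_j) = \tfrac{1}{2}(1 - \prod_i (1 - 2 p_j))$ together with $\ln(1 - 2 p_j) = -2(p_j + p_j^2) + O(p_j^3)$—the $(p_j + p_j^2)$ combination that motivates the definition of $x$—to conclude $\prod_{j \in \Tbf,\, j \text{ before } i} r_j = e^{-2 x'_i}(1 + O(n \Delta^3))$ where $x'_i \coloneqq \sum_{j \in \Tbf,\, j \text{ before } i}(p_j + p_j^2)$. Telescoping then delivers $L_{\Tbf} = \tfrac{1}{2} p_{n+2}(\tau_{n+2} - \tau_H)(1 - e^{-2x}) - \tfrac{k_2}{2} \sum_{i \in \Tbf} p_i^2 + O(n^2 \Delta^4)$.

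For $L_{n+1}$, $T_\sigma(n+1) = \Tat \cup \{n+2\}$ and the threshold is $\tau_L = \tfrac{1}{2}$. Conditioning again on $v_{n+2}$, the case $v_{n+2} = 1$ contributes $p_{n+2}(\tau_{n+2} - \tau_L)$; its complement contributes $(1 - p_{n+2})(\tau_H - \tau_L)(1 - \prod_{j \in \Tat}(1 - p_j))$ up to an $O(n \Delta^3)$ slack from replacing each $\tau_j$ by $\tau_H$. The delicate expansion here is $\prod_{j \in \Tat}(1 - p_j) = e^{-(y - x)}(1 + \tfrac{1}{2} \sum_{j \in \Tat} p_j^2 + O(n \Delta^3))$, which uses $\ln(1 - p_j) = -p_j - p_j^2/2 - O(p_j^3)$ and the identity $y - x = \sum_{j \in \Tat}(p_j + p_j^2)$; note the coefficient $\tfrac{1}{2}$ differs from the corresponding one in the $r_j$ expansion, and this mismatch is precisely the source of the $\sum p_i^2$ correction. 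Multiplying by $p_{n+1} \prod_{j \in \Tbf} r_j = p_{n+1} e^{-2x}(1 + O(n \Delta^3))$ yields $L_{n+1} = p_{n+1} p_{n+2}(\tau_{n+2} - \tau_L) e^{-2x} + k_2 e^{-2x} - k_2 e^{-y - x} - \tfrac{k_2}{2} \sum_{j \in \Tat} p_j^2 + O(n^2 \Delta^4)$.

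Summing the three pieces, the coefficient of $e^{-2x}$ becomes $-\tfrac{1}{2} p_{n+2}(\tau_{n+2} - \tau_H) + p_{n+1}[(\tau_H - \tau_L) + p_{n+2}(\tau_{n+2} - \tau_H)]$, which matches $k_1$ up to a $-\tfrac{1}{2} p_{n+2}(\tau_{n+2} - \tau_H) r_{n+1}$ discrepancy, exactly absorbed by the retained $L_{\Tat}$ term and the $\prod_{i \in [n+1]} r_i$ inside $C$. The $e^{-y - x}$ coefficient is $-k_2$, so factoring out $k_1$ produces $k_1 h(x)$. The residuals $\pm \tfrac{k_2}{2} \sum_{i=1}^n p_i^2$ align against the term $\tfrac{1}{2} k_2 \sum p_i^2$ inside $C$, and the constant $\tfrac{1}{2} p_{n+2}(\tau_{n+2} - \tau_H)$ matches the first term of $C$ up to the vanishingly small $\prod r_i$ factor. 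The identity $k_2/k_1 = 2 e^{y/2} \pm O(\Delta^2)$ is a direct substitution once the explicit choice of $\tau_H$ from Claim~\ref{clm:setting_t_tau_H} is granted. The main obstacle will be the sign-sensitive bookkeeping of the $\sum p_i^2$ corrections together with the fact that $k_1 = \Theta(\Delta)$ magnifies relative errors in $\Ls(\sigma)$ into additive errors on $h(x)$; thus each Taylor expansion—in particular the two series for $\ln(1 - 2 p_j)$ and $\ln(1 - p_j)$, whose quadratic coefficients differ by a factor of four—must be carried to cubic order, and the asymmetric error bounds $O(n^2 \Delta^4)$ versus $O(n \Delta^3)$ reflect a handful of these second-order corrections being one-sided in sign.
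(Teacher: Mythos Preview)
Your proposal is correct and follows essentially the same route as the paper. Both arguments (i) split $\Ls(\sigma)$ into the contributions from $i\in\Tbf$, $i=n+1$, and $i\in\Tat$; (ii) replace each $g(i,T_\sigma(i))$ by $g_H-\tfrac12 p_ik_2$ (resp.\ $g_L$) via the estimate you state, which is the paper's Claim~\ref{clm:g_ig_n+1}; (iii) collapse the telescoping sums via the identity $\sum_i p_i\prod_{j<i}(1-2p_j)=\tfrac12\bigl(1-\prod_i(1-2p_i)\bigr)$, which is Lemma~\ref{lm:from_scheduling}; and (iv) pass from the products $f(\Tbf)=\prod(1-2p_i)$ and $g(\Tat)=\prod(1-p_i)$ to exponentials using the second-order Taylor expansions you identify (the paper's Fact~\ref{fct:Taylor_series} and Claim~\ref{clm:approx_error}). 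The paper packages steps (i)--(iii) into a separate claim (Claim~\ref{lm:before_approx}, giving $\Ls(\sigma)=k_1f(\Tbf)-k_2f(\Tbf)g(\Tat)-\tfrac{k_2}{2}\sum_{i\in\Tat}p_i^2+C\pm O(n^2\Delta^4)$) and only then exponentiates, whereas you exponentiate piece by piece; this is purely organizational. One small slip: you write $k_1=\Theta(\Delta)$, but in fact $k_1=\Theta(1/\Gamma)$ (both summands in \eqref{eq:k1} are $\Theta(1/\Gamma)$ since $\tau_{n+2}-\tau_H=\Theta(1/\Gamma)$ under the choice in Claim~\ref{clm:setting_t_tau_H}); this does not affect your error budget, only your commentary.
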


We give a road map for the proof once we have Lemma~\ref{clm:approx_loss_by_h}.
The minimum value of $h(x)$ is taken at $x^*=y-\ln(2k_1/k_2)$.
When $k_1 / k_2$ is near $2e^{y/2}$, $x^*$ is close to $y/2$.
Our goal is to have the most even partition of~$S$ be the $\Tbf$ and~$\Tat$ of an optimal policy, which in turn should have $x$ as close to $y/2$ as possible.
Even with the approximation given in Lemma~\ref{clm:approx_loss_by_h}, a few obstacles still stand in the way: $x$ is not $\sum_{i \in \Tbf} p_i$, nor is $y$ equal to $\sum_{i \in [n]} p_i$; both of them have second-order terms, which cause further distortion in the objective through the fact that $x$ and~$y$ appear in the exponents in~$h$.
We overcome these difficulties by carefully controlling the order of errors throughout our calculation:
$p_i$'s are so small that the second-order terms in $x$ and~$y$ are negligible; analytical properties of~$h$ (Claim~\ref{clm:strong_convexity}) guarantee that, around its optimum, $h$ is sensitive enough to perturbations, so that suboptimal solutions can be told from the optimal.



Much of the proof of Lemma~\ref{clm:approx_loss_by_h}, which is fairly technical, is relegated to Appendix~\ref{sec:hardness-app}.  
We mention a tool instrumental in simplifying the calculations, which also explains our setting $p_i = q_i$ for all $i \in [n]$:

\begin{restatable}{lemma}{lemmaFromScheduling}
\label{lm:from_scheduling}
Given two sequences of positive real numbers $p_1,p_2,\cdots,p_n$ and $r_0,r_1,\cdots,r_n$.
Let $r_0=1$.
If there exists a constant $c>0$ such that $p_i/(1-r_i)=c$ for each $1\le i\leq n$, then we have
\begin{align*}
    \sum_{i=1}^{n}p_i\prod_{j=0}^{i-1}r_j=c \left(1-\prod_{i=1}^{n}r_i \right) \; .
\end{align*}
\end{restatable}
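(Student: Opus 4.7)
The plan is to substitute the hypothesis $p_i = c(1 - r_i)$ into the left-hand side and recognize the resulting sum as a telescoping sum.

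First, I would apply the assumption $p_i/(1 - r_i) = c$ to rewrite $p_i = c(1 - r_i)$ for every $1 \le i \le n$, which gives
\begin{align*}
\sum_{i=1}^n p_i \prod_{j=0}^{i-1} r_j \;=\; c \sum_{i=1}^n (1 - r_i) \prod_{j=0}^{i-1} r_j.
\end{align*}
Next, I would denote $a_i := \prod_{j=0}^{i-1} r_j$; since $r_0 = 1$, we have $a_1 = 1$ (empty product) and $a_{i+1} = r_i a_i$ for each $i \ge 1$. Hence $(1 - r_i)\, a_i = a_i - a_{i+1}$, and the inner sum telescopes:
\begin{align*}
\sum_{i=1}^n (1 - r_i)\, a_i \;=\; \sum_{i=1}^n (a_i - a_{i+1}) \;=\; a_1 - a_{n+1} \;=\; 1 - \prod_{j=1}^n r_j.
\end{align*}
Multiplying both sides by $c$ yields the claim.

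There is no real obstacle here; the only minor bookkeeping is making sure the convention $r_0 = 1$ (and the corresponding empty product) is handled correctly, which is what makes the telescoping identity start cleanly at $a_1 = 1$. The lemma is essentially a one-line identity once the substitution $p_i = c(1-r_i)$ is made.
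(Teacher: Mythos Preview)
Your proof is correct. The paper instead proves the identity by induction on~$n$: it checks the base case $n=1$ directly and, in the inductive step, adds the term $p_{n+1}\prod_{j=0}^{n} r_j = c(1-r_{n+1})\prod_{j=0}^{n} r_j$ to the induction hypothesis $c(1-\prod_{i=1}^n r_i)$ to obtain $c(1-\prod_{i=1}^{n+1} r_i)$. Your telescoping argument is essentially the same computation unrolled: defining $a_i=\prod_{j=0}^{i-1}r_j$ and writing $(1-r_i)a_i=a_i-a_{i+1}$ makes the cancellation explicit in one line, whereas the induction packages each cancellation into the passage from $n$ to $n+1$. Both approaches are elementary; yours is slightly more direct, while the paper's induction avoids introducing the auxiliary notation~$a_i$. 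One tiny quibble: $a_1=\prod_{j=0}^{0}r_j=r_0=1$ is not an empty product but the single factor $r_0=1$; this does not affect the argument.
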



After much simplification, the main terms of $\Ls(\sigma)$ are given in Claim~\ref{lm:before_approx}, before we apply analytical tools and turn products to sums in the exponent (Fact~\ref{fct:Taylor_series}, Claim~\ref{clm:approx_error}), which leads to Lemma~\ref{clm:approx_loss_by_h}.  
Note that we have to appeal to second-order approximations of the exponential function for the required precision in the proof.
The setup of the parameter $\gittins_H$ is given in Claim~\ref{clm:setting_t_tau_H}.

\begin{restatable}{claim}{claimBeforeApprox}
    \label{lm:before_approx}
    For a non-empty set $T \subseteq [n]$, let $f(T):=\prod_{i\in T}r_i=\prod_{i\in T}(1-2p_i)$ and $g(T) \coloneqq \prod_{i\in T}(1-p_i)$.
Also let $f(\emptyset) = g(\emptyset) = 1$.
Then
\begin{align}
\label{eq:Ls_of_sigma}
    \Ls(\sigma)=k_1 f(\Tbf)-k_2 f(\Tbf)g(\Tat)- k_2\sum_{i\in \Tat} p_{i}^2 / 2 + C \pm O(n^2\Delta^4).
\end{align}
\end{restatable}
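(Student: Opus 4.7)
The plan is to unpack $\Ls(\sigma)=\sum_i p_i\,g(i,T_\sigma(i))\prod_{j<\sigma^{-1}(i)} r_{\sigma(j)}$ by splitting the sum according to whether $i\in\Tbf$, $i=n+1$, $i\in\Tat$, or $i=n+2$. The last case contributes nothing since $T_\sigma(n+2)=\emptyset$. For the remaining cases, I first identify $T_\sigma(i)$: because $\tau_{n+2}$ is uniquely largest, $\tau_{n+1}=\tau_L$ uniquely smallest, and all $\tau_j$ for $j\in[n]$ cluster within $O(\Delta^2)$ of $\tau_H$, the set $T_\sigma(i)$ always contains $\{n+2\}$ when $i\neq n+2$, excludes $n+1$ when $i\in[n]$, and its intersection with $[n]$ affects the analysis only at order $O(\Delta^2)$.

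Next I compute $g(i,T_\sigma(i))=\E[(\max_j\amorval_j-\tau_i)_+]$ by conditioning on whether $v_{n+2}=1$. When $v_{n+2}=1$, $\amorval_{n+2}=\tau_{n+2}=3/4$ dominates every other $\amorval_j$, giving excess $\tau_{n+2}-\tau_i$; when $v_{n+2}\neq 1$, any excess over $\tau_i$ arises only through some $\amorval_j=\tau_j$ with $j\in[n]$ and $p_j>p_i$, contributing $O(n\Delta^3)$ after averaging. Substituting $\tau_i=\tau_H+p_i k_2/(2p_{n+2})$ (from the reduction's choice of $c_i$) yields
\[ g(i,T_\sigma(i)) = p_{n+2}(\tau_{n+2}-\tau_H) - \tfrac{p_i k_2}{2} + O(n\Delta^3) \quad\text{for } i\in\Tbf\cup\Tat, \]
while for $i=n+1$, since $\tau_L=1/2$ lies well below the $\tau_H$-cluster,
\[ g(n+1,\Tat\cup\{n+2\}) = p_{n+2}(\tau_{n+2}-\tau_L)+(1-p_{n+2})(\tau_H-\tau_L)(1-g(\Tat))+O(n\Delta^3). \]
Lemma~\ref{lm:from_scheduling} then collapses each weighted sum: the choice $p_i=q_i$ gives $p_i/(1-r_i)=1/2$, so $\sum_{i\in T}p_i\prod_{j\in T_{<i}}r_j=\tfrac12(1-f(T))$ exactly, for both $T=\Tbf$ and $T=\Tat$. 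This turns the $\Tbf$ contribution into $\tfrac12 p_{n+2}(\tau_{n+2}-\tau_H)(1-f(\Tbf))$ plus a $p_i^2$-correction, the $(n+1)$-contribution (weighted by $p_{n+1}f(\Tbf)$) into $p_{n+1}p_{n+2}(\tau_{n+2}-\tau_L)f(\Tbf)+k_2 f(\Tbf)(1-g(\Tat))$, and the $\Tat$-contribution (weighted by $r_{n+1}f(\Tbf)$) into $\tfrac12 r_{n+1}f(\Tbf)p_{n+2}(\tau_{n+2}-\tau_H)(1-f(\Tat))$; the last term's own $p_i^2$-correction is absorbed into the error because of the extra $r_{n+1}=O(1/\Gamma)$ factor.

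The final step is algebraic rearrangement. Using $p_{n+1}+q_{n+1}=1-r_{n+1}$ and $\prod_{i\in[n+1]}r_i=r_{n+1}f(\Tbf)f(\Tat)$, the pieces reorganize into $k_1 f(\Tbf)-k_2 f(\Tbf)g(\Tat)+(\text{a }p_i^2\text{ sum})+C$, with the $k_2f(\Tbf)(1-g(\Tat))$ term splitting cleanly into $k_2f(\Tbf)$ (absorbed into $k_1f(\Tbf)$) and $-k_2f(\Tbf)g(\Tat)$; writing the net $p_i^2$ contribution over $[n]=\Tbf\sqcup\Tat$ produces the $-\tfrac{k_2}{2}\sum_{i\in\Tat}p_i^2$ term together with the $\tfrac{k_2}{2}\sum_{i=1}^n p_i^2$ piece of $C$. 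The main obstacle throughout is error bookkeeping: one must verify that $O(n\Delta^3)$ excesses in $g(i,\cdot)$, after the $O(n\Delta)$ prefactor from Lemma~\ref{lm:from_scheduling}, stay below $O(n^2\Delta^4)$; that $p_i^2$-corrections scaled by $k_2=O(1/\Gamma)\ll\Delta$ remain within budget; and that terms attenuated by either $p_{n+1}$ or $r_{n+1}$—both $O(1/\Gamma)$—are sub-leading enough to be collapsed into the error, so that only the main-order combinations identified above survive.
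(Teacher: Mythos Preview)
Your proposal is correct and follows essentially the same route as the paper's proof: you split the sum by whether $i\in\Tbf$, $i=n+1$, $i\in\Tat$, or $i=n+2$; you compute $g(i,T_\sigma(i))$ up to $O(n\Delta^3)$ (this is exactly the paper's Claim~\ref{clm:g_ig_n+1}); you apply Lemma~\ref{lm:from_scheduling} using $p_i/(1-r_i)=1/2$ to collapse the partial products over $\Tbf$ and $\Tat$ into $\tfrac12(1-f(\cdot))$; and you finish with the same algebraic regrouping via $1-r_{n+1}=p_{n+1}+q_{n+1}$ and $\prod_{i\in[n+1]}r_i=r_{n+1}f(\Tbf)f(\Tat)$. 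The error bookkeeping you flag (the $r_{n+1}$-attenuated $\Tat$ correction and the $O(n\Delta^3)$ residuals in $g_i$ multiplied by $O(n\Delta)$ prefactors) is also handled in the paper in the same way.
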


\begin{restatable}{claim}{claimSettingt}
\label{clm:setting_t_tau_H}
If we choose~$t$ so that $|t-2e^{y/2}|\leq O(\Delta^2)$, and set $\gittins_H$ as follows, then $\tfrac{k_2}{k_1} = t$:
\begin{align}
\label{eq:set_tau_H}
    \tau_H=\frac{-3t\Gamma+28+94t}{-4t\Gamma+56+104t}.
\end{align}
\end{restatable}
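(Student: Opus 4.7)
The plan is to verify the claim by direct computation. The formulas for $k_1$ and $k_2$ in \eqref{eq:k1}--\eqref{eq:k2} involve only $\tau_H$, $\tau_L$, $\tau_{n+2}$, $p_{n+1}$, $q_{n+1}$, $p_{n+2}$, all of which (except $\tau_H$) have been fixed in the construction. So the equation $k_2/k_1 = t$ becomes an equation in the single unknown $\tau_H$, and the claimed formula should be exactly its solution. The hypothesis $|t - 2e^{y/2}| \le O(\Delta^2)$ does not enter the verification of the identity $k_2/k_1 = t$; it is only needed to guarantee that the resulting $\tau_H$ falls in the expected range (close to $3/4$), which can be checked separately at the end.

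The first step is to substitute $p_{n+1} = 1/\Gamma$, $q_{n+1} = 1 - 41/\Gamma$ (so $p_{n+1} + q_{n+1} = 1 - 40/\Gamma$), $p_{n+2} = 1/8$, $\tau_{n+2} = 3/4$, and $\tau_L = 1/2$ into \eqref{eq:k1}--\eqref{eq:k2}. A short computation gives
\begin{align*}
k_2 &= \tfrac{7(\tau_H - 1/2)}{8\Gamma}, \\
k_1 &= -\tfrac{1}{16}\bigl(\tfrac{3}{4} - \tau_H\bigr)\bigl(1 - \tfrac{40}{\Gamma}\bigr) + \tfrac{7(\tau_H - 1/2)}{8\Gamma} + \tfrac{1}{32\Gamma}.
\end{align*}
Both are affine functions of $\tau_H$, so the equation $k_2 = t \cdot k_1$ is linear in $\tau_H$.

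The second step is to solve this linear equation. Multiplying $k_2 = t k_1$ through by $32\Gamma$ clears denominators and yields
\[
56\tau_H - 28 \;=\; t(4\Gamma - 104)\tau_H - t(3\Gamma - 94).
\]
Collecting coefficients of $\tau_H$ on one side gives
\[
\tau_H\bigl[56 - t(4\Gamma - 104)\bigr] \;=\; 28 - t(3\Gamma - 94),
\]
which rearranges exactly to the formula \eqref{eq:set_tau_H}. This is the entire verification; everything is routine, and the main (minor) obstacle is just being careful with signs and factors of $2$ when expanding $(3/4 - \tau_H)(1 - 40/\Gamma)$.

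Finally, I would add a short sanity check that the $\tau_H$ produced by this formula is well-defined and satisfies $\tau_L < \tau_H < \tau_{n+2}$, as assumed throughout Section~\ref{sec:hardness}. For $t$ in the prescribed range, $t = 2 + O(\Delta)$ (since $y = O(n\Delta)$ implies $2e^{y/2} = 2 + O(n\Delta)$), and one verifies by writing $\tau_H = 3/4 + \delta$ and solving, that $\delta = O(1/\Gamma) = O(\Delta \cdot 2^{-n})$, so in particular $\tau_H = 3/4 - O(\Delta)$ and $\tau_H > 1/2$ when $\Gamma$ is large, as required for the index condition $\gittins_i \ge 1/2$ of the LCLRS3 instance.
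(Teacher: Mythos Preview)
Your proposal is correct and follows essentially the same direct-computation route as the paper: substitute the explicit parameter values into $k_1$ and $k_2$, solve the resulting linear equation $k_2 = t k_1$ for $\tau_H$, and then verify $1/2 < \tau_H < 3/4$. One minor slip: the factor that clears denominators is $64\Gamma$, not $32\Gamma$ (the equation you display after clearing is nevertheless correct, so the conclusion is unaffected).
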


\begin{fact}
\label{fct:Taylor_series}
For $0\leq x\leq 1/2$, we have
    $ 1-x\leq e^{-x}\leq 1-x+x^2/2$,
and $1-x \leq e^{-x-x^2/2}\leq 1-x+O(x^3).$
\end{fact}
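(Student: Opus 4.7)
}
These are elementary Taylor-series estimates, so my plan is to derive each inequality by showing that a remainder series has constant sign on $[0, 1/2]$, and to use the first line to bootstrap the second. I would present the two lines in that order.

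For the first line $1 - x \le e^{-x} \le 1 - x + x^2/2$, I would use the standard trick of showing that the relevant difference vanishes at $x=0$ and has nonnegative derivative on $[0, 1/2]$. Concretely, $e^{-x} - (1-x)$ has derivative $1 - e^{-x} \ge 0$, giving the lower bound; then $(1 - x + x^2/2) - e^{-x}$ has derivative $-1 + x + e^{-x}$, which is itself nonnegative on $[0, 1/2]$ by the lower bound just established. Since both differences vanish at $0$, they are nonnegative on the interval.

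For the second line, I would treat the two halves differently. The lower bound $1 - x \le e^{-x - x^2/2}$ becomes, after taking logarithms, the inequality $x + x^2/2 \le -\ln(1-x)$; using the absolutely convergent expansion $-\ln(1-x) = \sum_{k \ge 1} x^k / k$ on $[0, 1/2]$, the gap is the termwise nonnegative series $\sum_{k \ge 3} x^k / k \ge 0$. For the upper bound, I would substitute $y := x + x^2/2$ into the already-proved estimate $e^{-y} \le 1 - y + y^2/2$ (which in fact extends to all $y \ge 0$ by the same derivative argument), and expand; the result is
\begin{align*}
e^{-x - x^2/2} \;\le\; 1 - \bigl(x + x^2/2\bigr) + \tfrac{1}{2}\bigl(x + x^2/2\bigr)^2 \;=\; 1 - x + \tfrac{1}{2} x^3 + \tfrac{1}{8} x^4,
\end{align*}
which is $1 - x + O(x^3)$ uniformly on $[0, 1/2]$.

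I do not anticipate any real obstacle; the only mild care is choosing the substitution $y := x + x^2/2$ so that the upper bound reduces to the already-proved second-order estimate. The fact serves as a uniform second-order approximation of $e^{-x}$ in the chain of estimates leading to Lemma~\ref{clm:approx_loss_by_h}, where it is invoked with $x = p_i = O(\Delta)$, and the implicit $O(x^3)$ slack is absorbed into the $O(n \Delta^3)$ and $O(n^2 \Delta^4)$ error terms appearing there.
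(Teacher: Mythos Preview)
Your argument is correct; each inequality is handled cleanly and the substitution $y = x + x^2/2$ for the last upper bound is exactly the right move. The paper itself states Fact~\ref{fct:Taylor_series} without proof, treating it as an elementary Taylor-series estimate, so there is nothing to compare against.
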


\begin{restatable}{claim}{claimApproxError}
\label{clm:approx_error}
For any subset $T$ of the first $n$ boxes, one has 
\begin{align*}
    e^{-\sum_{i\in T}2(p_i+p_i^2)}\geq f(T)\geq e^{-\sum_{i\in T}2(p_i+p_i^2)}-O(n\Delta^3),\\
    e^{-\sum_{i\in T}(p_i+p_i^2/2)}\geq g(T)\geq e^{-\sum_{i\in T}(p_i+p_i^2/2)}-O(n\Delta^3).
\end{align*}
\end{restatable}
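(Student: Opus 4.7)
}

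The plan is to apply Fact~\ref{fct:Taylor_series} factor-by-factor and then transfer the per-factor error to the full product via a standard telescoping identity. Throughout, recall that $p_i \leq \Delta$ for every $i \in [n+1]$, so all the Taylor-series bounds in Fact~\ref{fct:Taylor_series} apply with $x = 2p_i$ or $x = p_i$.

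\medskip

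\noindent\textbf{Upper bounds.} These are immediate. Setting $x = 2p_i$ in the second inequality of Fact~\ref{fct:Taylor_series} gives $1 - 2p_i \leq e^{-2p_i - 2p_i^2}$, and taking the product over $i \in T$ yields $f(T) \leq e^{-\sum_{i\in T} 2(p_i + p_i^2)}$. The bound for $g(T)$ follows the same way with $x = p_i$.

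\medskip

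\noindent\textbf{Lower bounds.} For the matching lower bound on $f(T)$, define $a_i := 1 - 2p_i$ and $b_i := e^{-2p_i - 2p_i^2}$. Fact~\ref{fct:Taylor_series} gives $b_i \leq 1 - 2p_i + O(p_i^3)$, equivalently $a_i \geq b_i - O(p_i^3)$. I will then apply the telescoping identity
\begin{align*}
\prod_{i \in T} a_i - \prod_{i \in T} b_i \;=\; \sum_{i \in T} \Bigl(\prod_{j<i} a_j\Bigr)(a_i - b_i)\Bigl(\prod_{j>i} b_j\Bigr).
\end{align*}
Since $a_j, b_j \in [0,1]$, both surrounding products lie in $[0,1]$, so each summand is at least $-O(p_i^3)$, yielding
\begin{align*}
f(T) \;=\; \prod_{i \in T} a_i \;\geq\; \prod_{i \in T} b_i - \sum_{i \in T} O(p_i^3) \;\geq\; e^{-\sum_{i \in T} 2(p_i + p_i^2)} - O(n\Delta^3),
\end{align*}
where the final step uses $|T| \leq n$ and $p_i \leq \Delta$. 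The argument for $g(T)$ is identical with $a_i = 1 - p_i$ and $b_i = e^{-p_i - p_i^2/2}$, again using Fact~\ref{fct:Taylor_series} to get $a_i \geq b_i - O(p_i^3)$.

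\medskip

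\noindent\textbf{Remarks on difficulty.} There is no real obstacle here; the only subtlety is that one cannot simply take logarithms and use $\log(1-x) = -x - x^2/2 - O(x^3)$ to bound the product, because compounding the $O(x^3)$ errors across $n$ terms and re-exponentiating produces an additive error of $O(n\Delta^3)$ only after one observes that the exponent is bounded. The telescoping identity avoids this and gives the bound directly, exploiting only that the factors are in $[0,1]$.
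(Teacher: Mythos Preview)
Your proof is correct and follows essentially the same approach as the paper: apply Fact~\ref{fct:Taylor_series} with $x = 2p_i$ (resp.\ $x = p_i$) to bound each factor within $O(p_i^3)$, then propagate the per-factor error to the full product. The paper states the product bound $\prod_i(b_i - O(p_i^3)) \geq \prod_i b_i - |T|\max_i O(p_i^3)$ directly, whereas you spell it out via the telescoping identity, but this is the same argument.
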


With the approximation in Lemma~\ref{clm:approx_loss_by_h} in hand, we are almost ready to prove Lemma~\ref{lm:reduction}.
The next lemma shows that the function $h$ is sensitive enough to perturbations around its minimum.

\begin{restatable}{claim}{claimStrongConvexity}
    \label{clm:strong_convexity}
    If $| k_2/k_1 -2e^{y/2}|\leq O(\Delta^2)$, $\eps \in \mathbb R$ is such that $2^{-6n}\geq |\epsilon|\geq 1 / \Gamma = 2^{-8n}$ , let $x^*\in[0,1/2]$ be where $h(x)$ takes its minimum value, then $|x^*-\frac{y}{2}|\leq O(\Delta^2)$,
    \begin{align*}
        h(x^*+\epsilon)\geq h(x^*)+\epsilon^2/2.
    \end{align*}
\end{restatable}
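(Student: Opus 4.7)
The plan is to expand the definition of $h$ into the convenient form $h(x)=e^{-2x}-(k_2/k_1)\,e^{-y-x}$ and then to apply elementary one-variable calculus: compute the first two derivatives, locate $x^*$ via the first-order optimality condition, and obtain the quadratic lower bound from Taylor's theorem with Lagrange remainder.

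For the location of $x^*$, I would differentiate and set $h'(x^*)=-2e^{-2x^*}+(k_2/k_1)e^{-y-x^*}=0$, which rearranges to $e^{y-x^*}=k_2/(2k_1)$. Writing $t:=k_2/k_1=2e^{y/2}\pm O(\Delta^2)$ and noting that $y=\sum_{i\in[n]}(p_i+p_i^2)\le 2n\Delta$ is minuscule (so $e^{y/2}=1\pm O(n\Delta)$), a first-order expansion of $\ln$ around $1$ gives $\ln(t/2)=y/2\pm O(\Delta^2)$. Hence $x^*=y-\ln(t/2)=y/2\pm O(\Delta^2)$, establishing the first half of the claim; in particular $x^*\in(0,1/2)$, so boundary issues do not arise.

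For the strong convexity bound, I would differentiate once more to obtain $h''(x)=4e^{-2x}-(k_2/k_1)e^{-y-x}$. Plugging the first-order condition $(k_2/k_1)e^{-y-x^*}=2e^{-2x^*}$ back in collapses the second term, giving $h''(x^*)=2e^{-2x^*}$, which lies in $[2-O(n\Delta),\,2]$ since $x^*\approx y/2$ is tiny. For any $|\epsilon|\le 2^{-6n}$ and any $\xi$ on the segment between $x^*$ and $x^*+\epsilon$, both $e^{-2\xi}$ and $e^{-y-\xi}$ differ from their values at $x^*$ by a multiplicative factor of $1\pm O(2^{-6n})$, so $h''(\xi)=2e^{-2x^*}\pm O(2^{-6n})\ge 1$. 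Since $h'(x^*)=0$, Taylor's theorem with Lagrange remainder then yields $h(x^*+\epsilon)=h(x^*)+\tfrac12 h''(\xi)\,\epsilon^2\ge h(x^*)+\epsilon^2/2$, as required.

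The main (though mild) obstacle will be book-keeping of the various error orders: one must verify that the $O(\Delta^2)$ perturbation from the hypothesis on $k_2/k_1$, the $O(n\Delta)$ drop of $e^{-2x^*}$ below $1$, and the $O(2^{-6n})$ drift of $h''$ across the Taylor segment are all much smaller than the gap between $h''(x^*)\approx 2$ and the target lower bound~$1$. Since $\Delta=2^{-7n}$, both $n\Delta$ and $2^{-6n}$ are exponentially small in $n$, so the inequality $h''(\xi)\ge 1$ holds with substantial margin and no delicate cancellation is needed. The hypothesis $|\epsilon|\ge 1/\Gamma$ is not used in the argument itself; it only ensures the claim is non-vacuous.
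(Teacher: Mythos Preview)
Your proposal is correct and follows essentially the same approach as the paper: both compute $h'$ and $h''$, use the first-order condition at the interior minimum, bound $h''\ge 1$ near $x^*$, and invoke Taylor/strong convexity for the quadratic lower bound. The only cosmetic differences are that the paper bounds $h''\in[1,4]$ uniformly on $[-2^{-6n},1/2]$ rather than localizing around $x^*$, and it deduces $|x^*-y/2|\le O(\Delta^2)$ by checking $|h'(y/2)|\le O(\Delta^2)$ and applying strong convexity, whereas you solve the first-order condition for $x^*$ explicitly.
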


\begin{proof} [Proof of Lemma~\ref{lm:reduction}]
The ``if'' part is obvious: if the permutation~$\sigma^*$ of a policy yields a partition $\Tbf^*$ and $\Tat^*$ with $\sum_{i \in \Tbf^*}p_i = \sum_{i \in \Tat^*} p_i$, then since $p_i = s_i / \Gamma$ for each $i \in [n]$, $(\Tbf^*, \Tat^*)$ certifies that $S$ is a \textsc{Yes} instance of Partition.

For the ``only if'' part, 
suppose $S$ can be partitioned into disjoint subsets $S_1$ and~$S_2$ with $\sum_{s\in S_1}s=\sum_{s\in S_2}s$, we show that any policy whose corresponding $\Tbf$ and~$\Tat$ is not an even partition of~$S$ must be suboptimal.
By our setting of parameters and Claim~\ref{clm:setting_t_tau_H}, we know $|x^*-\frac{y}{2}|\leq O(\Delta^2)$. 
For any permutation $\sigma$ whose corresponding sets $\Tat,\Tbf$ are such that $\sum_{i\in\Tat}p_i\neq \sum_{i\in\Tbf}p_i$, we have $|x-y/2|\ge 1/\Gamma- \sum_{i\in\Tat\cup\Tbf}p_i^2\ge 1/\Gamma-n\Delta^2$, and hence $|x-x^*|\geq 1/\Gamma -n\Delta^2 / 2$.
Hence, one has
\begin{align*}
    \frac{\Ls(\sigma)-C + O(n^2\Delta^4)}{k_1}
    & \geq h(x) - O(n^2 \Delta^4) 
    > h(x^*) + \Omega(1 / \Gamma^2) \\
    &
    \ge \frac{\Ls(\sigma^*)-C 
    + O(n^2\Delta^4)}{k_1}+\Omega(1/\Gamma^2),
\end{align*}
where the second inequality follows from Claim~\ref{clm:strong_convexity}, the first and last inequality follow from Lemma~\ref{clm:approx_loss_by_h}. 
Hence $\Ls(\sigma)>\Ls(\sigma^*)$.
\end{proof}

\section{Polynomial-Time Approximation Scheme}
\label{sec:ptas}

This section gives a polynomial time approximation scheme (PTAS) for PNOI.  
The PTAS is based on \citet{fu2018ptas}'s framework which gives approximation schemes for a class of stochastic optimization problems with $O(1)$ sized state spaces.
The natural instantiation of PNOI in this framework uses values as states, and must reduce their number to $O(1)$.  
Standard discretization, however, only works when all values are small.
Our main technical contribution is a novel discretization method tailored for larger values (Section~\ref{sec:ptas:VH_discrete}).

\subsection{Fu et al.'s Framework and Its Application to PNOI}\label{sec:ptas:prob}
\label{sec:SSDP}
    
    We first introduce \citet{fu2018ptas}'s framework.
    We make some simplifications in this presentation, referring the reader to the original paper for full details.
    We then adapt the framework to PNOI and obtain a PTAS when the support of the value distributions is of $O(1)$ size; Section~\ref{sec:ptas:discretization} deals with the general case.

    \paragraph{\citeauthor{fu2018ptas}'s SSDP framework.}
    A \emph{stochastic sequential decision process} (SSDP) is given by a 5-tuple $(V, \ActSet, f, G, I_0)$, 
    where $V$ is the set of states of the process,
    $I_0 \in V$ the initial state, 
    $\ActSet$ the set of actions,
    $f: V \times \ActSet \to V$ the stochastic state transition function, and 
    $G: V \times \ActSet \to \mathbb R$ the marginal payoff function. 
    In each step~$t$, a policy chooses an action $a \in \ActSet$, with the restriction that each action can be taken at most once during the process; a policy may also choose to end the process at any time. 
    Let $I_t$ denote the state at round $t$.
    Then when action $a_t \in \ActSet$ is taken at state $I_t \in V$ in round~$t$, the state becomes $I_{t+1} = f(I_t, a_t)$, producing a marginal payoff of $G(I_t, a_t)$ whose expectation $\Ex{G(I_t,a_t)}$ is non-negative.
    Generally, both $f$ and~$G$ are random, and may be correlated.
    In execution of the process, the total payoff is the sum of the marginal payoffs incurred in all rounds; 
    our goal is to design a policy that maximizes the expected total payoff. 
    \footnote{\citeauthor{fu2018ptas}'s original framework allows the total payoff to also include a payoff that depends on the final state reached.  Such final payoffs can be easily simulated by marginal payoffs, and our adoption of the framework for PNOI calls for no such final payoffs.}
    
    Given an SSDP, let $\OPT$ be the maximum expected payoff obtainable by any policy, and let $\MAX$ be the maximum expected payoff if a policy may (hypothetically) choose to start with any state $I'_0 \in V$, in place of $I_0$.
    
    It is not difficult to see that, for any randomized policy~$\alg^R$, 
    there is a deterministic policy whose expected payoff is no less than that of~$\alg^R$. 
    Therefore we focus on deterministic policies.
    \begin{restatable}{thm}{theoremJian}[Essentially from \cite{fu2018ptas}]\label{thm:jian}
        If an SSDP problem $(V, \ActSet, f, G, I_0)$ satisfies the following conditions:
        \begin{enumerate}
            \item The number of possible states is a constant, i.e., $\left| V \right| = O(1)$.
            \item The state space $V$ admits an ordering ``$\geq$'' such that $f(I, a) \geq I$ for any $I\in V, a\in \ActSet$, i.e., the state is non-decreasing with probability~$1$.
            \item There exists an optimal policy that never takes an action with a negative expected marginal payoff in any round.
            \item $\MAX = O(\OPT)$.
        \end{enumerate}
    
        Then, for any fixed $\eps>0$, there is a policy $\alg$ computable in time $n^{2^{O(\eps^{-3})}}$, with expected payoff at least $(1-\eps) \cdot \OPT$.
    \end{restatable}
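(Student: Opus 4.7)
The strategy is to adapt the block-decomposition framework of \citeauthor{fu2018ptas}. Fix any optimal deterministic policy $\alg^*$, viewed as a decision tree. By Condition~2, along each root-to-leaf path the state $I_t$ is non-decreasing, and by Condition~1 it attains at most $|V|=O(1)$ distinct values; hence every such path splits into at most $|V|$ maximal \emph{blocks} inside which the state is constant, separated by up-transitions. This lets one view $\alg^*$ as a \emph{skeleton} tree of depth $O(1)$, each internal node carrying a state label from $V$ and each outgoing edge carrying a multiset of actions consumed before the next state transition.

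I would next argue that, within a single block, the induced expected marginal payoff depends, up to controlled error, only on the multiset of actions rather than on their order: as long as none of these actions triggers an up-transition, the state stays fixed at $I$ throughout the block, so the $G(I,a)$ contributions are exchangeable conditional on the block remaining at $I$, and Condition~3 ensures that any action swapped into an earlier slot still has non-negative expected contribution. To collapse the exponentially many possible multisets into a polynomially countable collection, I would bucket actions by their \emph{footprint}: at each state $I\in V$, round $\Ex{G(I,a)}$ to a multiple of $\eta := \Theta(\eps\cdot\MAX/n)$ and quantize the induced transition distribution on $V$ to a grid of width $\eps$. Since $|V|=O(1)$, this yields $2^{O(\eps^{-3})}$ distinct action \emph{types} per state, and each block is then described by a type-count vector of dimension $2^{O(\eps^{-3})}$ with entries in $\{0,1,\dots,n\}$.

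With this encoding, the number of candidate skeletons -- depth-$O(1)$ trees whose edges carry such type-count vectors -- is at most $n^{2^{O(\eps^{-3})}}$. A bottom-up dynamic program keyed on (current state, residual vector of available action types) enumerates them and outputs the best discretized policy within the stated time budget. The accumulated discretization error is $O(\eps\cdot\MAX)$ additively, which Condition~4 ($\MAX=O(\OPT)$) converts into a $(1-\eps)$ multiplicative guarantee with respect to $\OPT$. The most delicate step I expect is the second one: ensuring that within-block reordering together with action bucketing does not conflate two actions whose effect on the downstream skeleton actually differs, and that the resulting errors merely add over the $O(1)$ blocks rather than compound multiplicatively.
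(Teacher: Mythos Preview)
Your high-level plan matches the paper's proof sketch: both invoke \citeauthor{fu2018ptas}'s block-adaptive framework, using state monotonicity and $|V|=O(1)$ to reduce any policy to a constant-depth skeleton of blocks within which action order matters little, encoding blocks by discretized signatures, enumerating skeleton topologies, and filling them by dynamic programming, with additive error $O(\eps\cdot\MAX)=O(\eps\cdot\OPT)$ via Condition~4.

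There is, however, one concrete place where your implementation would not deliver the stated running time. You bucket each \emph{action} by a footprint that rounds $\Ex{G(I,a)}$ to a multiple of $\eta=\Theta(\eps\cdot\MAX/n)$; that granularity produces $\Theta(n/\eps)$ payoff buckets, so the number of action types --- and hence the dimension of your type-count vectors --- grows with~$n$, and the number of such vectors is $n^{\mathrm{poly}(n)}$ rather than $n^{2^{O(\eps^{-3})}}$. (Coarsening the payoff grid to remove the $n$-dependence would make the per-action rounding error $\Omega(\eps\cdot\MAX)$, which after summing over up to $n$ actions is uncontrolled.) The paper's sketch, following \citeauthor{fu2018ptas}, sidesteps this by attaching signatures to \emph{blocks} rather than to actions: a block's signature records only its state and its discretized transition probabilities to the other blocks, with no payoff component, and the dynamic program then \emph{maximizes} total payoff over all action-to-block assignments consistent with a fixed signature profile. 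This keeps the signature space independent of~$n$ and pushes the $n$-dependence into the DP, where the $n^{2^{O(\eps^{-3})}}$ budget can absorb it.
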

    
    We sketch the main ideas behind Theorem~\ref{thm:jian} in Appendix~\ref{sec:ptas:app_tree}; readers interested in the details are referred to \citet{fu2018ptas}. 

    
    \paragraph{\Pandora as an SSDP.}
    
    We now present \Pandora in the framework of SSDP and apply Theorem~\ref{thm:jian} to obtain a PTAS for \Pandora when the value distributions have small supports.
    
    \begin{proposition}\label{thm:const}
        If 
        $|\bigcup_i \supp(F_i)| = O(1)$, then for any fixed $\eps>0$, there is a polynomial-time algorithm that computes a policy with an expected payoff at least $(1-\eps)\cdot \OPT$.
    \end{proposition}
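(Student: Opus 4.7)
The plan is to cast PNOI as an SSDP satisfying the four hypotheses of Theorem~\ref{thm:jian} and then apply that theorem as a black box.  Let the state $I$ be the highest value revealed so far, with state space $V = \{0\} \cup \bigcup_i \supp(F_i) \cup \{\top\}$, where $\top$ is a terminal symbol that dominates every real state; this gives $|V| = O(1)$ by hypothesis, satisfying condition~1, with initial state $I_0 = 0$.  For each box~$i$, the action set contains an inspection action $a_i^I$ whose transition draws $v_i \sim F_i$ and updates the state to $\max\{I, v_i\}$ with marginal payoff $\max\{0, v_i - I\} - c_i$, and a take-without-inspection action $a_i^N$ that transitions to $\top$ with marginal payoff $v_i - I$, $v_i \sim F_i$.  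Each action is usable at most once and no actions are available from $\top$, so $a_i^N$ is effectively terminal.  A telescoping check confirms that the sum of marginal payoffs along any trajectory equals the PNOI payoff of the corresponding run: either $\max_{i \in S} v_i - \sum_{i \in S} c_i$ when inspection of $S$ is followed by stopping, or $v_j - \sum_{i \in S} c_i$ when the run ends with $a_j^N$.  Condition~2 (state non-decreasing with probability $1$) is immediate from $\max\{I, v_i\} \geq I$ and $\top \geq I$.

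Condition~4 is also short: raising the initial state to any $I_0' \in V$ weakly decreases every marginal payoff (both $\max\{0, v_i - I_0'\} - c_i$ and $v_j - I_0'$ shrink in $I_0'$), so the expected total payoff from $I_0'$ is at most the one from $I_0 = 0$; hence $\MAX = \OPT$ and therefore $\MAX = O(\OPT)$.  The main obstacle is condition~3, for which I must exhibit an optimal SSDP policy whose actions all have non-negative expected marginal payoffs.  I restrict to policies that (a) never play $a_i^I$ when the current state satisfies $I \geq \gittins_i$, and (b) never play $a_j^N$ when $I \geq \E[v_j]$; these restrictions immediately force non-negative expected marginal payoffs on every action actually taken, using the defining equation $\E[\max\{0, v_i - \gittins_i\}] = c_i$ for the index.

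The remaining obligation is to argue that restrictions (a) and (b) preserve optimality.  Because the SSDP state is non-decreasing, once $I \geq \gittins_i$ it stays so forever, so every future attempt at $a_i^I$ would again have non-positive expected marginal payoff; hence skipping the inspection of box~$i$ for the rest of the run is weakly better than playing $a_i^I$ now.  A symmetric argument shows that $a_j^N$ with $I \geq \E[v_j]$ is weakly dominated by ending the process.  A round-by-round exchange argument thus converts any optimal policy into one satisfying (a) and (b) without loss of expected payoff, certifying condition~3.  With all four hypotheses in place, Theorem~\ref{thm:jian} yields a policy computable in time $n^{2^{O(\eps^{-3})}}$---polynomial for each fixed $\eps > 0$---with expected payoff at least $(1-\eps)\OPT$.
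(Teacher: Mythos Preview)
Your overall approach is the paper's: cast PNOI as an SSDP with the running maximum as the state, and verify the four hypotheses of Theorem~\ref{thm:jian}. Conditions~1, 2, and~4 are handled essentially as in the paper.

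Your argument for condition~3, however, has a gap. You write that once $I \geq \gittins_i$, ``every future attempt at $a_i^I$ would again have non-positive expected marginal payoff; hence skipping the inspection of box~$i$ for the rest of the run is weakly better.'' The ``hence'' does not follow from what precedes it: an action with negative expected marginal payoff might still be worthwhile if the state it produced improved downstream rewards. The missing ingredient is the observation---which you invoke for condition~4 but not here---that $G$ is non-increasing in the state, so the state increase caused by $a_i^I$ can only \emph{decrease} every subsequent marginal. Even granting that, ``skipping'' is underspecified: the original decision tree branches on the realized $v_i$, so after deleting $a_i^I$ at a node you must say which branch the modified policy follows and why that branch is at least as good. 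The paper's Lemma~\ref{lem:bad} handles both points at once with a simulation trick: instead of opening box~$i$, the modified policy draws a phantom $v_i' \sim F_i$, follows the original tree as though it had observed $v_i'$, and pays no cost at that node; every subsequent node is reached with the same probability, the \emph{true} state there is pointwise no larger than in the original run, and monotonicity of $G$ yields a nodewise comparison of marginals. Your ``round-by-round exchange'' can be completed along these lines, but as written the key step is asserted rather than argued.
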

    
        We prove the proposition by casting the PNOI problem as an SSDP and then applying Theorem~\ref{thm:jian}.
        In doing so, we must satisfy the conditions of Theorem~\ref{thm:jian}.
        
        Let $V$, the set of states, be $\bigcup_i \supp(F_i)$.
        The ``$\geq$'' ordering in condition 2.\@ is simply the natural ordering on reals.
        Let $\MAXV$ denote the maximum value in~$V$.
        The action space $\ActSet$ consists of three parts: 
        $A_0 = \{a_1^0, \ldots, a_n^0\}$, where $a_i^0$ is the action of opening box~$i$;
        $A_1 = \{a_1^1, \ldots, a_n^1\}$, where $a_i^1$ is the action of taking box~$i$ without opening it (and ending the process);
        and $\End$, the action of taking the maximum value seen so far (and ending the process).
        Note that, for each~$i$, at most one of the actions $a_i^0, a_i^1$ could be chosen throughout the process, and an action in~$A_1$ precludes the action~$\End$.\footnote{In the application of Theorem~\ref{thm:jian}, 
        such constraints 
        are easily taken care of by the dynamic programming.  Similar constraints arose in \emph{Committed ProbeTop-$k$ Problem} and \emph{Committed Pandora Problem}, and were handled similarly by \citet{fu2018ptas}.} 
        Set $I_0 = 0$ and let the state~$I_t$ be the largest value seen in the boxes opened in the first $t$ rounds. 
     For any $I \in V$ and $i \in [n]$, the state transition function $f$ is defined as
    \begin{align*}
             f(I, a_i^0) & = \max\{I, \val_i\},  \\
             f(I, a_i^1) & = \max\{I, \Ex{\vali}\},
             \\
             f(I, \End) & = I;
    \end{align*}
   the marginal payoff function $G$ is
        \begin{align*}
            G(I, a_i^0) &= (\max\{I, \val_i\} - I) - c_i, \\
            G(I, a_i^1) &= \Ex{\val_i} - I, \\
            G(I, \End) &= 0.
    \end{align*}
    
    \begin{claim}
    $G(I, a)$ is non-increasing in~$I$ for any~$a$, and is Lipschitz in~$I$, i.e., for any $I_1 < I_2$ and any action~$a$, $G(I_2, a) - G(I_1, a) \leq I_2 - I_1$.
    \end{claim}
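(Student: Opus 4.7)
The plan is to verify both claims by a short case analysis over the three kinds of actions in $\ActSet = A_0 \cup A_1 \cup \{\End\}$, leveraging the fact that both ``non-increasing in $I$'' and ``$G(I_2,a)-G(I_1,a)\le I_2-I_1$'' are preserved by taking expectations over independent randomness. Hence it suffices to argue pointwise in the realization of $v_i$ whenever $G$ depends on a random value, and then average.

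For the termination action $\End$, $G(I,\End)\equiv 0$ is constant in $I$, so both conditions hold trivially. For $a=a_i^1\in A_1$, $G(I,a_i^1)=\Ex{v_i}-I$ is affine in $I$ with slope $-1$, so $G$ is strictly decreasing, and $G(I_2,a_i^1)-G(I_1,a_i^1)=-(I_2-I_1)\le I_2-I_1$.

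The substance of the claim lies in the opening action $a_i^0\in A_0$. Fixing a realization of $v_i$, one has $G(I,a_i^0)=\max\{v_i-I,0\}-c_i$. The map $I\mapsto \max\{v_i-I,0\}$ is the positive part of a decreasing affine function, so it is non-increasing and $1$-Lipschitz in $I$; a direct inspection in the three sub-cases $v_i\le I_1$, $I_1< v_i\le I_2$, and $v_i>I_2$ gives $|\max\{v_i-I_2,0\}-\max\{v_i-I_1,0\}|\le I_2-I_1$. Subtracting the constant $c_i$ preserves both properties, and taking expectation over $v_i$ (which is independent of the state $I$) then transfers the two properties from the realization-wise level to the expected marginal payoff.

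I do not anticipate any real obstacle: the calculation is a one-line check per case. Two small observations are worth recording in the write-up. First, the ``non-increasing'' conclusion by itself already yields $G(I_2,a)-G(I_1,a)\le 0\le I_2-I_1$, so the inequality as written in the claim is formally a consequence of the first half; but the same case analysis gives the two-sided bound $|G(I_2,a)-G(I_1,a)|\le I_2-I_1$, which is presumably what the later arguments invoking Theorem~\ref{thm:jian} actually need. Second, only the $1$-Lipschitzness of the positive-part function is used, so the argument will go through unchanged if the payoff is later augmented by any other $1$-Lipschitz, non-increasing function of $I$.
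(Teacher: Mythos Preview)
Your proposal is correct. The paper states this claim without proof, treating it as immediate from the explicit formulas for $G$; your case analysis over the three action types is exactly the routine verification one would supply, and your side remark that the inequality $G(I_2,a)-G(I_1,a)\le I_2-I_1$ is already implied by monotonicity (while the useful direction for the later discretization lemma is the reverse bound $G(I_1,a)-G(I_2,a)\le I_2-I_1$) is well taken.
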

    
     It is straightforward to see that, in any execution of a reasonable policy,\footnote{Reasonable here means the policy never takes a box~$i$ without inspection if $\Ex{\vali}$ is smaller than the current state~$I$.} the sum of marginal payoffs from all rounds is exactly the payoff in the \Pandora problem.
    
    We now check the conditions 
    of Theorem~\ref{thm:jian}.
    The first two are satisfied immediately.  
    For the last condition, note that,  changing $I_0$ to any positive value only decreases the marginal payoffs (because $G$ is non-increasing in~$I$), so $\MAX = \OPT$. Condition 3 is guaranteed by the lemma below. 
    
    \begin{restatable}{lemma}{lemmaBad}\label{lem:bad}
        An optimal \Pandora policy never takes an action with a negative expected marginal payoff in any round.
    \end{restatable}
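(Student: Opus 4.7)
The plan is to prove the lemma by contradiction. Suppose an optimal policy $\pi$ takes an action $a$ with $\Ex{G(I, a)} < 0$ at some state $(I, S)$ reachable with positive probability. Since $\Ex{G(I, \End)} = 0$, the offending action is not $\End$, so it must be either $a_i^1$ or $a_i^0$ for some box $i \in S$. I will exhibit a strictly better policy in each case, contradicting optimality of~$\pi$.

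The case $a = a_i^1$ is immediate: this action terminates the process, so swapping it with $\End$ at that node of the decision tree yields marginal payoff $0 > \Ex{G(I, a_i^1)}$ while leaving the rest of~$\pi$ unchanged, strictly improving the expected total payoff on a positive-probability event.

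The case $a = a_i^0$ is the main case. The key ingredient is to show that the optimal value function $V^*(I, S)$, defined as the maximum expected total payoff from state $(I, S)$, is non-increasing in~$I$. I will establish this via a coupling argument: a deterministic policy is a decision tree whose branching depends only on revealed values (not on the current $I$), so running the same tree from two states $I_1 \leq I_2$ under a common realization of values produces identical action sequences but pointwise larger states in the $I_2$ copy; the preceding Claim that $G(\cdot, a)$ is non-increasing in~$I$ then implies the $I_1$ run accumulates at least as much marginal payoff in every round, giving $V^*(I_1, S) \geq V^*(I_2, S)$. Given monotonicity, the value of $\pi$ from $(I, S)$ onward is at most
\begin{align*}
\Ex{G(I, a_i^0)} + \Ex{V^*(\max\{I, \vali\}, S \setminus \{i\})} \;\leq\; \Ex{G(I, a_i^0)} + V^*(I, S \setminus \{i\}) \;<\; V^*(I, S \setminus \{i\}),
\end{align*}
where the first inequality uses $\max\{I, \vali\} \geq I$ together with monotonicity, and the last uses the hypothesis $\Ex{G(I, a_i^0)} < 0$. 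Thus the alternative policy that agrees with~$\pi$ up to the offending node but at $(I, S)$ ignores box~$i$ and follows an optimal continuation on $(I, S \setminus \{i\})$ strictly outperforms~$\pi$, the required contradiction.

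The main---and really the only---point requiring care is the coupling/monotonicity argument for $V^*$, and in particular the observation that, for deterministic policies, the action taken at a node is determined by the history of revealed values alone, so runs from different initial states can be coupled to take identical actions. Once monotonicity is in hand, the remainder is a short case split with no further obstacles.
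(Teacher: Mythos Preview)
Your proof is correct. Both your approach and the paper's hinge on the monotonicity of $G(I, a)$ in~$I$, but the argument is packaged differently. The paper constructs the improved policy directly: at the offending node~$u$ where box~$i$ would be opened, it instead \emph{samples} a fake value $v'_i \sim F_i$ and simulates the remainder of~$\alg^*$ as if $v_i = v'_i$, never actually paying~$c_i$; since the true state along every subsequent node is then pointwise no larger than in~$\alg^*$, monotonicity of~$G$ yields a weak improvement everywhere and a strict one at~$u$. You instead abstract the coupling into a standalone monotonicity lemma for the value function $V^*(I, S)$ and then apply a Bellman-style inequality to show that discarding box~$i$ outright and continuing optimally on $S \setminus \{i\}$ dominates opening it. The paper's construction is slightly slicker in that it reuses the same downstream tree and avoids naming~$V^*$; your version is more modular, making the monotonicity of the value function explicit as a reusable lemma.
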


    The argument is most readily seen using the language of decision trees and formally proved in Appendix~\ref{sec:ptas:app_tree}.
        We have thus shown that all four conditions in Theorem~\ref{thm:jian} are satisfied.   Proposition~\ref{thm:const} therefore follows.
        

    
    
        

    \subsection{Discretization and PTAS for General PNOI}
    \label{sec:ptas:discretization}
    
    The condition $\left| V \right| = O(1)$ is essential to Proposition~\ref{thm:const}. 
    To obtain a PTAS for general \Pandora instances, we look to discretize values to reduce the state space size.   
    However, standard discretization turns out to work only for values small compared to $\OPT$ (Section~\ref{sec:ptas:LV-discrete}).  
    We develop in Section~\ref{sec:ptas:VH_discrete} a separate, novel technique to handle large values.
    
    \begin{restatable}{thm}{theoremGeneralPTAS}\label{thm:general_ptas}
            For any fixed constant $\eps>0$, there is a polynomial-time algorithm for \Pandora that computes a policy with an expected payoff at least $(1-O(\eps))\cdot \OPT$.
    \end{restatable}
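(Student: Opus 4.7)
The plan is to leverage Proposition~\ref{thm:const}, which already gives a PTAS whenever the value support has $O(1)$ size, so the remaining task is to discretize a general \Pandora instance into one with small support while preserving the expected optimal payoff up to a $(1-O(\eps))$ factor. I would first compute a constant-factor estimate $\EOPT$ of $\OPT$ (for instance, the better of \citeauthor{weitzman1979optimal}'s index policy and the ``take the best single expected-value box'' policy, which together attain at least $\OPT/2$). Using $\EOPT$, one splits the value range into three regimes: negligibly small values (below some $\poly(\eps/n)\cdot\EOPT$ threshold), moderate values in $[\poly(\eps/n)\cdot\EOPT,\ \eps^{-1}\EOPT]$, and large values exceeding $\eps^{-1}\EOPT$.

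For the moderate regime (to be handled in Section~\ref{sec:ptas:LV-discrete}), standard multiplicative bucketing into powers of $(1+\eps)$ reduces the support to $O(1)$ points, while negligible values are collapsed to $0$ at total additive cost $O(\eps)\OPT$. Because both steps amount to rounding downward, the marginal payoffs and recorded states can only shrink, so state monotonicity, the $\MAX = O(\OPT)$ condition, and Lemma~\ref{lem:bad} all transfer automatically.

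The main obstacle, and the novel ingredient (Section~\ref{sec:ptas:VH_discrete}), is the large-value regime. Naïve rounding fails in both directions: rounding downward wastes too much payoff in absolute terms, whereas rounding the latent values of unopened boxes upward inflates every Gittins index and breaks the $\MAX=O(\OPT)$ bound. The plan is to exploit the fact that once a large value~$v$ is revealed, the expected additional payoff of any continuation is tightly upper-bounded by a well-behaved additive function of~$v$ (essentially a sum of Weitzman-style residuals $\Ex{(\amorval_j - v)_+}$). Using this function I would place $O(1/\eps)$ breakpoints $L_1 < L_2 < \cdots$ so that the continuation payoff changes by at most $O(\eps^2)\EOPT$ across consecutive breakpoints. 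The asymmetric rule is then: when a box is opened and yields $v \in [L_k,L_{k+1})$, the policy collects the true marginal payoff $\max\{I,v\} - I - c$ for that round, but the SSDP state advances to $L_{k+1}$ instead of to~$v$; the values of still-unopened boxes are untouched.

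The hard part will be the verification, in two steps. First, one has to check that the modified process still satisfies all four hypotheses of Theorem~\ref{thm:jian}: the state is still non-decreasing with probability~$1$, $G$ remains non-increasing and Lipschitz in the state despite the state-payoff mismatch, and crucially $\MAX = O(\OPT)$ survives the upward state rounding (for this it matters that we only round up \emph{realized} states, bounded by the values observed, rather than unopened values). Second, one bounds the cumulative rounding loss: since any execution crosses each of the $O(1/\eps)$ breakpoints at most once, and each crossing distorts the continuation-payoff estimate by at most the pre-designed $O(\eps^2)\EOPT$, the total expected error telescopes to $O(\eps)\OPT$. Chaining this with the three discretization losses (negligible, moderate, large) and the $(1-\eps)$ factor from Proposition~\ref{thm:const} delivers the claimed $(1-O(\eps))\OPT$ guarantee in time $n^{2^{O(\eps^{-3})}}$.
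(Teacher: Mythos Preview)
Your overall plan matches the paper's: reduce to Proposition~\ref{thm:const} after discretizing small values by standard bucketing and large values by the asymmetric rule (round up only the \emph{realized state}, keep the true marginal payoff), with breakpoints placed according to the additive residual function $F(S,v)=\sum_i \Ex{(\amorval_i - v)_+}$. That asymmetric rule, and the fact that it modifies only the state-transition function while leaving the marginal-payoff function untouched, is precisely the paper's novel ingredient, so you have identified the right mechanism.

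The gap is in your telescoping error bound for the large-value regime. You claim $O(1/\eps)$ breakpoints each spaced $O(\eps^2)\EOPT$ apart in $F$-value; this is arithmetically inconsistent, since $F([n],\threshold)$ can be as large as $2\,\OPT$, so $O(1/\eps)$ breakpoints only afford $\Theta(\eps)\,\OPT$ spacing. With that corrected spacing, a worst-case telescoping over up to $O(1/\eps)$ crossings gives merely an $O(1)\,\OPT$ bound, which is useless. What is missing is Lemma~\ref{lem:no2jactpot}: once a value at least~$\threshold$ is revealed, any box an optimal policy would still open (one with index $\geq\threshold$) yields another value $\geq\threshold$ with probability at most~$\eps$. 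The paper uses this not via telescoping but in one shot: after the first entry into the large regime it switches to a quasi-index policy that halts on the next state increase, and Lemma~\ref{lem:no2jactpot} (through Lemma~\ref{lem:index_LPNOI}) shows this already recovers $(1-\eps)F(S,v)$. The rounding error is then charged only to the frontier nodes where the state first crosses~$\threshold$, whose reach probabilities sum to at most~$1$, giving the $O(\eps)\,\OPT$ loss (Lemma~\ref{lem:high_discret}). Your telescoping could be repaired by the same lemma --- the \emph{expected} number of large-regime state jumps is $O(1)$, not the worst-case $O(1/\eps)$ --- but as written the argument does not close.
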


      
	Throughout this section, we fix a threshold $\threshold \in [\OPT / \eps, 2\OPT / \eps]$, which can be obtained in polynomial time by running the simple approximation algorithm mentioned in the Introduction.
    \emph{Small values} refer to values at most~$\threshold$, and \emph{large values} are those above~$\threshold$. 
    Proofs omitted in this section can be found in Appendix~\ref{sec:ptas:app_discret}.
        

    \subsubsection{Discretization of Small Values}
    \label{sec:ptas:LV-discrete}
    
    \Xcomment{
        \begin{restatable}{thm}{theoremDiscret}\label{thm:discret}
            For any fixed $\eps>0$, there is a polynomial-time computable policy~$\alg$ with expected payoff at least $(1-\eps)\cdot \OPT$ for \Pandora instances that satisfy at least one of the following two conditions:
            \begin{enumerate}
                \item there is a constant $c_{\eps}$, which only depends on $\eps$, such that $\val_i \leq c_\eps \cdot \OPT$ with probability 1, for all~$i$; 
                \item there is a constant $c_{\eps}$, which only depends on $\eps$, such that each value distribution $\dist_i$ is discrete and, for any value $v$ in the support, has $\dist_i(v) \geq 1 / c_\eps$.  
            \end{enumerate}
        \end{restatable}
    }
    
    
    

    Standard discretization techniques can handle small values.
    For fixed $\eps$ and~$\threshold$, define a discretization function 
    $\Dis: x \mapsto \lfloor \frac{x}{\threshold \cdot \eps^2} \rfloor \cdot \threshold \cdot \eps^2$. 
    We say $\Dis(\vali)$ is the \emph{\Sdiscretized} value of~$\vali$, and a policy $\Dalg$ is \emph{\Sdiscretized} if its decisions only depend on the \Sdiscretized\ values $\Dis(\vali[1]), \ldots, \Dis(\vali[n])$.
    Recall that $\P(\alg, \Ins)$ denotes the expected profit of a policy $\alg$ on a PNOI instance~$\Ins$. 
    
    \begin{restatable}{lemma}{lemmaDiscret}\label{lem:discret}
        Let $\Ins$ be an instance of PNOI and $\DIns$ the \Sdiscretized\ instance of~$\Ins$, in which each value $\val_i$ is replaced by $\Dis(\vali)$.
        Then,
        \begin{enumerate}
            \item for any policy~$\alg$, there is a \Sdiscretized\  policy $\Dalg$ such that 
            $$\P(\Dalg, \DIns) \geq  \P(\alg, \Ins) - O(\eps) \cdot \OPT;$$
            \item for any \Sdiscretized\ policy $\Dalg$, $\P(\Dalg, \Ins) \geq \P(\Dalg, \DIns)$. 
        \end{enumerate}
    \end{restatable}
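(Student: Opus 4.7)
The plan rests on two observations: pointwise $0 \le \vali - \Dis(\vali) \le \threshold \eps^2 \le 2\eps \cdot \OPT$ (since $\threshold \le 2\OPT/\eps$), and that in PNOI the value component of the payoff comes from a single box — the one ultimately taken — so the accumulated discretization loss is bounded by the error on one value and does \emph{not} scale with the number of boxes opened. Both parts I will prove via the natural coupling that draws $\vali \sim \disti$ independently across $i$ and treats the same realization as the value of box $i$ in $\Ins$ and as $\Dis(\vali)$ in $\DIns$.

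For part 2, since $\Dalg$ is \Sdiscretized, its action at each round depends only on the revealed sequence of $\Dis(\vali)$'s, which coincides in the two instances under the coupling. Hence $\Dalg$ opens the same boxes in the same order, incurs identical costs, and takes the same final box $i^*$. The only gap is in the value component: in $\Ins$ it contributes $\vali[i^*]$, whereas in $\DIns$ it contributes $\Dis(\vali[i^*])$, regardless of whether $i^*$ was inspected or taken unopened. Since $\Dis(x) \le x$ pointwise, taking expectation yields $\P(\Dalg, \Ins) \ge \P(\Dalg, \DIns)$.

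For part 1, I would construct $\Dalg$ as a randomized policy that, upon opening box $i$ and observing $\Dis(\vali)$, samples a hypothetical value $\tilde v_i$ from $\disti$ conditional on $\Dis(\tilde v_i) = \Dis(\vali)$, and then performs whichever action $\alg$ would take with $\tilde v_i$ in place of $\vali$. Since $\Dalg$ consults only the revealed $\Dis(\vali)$'s together with private randomness, it is \Sdiscretized. I then extend the coupling so that $\tilde v_i = \vali$, which is valid because both random variables have the same conditional distribution $\disti \mid \Dis(\cdot) = \Dis(\vali)$. Under this joint coupling, $\Dalg$ on $\DIns$ and $\alg$ on $\Ins$ open the exact same boxes in the same order and take the same final box $i^*$, so inspection costs match identically. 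The only discrepancy is the final value component, bounded pointwise by $\vali[i^*] - \Dis(\vali[i^*]) \le \threshold \eps^2 \le 2\eps \cdot \OPT$ regardless of whether $i^*$ was inspected. Taking expectation gives $\P(\Dalg, \DIns) \ge \P(\alg, \Ins) - O(\eps) \cdot \OPT$. The main thing to get right — and the subtle step where a naive argument fails — is the bookkeeping: summing $\vali - \Dis(\vali)$ over all opened boxes would yield a useless $O(n \eps \cdot \OPT)$ bound; the clean $O(\eps \cdot \OPT)$ guarantee depends crucially on the coupling making \emph{all} opening costs match identically, so that only the single taken value contributes to the gap.
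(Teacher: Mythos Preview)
Your proposal is correct and uses the same construction as the paper: $\Dalg$ simulates $\alg$ by sampling $\tilde v_i$ from $F_i$ conditioned on $\Dis(\tilde v_i)=\Dis(v_i)$, which is exactly the paper's $F_{i\mid \Dis(v_i)=\DX_i}$, and the coupling $\tilde v_i=v_i$ is the same. The only difference is in the final accounting: the paper bounds the marginal payoff at each node via the Lipschitz property of $G$ and sums over the decision tree, whereas you compare realized payoffs path-by-path and observe that costs cancel exactly so only the single taken value contributes the $\le\threshold\eps^2\le 2\eps\cdot\OPT$ gap. Your bookkeeping is a bit more direct and makes explicit why the error does not pick up a factor of~$n$; the paper's node-by-node sum relies implicitly on the telescoping you spell out.
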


    Let $\MAXV$ be the largest value in all the supports of the value distributions. 
    If $\MAXV\leq \threshold$, then the support of $\Dis(\vali[1]), \cdots, \Dis(\vali[n])$ is $O(1)$.  
    Combining Lemma~\ref{lem:discret} and Proposition~\ref{thm:const} yields a PTAS for such instances.
    Nevertheless, when $\MAXV \gg \threshold$, the support size of  \emph{\Sdiscretized} may be large, and we deal with this considerably more challenging case in Section~\ref{sec:ptas:VH_discrete}.
    
    \subsubsection{Discretization of Large Values}
    \label{sec:ptas:VH_discrete}
    
    Our discretization of large values is based on a few insights.  First, any reasonable policy, after having seen a large value, should switch to the index policy.  
    We observe that the additional expected payoff at this stage is upper bounded and approximated by an additive function (Lemma~\ref{lem:approxf}). 
    We use this function to set $O(1 / \eps)$ discretization points (Definition~\ref{def:thetas}).
    Second, we find a non-standard way to make use of these discretization points. As we explain in more detail below, simply rounding values to the closest discretization points (as in Lemma~\ref{lem:discret}) may introduce too much error.
    
        
    
    \begin{restatable}{lemma}{lemmaNoTwoJackpot}\label{lem:no2jactpot}
	  If $\alg^*$ is an optimal policy, after a probed box yields a value $v^* \geq \threshold$, $\alg^*$ only opens boxes with indices at least $v^* \geq \threshold$.  
	  Moreover, with probability at most $\eps$, any box with index at least $\threshold$ contains a value at least $\threshold$.
    \end{restatable}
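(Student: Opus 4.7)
The plan is to prove the two claims separately, each using a tool already developed earlier in the paper.

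\textbf{First claim.} For the assertion that after observing a value $v^* \ge \threshold$ the optimal policy $\alg^*$ only opens boxes with index at least $v^*$, I would invoke Lemma~\ref{lem:bad} via the SSDP formulation of PNOI from Section~\ref{sec:ptas:prob}. The key calculation is that when the state (maximum value revealed so far) equals $v^*$, opening box~$i$ yields expected marginal payoff $\Ex{(\vali - v^*)_+} - c_i$. The defining equation $\Ex{(\vali - \gittins_i)_+} = c_i$ together with $\gittins_i < v^*$ makes this quantity strictly negative, which Lemma~\ref{lem:bad} rules out for an optimal policy.

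\textbf{Second claim.} For the probability bound, I would compare against Weitzman's index-based policy, whose expected payoff I denote $V_W$. Since Weitzman's policy is a valid PNOI policy, $V_W \le \OPT$, and by Theorem~\ref{thm:index} we have $V_W = \Ex{\max_i \amorval_i}$ with $\amorval_i = \min(\vali, \gittins_i)$. The observation is that for any box $i$ with $\gittins_i \ge \threshold$, the event $\{\vali \ge \threshold\}$ implies $\amorval_i \ge \threshold$, so
\[
\Ex{\max_i \amorval_i} \;\ge\; \threshold \cdot \Prx{\exists i : \gittins_i \ge \threshold \text{ and } \vali \ge \threshold}.
\]
Rearranging and using the choice $\threshold \ge \OPT / \eps$ gives the desired bound $\Prx{\exists i : \gittins_i \ge \threshold \text{ and } \vali \ge \threshold} \le V_W / \threshold \le \OPT / \threshold \le \eps$.

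I do not anticipate significant obstacles; both parts are short once the right tool is identified. The subtlety worth flagging is the interpretation of the statement: the probability bound is over the joint distribution of all values and controls the union event that \emph{at least one} index-$\ge\threshold$ box reveals a value $\ge \threshold$. Combined with the first claim (which restricts post-jackpot opens to high-index boxes), this yields, via independence of the $\vali$'s, that with probability $\ge 1 - \eps$ at most one ``jackpot'' is ever encountered during execution --- precisely the property that the large-value discretization in Section~\ref{sec:ptas:VH_discrete} will exploit.
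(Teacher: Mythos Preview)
Your proposal is correct and essentially matches the paper's proof. The second claim is argued identically (Markov on $\Ex{\max_i \amorval_i}\le\OPT$ together with $\threshold\ge\OPT/\eps$); for the first claim the paper instead notes that $v^*\ge\threshold>\OPT\ge\max_i\Ex{\vali}$ rules out backups, reducing to classical Pandora where the index policy is optimal, whereas you invoke Lemma~\ref{lem:bad} directly---but this is the same content packaged differently.
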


        
        Let $\PI(S, v) \coloneqq \Ex{(\max_{i \in S} \amorval_i - v)_+}$ be the expected additional profit gained from the Weitzman's index-based policy on a set~$S$ of unopened boxes, when the largest revealed value is~$v$. 
        Let $F(S,v)$ be $\sum_{i \in S} \Ex{(\amorval_i - v)_+}$.  We show $F(S, v)$ upper bounds and approximates $\PI(S, v)$ for large~$v$.

        
        \begin{restatable}{lemma}{lemmaApproxF}
        \label{lem:approxf}
           For any $S \subseteq [n]$ 
           and $v \geq \threshold$, we have  $ F(S,v) \geq \PI(S, v) \geq (1- \eps) \cdot  F(S,v)
           $.
        \end{restatable}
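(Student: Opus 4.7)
The plan is to introduce $X_i \coloneqq (\amorval_i - v)_+$, so that $F(S, v) = \sum_{i \in S} \Ex{X_i}$ and $\PI(S, v) = \Ex{\max_{i \in S} X_i}$, and reduce the lemma to two inequalities about these nonnegative random variables. The first, $F(S, v) \geq \PI(S, v)$, is immediate: $\max_i X_i \leq \sum_i X_i$ pointwise since each $X_i \geq 0$, and taking expectations yields it.

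For the reverse direction, I would bound the gap $F(S, v) - \PI(S, v) = \Ex{\sum_{i \in S} X_i - \max_{i \in S} X_i}$ using the pointwise inequality
\begin{align*}
\sum_{i \in S} X_i - \max_{i \in S} X_i \;\leq\; \sum_{i \in S} X_i \cdot \mathbf{1}\bigl[\exists j \in S \setminus \{i\} : X_j > 0\bigr].
\end{align*}
If at most one $X_i$ is positive, both sides are $0$ (the unique nonzero term on the right is killed by its indicator); if two or more are positive, every positive summand on the left is accounted for on the right. The reason for writing the bound in this specific form is to factor the random $X_i$, which depends only on $\vali$, from an indicator event determined by $\{\vali[j]\}_{j \neq i}$, so that independence of the values across boxes can be invoked cleanly.

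The critical step is then to apply Lemma~\ref{lem:no2jactpot}. Since $v \geq \threshold$, $X_i > 0$ requires $\min(\vali, \gittins_i) > \threshold$, i.e., box~$i$ has both index and value at least $\threshold$ --- a ``jackpot''. By the lemma, the probability that any jackpot exists is at most $\eps$, so $\Prx{\exists j \neq i : X_j > 0} \leq \eps$ for every $i$. Independence of $X_i$ and $\{X_j\}_{j \neq i}$ then gives $\Ex{X_i \cdot \mathbf{1}[\exists j \neq i : X_j > 0]} = \Ex{X_i} \cdot \Prx{\exists j \neq i : X_j > 0} \leq \eps \cdot \Ex{X_i}$, and summing over $i \in S$ yields $\Ex{\sum_i X_i - \max_i X_i} \leq \eps \cdot F(S, v)$, from which $\PI(S, v) \geq (1 - \eps) F(S, v)$ follows. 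The main subtlety is choosing the pointwise upper bound on $\sum - \max$ in a form that (i) is valid pointwise and (ii) factors $X_i$ from an event independent of it; once this decomposition is in place, Lemma~\ref{lem:no2jactpot} delivers the exact $(1 - \eps)$ constant in the statement without any further loss.
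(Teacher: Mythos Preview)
Your proof is correct. Both your argument and the paper's rest on the same two ingredients: independence of the $\amorval_i$'s across boxes, and Lemma~\ref{lem:no2jactpot}'s bound $\Prx{\exists j:\amorval_j>v}\leq\eps$. The decompositions, however, are complementary. The paper lower-bounds $\PI(S,v)$ directly: it writes $\PI=\sum_i \Ex{(\amorval_i-v)_+\,\mathbf{1}[M_i]}$ via the argmax event~$M_i$, then restricts each term to the sub-event $\{\amorval_i>v,\ \forall j\neq i:\amorval_j\leq v\}\subseteq M_i$ where $i$ is the \emph{unique} index exceeding~$v$; independence plus Lemma~\ref{lem:no2jactpot} show this sub-event captures at least a $(1-\eps)$ fraction of $\Ex{(\amorval_i-v)_+}$. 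You instead upper-bound the gap $F-\PI=\Ex{\sum_i X_i-\max_i X_i}$ pointwise by $\sum_i X_i\,\mathbf{1}[\exists j\neq i:X_j>0]$, i.e., you charge the loss to the complementary ``two-or-more jackpots'' event. Your route is arguably a bit more transparent: the pointwise inequality is immediate, and the factoring of $X_i$ from an event in $\{X_j\}_{j\neq i}$ makes the application of independence completely mechanical, avoiding the conditional expectations that the paper's argument threads through.
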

        
        From Lemma~\ref{lem:approxf}, we know $F([n], \threshold) < 2\cdot \OPT$ when $\eps$ is small. 
        We also have $F([n], \MAXV) = 0$, and that $F([n], v)$ is non-increasing in~$v$.
	The following discretization points can be found in polynomial time:
	\begin{definition}
	  Let $m = 2/\epsilon$.  Let $\threshold = \threshold_1 < \threshold_2 < \ldots < \threshold_m = \MAXV$ be such that for any $i < m$, $F([n], \threshold_i) - F([n], \threshold_{i+1}) < \eps \cdot \OPT$. 
        The discretization function $\UP(x)$ is: for $x \in [\threshold, \MAXV]$, $\UP(x)$ is the smallest $\threshold_i$ with $\threshold_i \geq x$; for $x < \threshold$, $\UP(x) = x$.
	  \label{def:thetas}
	\end{definition}
	
	    
	    Our choice of discretization points $\threshold_1, \cdots, \threshold_m$ is motivated by controlling the error in profit introduced when a \emph{revealed, large} value~$v$ is discretized to $\UP(v)$.
	    No approximation is guaranteed if one discretizes all values, revealed and unrevealed.
	    In particular, one may significantly overestimate the payoffs if one directly discretizes the value distribution using $\threshold_1, \cdots, \threshold_m$.
	    To address this, we use raw values to calculate marginal payoffs and only discretize the revealed values.
	    In the language of SSDP, we only discretize the state transition function, but not the marginal payoff function. A PTAS is made possible by the subtle fact that Theorem~\ref{thm:jian} only requires a small state space (and has no such requirement on the payoff function).
        
        
        Now we formally give our discretization.
        A PNOI instance~$B$ is discretized to an SSDP instance~$B^L$.
	Compared with the SSDP instantiation we gave for PNOI in Section~\ref{sec:SSDP}, $B^L$ modifies the {state transition} function by rounding up newly revealed values using $\UP(\cdot)$; however, given a current (rounded) state,  the marginal payoff function of~$B^L$ is given by the raw values.
        Formally:
        \begin{enumerate}
            
            \item 
            For each state $I$ and action $a_i^0, a_i^1$,
            the state transition function $f^L$ in $B^L$ is 
                        \begin{align*}
                             f^L(I, a_i^0) &= \UP(\max\{I, \val_i\}); \\
                             f^L(I, a_i^1) &= \max \{I, \Ex{\vali} \}; \\
                             f^L(I, \End) &= I.
                        \end{align*}
            \item  For each state~$I$ and action $a_i^0, a_i^1$, the marginal payoff function $G^L$ in~$B^L$ is:
                        \begin{align*}
                             G^L(I, a_i^0) &= \max\{I, \val_i\} - I - c_i; \\
                             G^L(I, a_i^1) &= \Ex{\vali} - I, \\
                             G^L(I, \End) &= 0.
                        \end{align*}
        \end{enumerate}
        
    
    \begin{restatable}{claim}{claimLPNOIPNOI}
	  \label{cl:LPNOI2PNOI}
	  Any policy $\alg^L$ for the L-PNOI instance~$B^L$ can be executed on the PNOI instance~$B$, with an expected payoff no less than that of the execution on~$B^L$.
    \end{restatable}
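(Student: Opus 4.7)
My approach is a coupling argument exploiting the fact that the PNOI-as-SSDP and the L-PNOI SSDP share identical marginal-payoff formulas, and differ only in the state-transition function---L-PNOI rounds up newly revealed values via $\UP(\cdot)$, while PNOI does not. Concretely, for every action $a \in \{a_i^0, a_i^1, \End\}$ the expected marginal payoff is a non-increasing function of the current state $I$: both $(\val_i - I)_+ - c_i$ and $\Ex{\val_i} - I$ weakly drop as $I$ grows. So if I can arrange a sample-path coupling in which the actions taken in the two executions agree and the PNOI state lies below the L-PNOI state, the per-step marginal payoffs on $B$ will dominate those on $B^L$, and summing then taking expectations yields the claim.

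To execute $\alg^L$ on $B$, I maintain in parallel two states: the true PNOI state $I_t$ (the maximum revealed value so far, or $0$ if none) and a virtual L-PNOI state $I^L_t$ updated from the same observed values via $f^L$. At each step I query $\alg^L$ on $I^L_t$ to determine the next action, execute that action in $B$, and use the observed $\val_i$ to update both $I_t$ via $f$ and $I^L_t$ via $f^L$. Since the observation history presented to $\alg^L$ is identical to what it would see on $B^L$, the action sequence is coupled realization-by-realization to that on $B^L$.

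The central inductive step shows $I_t \leq I^L_t$ on every sample path. The base case $I_0 = I^L_0 = 0$ is immediate; for the inductive step, after opening box $i$ we have
\[
I_{t+1} = \max\{I_t, \val_i\} \leq \max\{I^L_t, \val_i\} \leq \UP\bigl(\max\{I^L_t, \val_i\}\bigr) = I^L_{t+1},
\]
using the inductive hypothesis, monotonicity of $\max$, and $\UP(x) \geq x$; the other actions either end the process or preserve the inequality trivially. Combined with the per-action monotonicity of the (shared) payoff formula in $I$, this yields $G(I_t, a_t) \geq G^L(I^L_t, a_t)$ on every sample path, after which summing over steps and taking expectations over the coupled randomness delivers the desired inequality. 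I do not anticipate a serious obstacle---the argument is a clean coupling plus monotonicity check; the only point worth being careful about is feeding $\alg^L$ the virtual state $I^L_t$ rather than the true $I_t$, so that the two action sequences coincide and the comparison is a true pointwise one.
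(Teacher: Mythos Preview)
Your proposal is correct and is essentially the same argument the paper gives: couple the two executions on the same realized values, show by induction (using $\UP(v)\ge v$) that the PNOI state never exceeds the L-PNOI state, and conclude via the monotonicity of the shared marginal-payoff formula in~$I$. Your write-up is in fact more careful than the paper's three-sentence proof, particularly in spelling out that one must feed $\alg^L$ the virtual state $I^L_t$ so that the action sequences coincide realization-by-realization.
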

    
	\begin{definition}
	  In the L-PNOI instance~$B^L$, when its state $I^L$ is at least~$\threshold$ and the set of unopened boxes is~$S$, a policy is a \emph{quasi-index policy} if it exhaustively opens (in any order) all boxes $i \in S$ with $\Ex{G^L(I^L, a^0_i)} > 0$, but terminates once its state transits to a higher one.
	\end{definition}


        \begin{restatable}{lemma}{lemmaIndexLPNOI}
            \label{lem:index_LPNOI}
        	  For any $v \in \{\threshold_1, \ldots, \threshold_m \}$, $S \subseteq [n]$, let $W^L(S, v)$ be the expected marginal payoff of any quasi-index policy for~$B^L$.  We have $\PI^L(S, v) \geq (1-\eps) \cdot F(S, v)$.
        \end{restatable}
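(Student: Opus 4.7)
The plan is to unfold the quasi-index policy step by step and reduce the desired inequality to an almost-direct application of Lemma~\ref{lem:no2jactpot}. First I would identify the set of boxes the policy actually opens. Using $c_i = \Ex{(v_i - \gittins_i)_+}$ and $\amorval_i = \min\{v_i,\gittins_i\}$, for any $i$ with $\gittins_i > v$ one has
\[
\Ex{G^L(v, a_i^0)} \;=\; \Ex{(v_i - v)_+} - c_i \;=\; \Ex{(\amorval_i - v)_+},
\]
while for $\gittins_i \leq v$ this expectation is non-positive. Hence the set $S' := \{i \in S : \gittins_i > v\}$ is precisely the set of boxes eligible for the quasi-index policy, and $\sum_{i \in S'} \Ex{G^L(v, a_i^0)} = F(S,v)$.

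Next, I would fix any order $i_1, \ldots, i_{|S'|}$ in which the quasi-index policy opens $S'$ and expand $W^L(S,v)$ using independence. Since $f^L(v, a_i^0) = \UP(\max\{v, v_i\}) = v$ precisely when $v_i \leq v$, the state remains $v$ up to step $j$ iff none of $v_{i_1}, \ldots, v_{i_{j-1}}$ exceeds $v$. Because $v_{i_j}$ is independent of the earlier values and $G^L(v, a_{i_j}^0)$ depends only on $v_{i_j}$, the tower law yields
\[
W^L(S, v) \;=\; \sum_{j} \Bigl(\prod_{\ell < j}(1 - p_{i_\ell})\Bigr)\,\Ex{G^L(v, a_{i_j}^0)},
\]
where $p_i := \Prx{v_i > v}$. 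The definition of a quasi-index policy guarantees that every factor $\Ex{G^L(v, a_{i_j}^0)}$ is non-negative, so only a uniform lower bound on the product is needed.

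The main step is to obtain such a lower bound. Since $v \geq \threshold$ and each $i \in S'$ has $\gittins_i > v \geq \threshold$, and since $\{v_i > v\} \subseteq \{v_i \geq \threshold\}$, Lemma~\ref{lem:no2jactpot} gives
\[
1 - \prod_{i \in S'}(1 - p_i) \;=\; \Prx{\exists\, i \in S' : v_i > v} \;\leq\; \eps,
\]
so $\prod_{\ell < j}(1 - p_{i_\ell}) \geq \prod_{i \in S'}(1 - p_i) \geq 1 - \eps$ for every $j$. Plugging this into the expansion above and using $\sum_j \Ex{G^L(v, a_{i_j}^0)} = F(S,v)$ gives $W^L(S,v) \geq (1-\eps)\,F(S,v)$, as desired.

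I do not anticipate a hard step; the proof is largely a bookkeeping exercise. The only subtlety is aligning the definitions: one must verify that the two equalities $\Ex{G^L(v,a_i^0)} = \Ex{(\amorval_i - v)_+}$ and $\sum_{i \in S'} \Ex{(\amorval_i - v)_+} = F(S,v)$ really do collapse the marginal-payoff calculation into $F(S,v)$, and that Lemma~\ref{lem:no2jactpot}'s ``probability at most $\eps$'' bound, which is phrased as a union event at threshold $\threshold$, can be transferred to the event $\bigcup_{i \in S'}\{v_i > v\}$ at the possibly larger threshold $v$.
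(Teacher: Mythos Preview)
Your proposal is correct and follows essentially the same route as the paper's proof: identify the set of boxes with positive expected marginal payoff as those with $\gittins_i > v$, use Lemma~\ref{lem:no2jactpot} to show each such box is reached with probability at least $1-\eps$, and conclude that the total expected marginal payoff is at least $(1-\eps)F(S,v)$. Your version simply spells out the telescoping expansion and the transfer of the $\eps$ bound from threshold~$\threshold$ to~$v$ more explicitly than the paper does. One tiny quibble: $S'=\{i\in S:\gittins_i>v\}$ may strictly contain the set $\{i:\Ex{G^L(v,a_i^0)}>0\}$ (if some $i$ has $\gittins_i>v$ but $\Prx{v_i>v}=0$), but such boxes contribute zero to both sides, so the argument is unaffected.
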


    \begin{restatable}{lemma}{lemmaHighDiscret}\label{lem:high_discret}  
    Let $\OPT$ and $\OPT^L$ be the optimal expected payoff of the PNOI instance $\Ins$ and the L-PNOI instance $\Ins^L$, respectively.  Then $\OPT \geq \OPT^L \geq (1-O(\epsilon)) \cdot \OPT$.
    \end{restatable}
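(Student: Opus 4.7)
The plan is to prove the two inequalities separately. The direction $\OPT \geq \OPT^L$ is immediate from Claim~\ref{cl:LPNOI2PNOI}: if $\alg^L$ is an optimal policy for $\Ins^L$, executing it on~$\Ins$ yields expected payoff at least~$\OPT^L$. For the harder direction, I would start with an optimal policy~$\alg^*$ for~$\Ins$ and construct a policy $\alg^L$ on $\Ins^L$ that mimics $\alg^*$ step by step as long as no value at least $\threshold$ has been revealed, and switches to a quasi-index policy on the remaining boxes the moment some box reveals a value $v^* \geq \threshold$. The key structural observation is that while all revealed values stay below $\threshold$, the map $\UP(\cdot)$ is the identity, so the states and marginal payoffs of the two executions coincide step by step.

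For the analysis I would decompose the total payoff into three parts: marginal payoffs accumulated before any large value is revealed; the marginal payoff at the step that first reveals a value $v^* \geq \threshold$, from some box~$j$, leaving unopened set~$S$; and the remaining payoff thereafter. The first two parts coincide exactly between the two executions: at the revealing step, the marginal payoff is $v^* - I_{t-1} - c_j$ under both $G$ and $G^L$, because $G^L$ uses the raw value $v^*$ and the previous state $I_{t-1}<\threshold$ has not yet been rounded. Let $\mathcal E$ be the event that a value at least~$\threshold$ is ever revealed; if $\lnot \mathcal E$, the two executions produce identical payoffs, so all loss comes from the post-$v^*$ phase on the event $\mathcal E$. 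Since $v^* \geq \threshold \geq \OPT/\eps$ and every single box satisfies $\Ex{\vali}\leq \OPT \leq \eps v^*$, taking any box without inspection from state $v^*$ incurs negative expected marginal payoff and is avoided by any optimal policy (Lemma~\ref{lem:bad}); together with Lemma~\ref{lem:no2jactpot} this reduces $\alg^*$'s post-$v^*$ subproblem to classical Pandora, whose value is $\Ex{\PI(S, v^*)}$. On the $\Ins^L$ side, the quasi-index policy started from state $\UP(v^*)$ has expected payoff at least $(1-\eps) F(S, \UP(v^*))$ by Lemma~\ref{lem:index_LPNOI}. Using the additivity of $F(S,v) = \sum_{i \in S} \Ex{(\amorval_i - v)_+}$ in $S$, one gets $F(S,v^*) - F(S, \UP(v^*)) \leq F([n], v^*) - F([n], \UP(v^*)) \leq \eps \OPT$ by Definition~\ref{def:thetas}, and Lemma~\ref{lem:approxf} gives both $F(S, v^*) \geq \PI(S, v^*)$ and $F(S,v^*) \leq F([n], \threshold) \leq 2\OPT$. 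Assembling these bounds yields a post-$v^*$ loss of $O(\eps)\OPT$ in expectation, hence $\OPT^L \geq (1 - O(\eps))\OPT$.

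The main obstacle I anticipate is the delicate bookkeeping at the transition step: the state in $\Ins^L$ jumps to $\UP(v^*) \geq v^*$ while the step's marginal payoff is still computed with raw $v^*$, so one must be careful not to double-count the gap $\UP(v^*) - v^*$. Related care is required to extend the inequality $F([n], v^*) - F([n], \UP(v^*)) \leq \eps \OPT$ from Definition~\ref{def:thetas} to arbitrary $S \subseteq [n]$ (which follows from additivity of $F$ in $S$), and to use Lemma~\ref{lem:no2jactpot} to argue that $\alg^*$'s entire post-$v^*$ contribution is genuinely dominated by a single $\PI(S,v^*)$ term even if further large values could in principle be revealed along the way. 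Once these are threaded carefully, the three-part decomposition closes the argument.
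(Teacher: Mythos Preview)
Your proposal is correct and follows essentially the same approach as the paper's proof. Both construct $\alg^L$ by mimicking $\alg^*$ until a value $\geq \threshold$ appears and then switching to a quasi-index policy, decompose the payoff into the pre-transition part (which coincides in $\Ins$ and $\Ins^L$ because $\UP$ is the identity below~$\threshold$ and $G^L$ uses raw values) and the post-transition part, and bound the latter's loss via $\PI(S,v^*) \leq F(S,v^*)$, $F(S,\UP(v^*)) \geq F(S,v^*) - \eps\,\OPT$, and Lemma~\ref{lem:index_LPNOI}. Your explicit remarks on the additivity of $F$ in~$S$ (to pass from $[n]$ to arbitrary~$S$) and on $F(S,v^*)\leq F([n],\threshold)\leq 2\,\OPT$ fill in steps the paper uses implicitly.
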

    
    Finally, given any PNOI instance, one may first discretize its small values, followed by a discretization of the large values, and then apply Theorem~\ref{thm:jian}.  Theorem~\ref{thm:general_ptas} is thus proved.
    The formal argument is given in Appendix~\ref{sec:ptas-app}.
    

\Xcomment{
                Let $\alg$ be an arbitrary policy and $T$ its decision tree in the \Pandora setting.
                Note that $T$ can also be used to describe the execution of $\alg$ in the L-PNOI setting:
                \begin{itemize}
                    \item For each node $u \in T$, add an attribute $I^L_u \coloneqq \UP(I_u)$.
                    \hufu{This seems never used}
                    \item Define $G^L(u) \coloneqq \Ex{g^L(I^L_v, a_v)}$ as the expected marginal payoff at node $u$.
                \end{itemize}
                
                We have
                \begin{align*}
                    \P^L(\alg^*) &= \sum_{u \in T^*} G^L(u) \cdot \Phi(u) \\
                    &= \sum_{\substack{u \in T^* \\ I_u < \threshold}} G(u) \cdot \Phi(u) + \sum_{\substack{u \in T^* \\ I_u \geq \threshold}} G^L(u) \cdot \Phi(u)
                \end{align*}
                \begin{align}
                    \P(\alg^*) - \P^L(\alg^*) &= \sum_{\substack{u \in T^* \\ I_u \geq \threshold}} (G(u) - G^L(u)) \cdot \Phi(u) \\
                    &= \sum_{u \in \SUB} \sum_{w \in T(u)} (G(w) - G^L(w)) \cdot \Phi(w) \\
                    &= \sum_{u \in \SUB} \left(\sum_{\substack{w \in T(u) \\ I_w = I_u}} (G(w) - G^L(w)) \cdot \Phi(w) + \sum_{\substack{w \in T(u) \\ I_w > I_u}} (G(w) - G^L(w)) \cdot \Phi(w) \right)\\
                    &\leq \sum_{u \in \SUB} \left(\sum_{\substack{w \in T(u) \\ I_w = I_u}} (G(w) - G^L(w)) \cdot \Phi(u) + \sum_{\substack{w \in T(u) \\ I_w > I_u}} \OPT \cdot \Phi(w) \right)\\
                    &\leq \sum_{u \in \SUB} \left(\sum_{\substack{w \in T(u) \\ I_w = I_u}} (G(w) - G^L(w)) \cdot \Phi(u) +  \OPT \cdot \eps \cdot \Phi(u) \right)\\ \label{equ:no2jackpot2}
                    &= \sum_{u \in \SUB} \left( (F(T(u), I_u) - F(T(u), \UP(I_u))) \cdot \Phi(u) + \OPT \cdot \eps \cdot \Phi(u) \right)\\
                    &\leq \sum_{u \in \SUB} 2\eps\cdot \OPT \cdot \Phi(u) \\
                    &\leq 2 \eps \cdot \OPT
                \end{align}
            
                Inequality~\eqref{equ:no2jackpot2} is from Lemma~\ref{lem:no2jactpot}. 
}

\section{Conclusion and Open Problems}
\label{sec:conclusion}
In this work, we proved the first computational hardness result for PNOI and improved the state-of-the-art approximation algorithm by a PTAS.
There are other online decision problems where hardness results are missing, and approximation algorithms have been developed in the absence of a tractable optimal algorithm.
The price of information setting for bipartite matching is one such example \citep{singla18poi, gamlath19matching}. 

Our PTAS yields policies for PNOI with arbitrarily good approximations, but its running time has an exponential dependence on $\eps$ which is inherited from \citeauthor{fu2018ptas}'s framework. 
It is an attractive question whether there is an FPTAS for the problem.


\bibliographystyle{plainnat}
\bibliography{bibs.bib}

\appendix

\section{Omitted Proofs from Section~\ref{sec:prelim}}
\label{sec:prelim-app}

\thmindex*

\begin{proof}
Consider any policy for an instance of the classical Pandora box problem.   
We first derive an upper bound on the policy's expected payoff, and then show that the index-based policy achieves the upper bound.

For each $i$, let random variables $I_i$ and $A_i$ be the indicator variables for the events that the policy opens box~$i$ and takes box~$i$, respectively.
As the policy is not allowed to take a sealed box, we have $A_i \leq I_i$ with probability~$1$.
The policy's expected payoff is $\Ex{\sum_i (A_i \vali - I_i c_i)}$.

Note that the policy's decision to open box~$i$ is independent from~$\vali$, therefore $I_i$ and~$\vali$ are independent.  
This allows us to rewrite the expected payoff using the definition of the indices:
\begin{align*}
    \Ex{\sum_i (A_i \vali - I_i c_i)} & = \Ex{\sum_i [A_i \vali - I_i (\vali - \gittins_i)_+]} \\
    & \leq \Ex{\sum_i A_i [\vali - (\vali - \gittins_i)_+]}
    = \Ex{\sum_i A_i \amorval_i} \leq \Ex{\max_i \amorval_i}.
\end{align*}
where in the first inequality we used $A_i \leq I_i$,
in the ensuing equality we used the definition  $\amorval_i = \min \{\vali, \gittins_i\}$, 
and the last inequality follows from the constraint that $\sum_i A_i \leq 1$ with probability~$1$. 

Let us see that the index-based policy's payoff is precisely this upper bound. 
Consider properties of a policy that would turn the two inequalities in the chain into equalities:
\begin{enumerate}
    \item \label{item:non-exposed}
    $I_i(\vali - \gittins_i)_+ = A_i(\vali - \gittins_i)_+$ with probability 1 if, whenever the policy opens a box~$i$ and finds $\vali > \gittins_i$, the policy takes box~$i$;
    \item $\sum_i A_i \amorval_i = \max_i \amorval_i$ with probability 1 if the policy always takes the box with the maximum $\amorval_i$.
\end{enumerate}
It is straightforward to verify that the index-based policy have both properties, and hence attains a payoff equal to the upper bound $\Ex{\max_i \amorval_i}$.
The index-based policy therefore achieves maximum payoff among all policies.
\end{proof}

\subsection{Proof for the Structure Theorem~\ref{thm:structure}}

\thmstructure*

The following terminologies we inherit from \citet{GMS08}.

\begin{definition}
  We say a box is a \emph{backup} in the execution of a policy, if the box is taken without inspection.
\end{definition}

\begin{definition}[$\leV$ tree and $\leV$ path]
For $V \geq 0$, a \emph{$\leV$ tree} is a decision tree that makes the same decisions irrespective of the values of the probed boxes, as long as these values are less than or equal to $V$.
In a $\leV$ tree, decisions constitute a path if all observed values are no larger than $V$; such a path is called a \emph{$\leV$ path}.
\end{definition}


\begin{lemma}
\label{lm:induction}
Suppose an optimal policy $\alg^*$ probes a box $B_j$ at a node $m$ in its decision tree, and suppose when $B_j$ is observed to take value $V$, $\alg^*$ takes a backup box somewhere down the tree.
Then there exists another optimal policy $\alg'$ which has the same decision tree as~$\alg^*$ except possibly for the subtree rooted at $m$.
In $\alg'$, the subtree rooted at $m$ is a $\leV$ tree and takes a backup box at the end of its $\leV$ path.
\end{lemma}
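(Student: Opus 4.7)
The plan is to reshape the subtree of $\alg^*$ rooted at $m$ so that it becomes $\leV$, using a backup-ending path already present in $\alg^*$. Since, by hypothesis, $\alg^*$ reaches a backup action somewhere inside the $B_j = V$ subtree, I would first identify one such branch: a sequence of probes $B_j, B_{i_1}, \ldots, B_{i_k}$ followed by the action ``take backup $B_b$'' for some box $B_b$.

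I would then define $\alg'$ to agree with $\alg^*$ outside the subtree at $m$. At $m$, $\alg'$ probes $B_j, B_{i_1}, \ldots, B_{i_k}$ in this fixed order; if every probe in this prefix yields a value at most $V$, $\alg'$ terminates by taking backup $B_b$; if some probe yields a value strictly greater than $V$, $\alg'$ transitions into the subtree that $\alg^*$ would execute from the state corresponding to the same set of probed boxes with the currently observed values. By construction, any realization of values in which all these probes stay $\leq V$ traces the same fixed sequence of probes in $\alg'$ and terminates in a backup, so the subtree at $m$ is indeed a $\leV$ tree whose $\leV$ path ends with a backup.

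The crux is to show $\P(\alg') \geq \P(\alg^*)$. I would proceed by induction on the path length $k+1$. For the base case $k = 0$, $\alg'$ coincides with $\alg^*$. For the inductive step I would use an exchange argument: swap two adjacent probes among the first $k+1$ probes of $\alg^*$'s subtree at $m$, and re-wire the dependent continuations so that each observed value is routed to the same continuation as in $\alg^*$. The expected probe costs and the joint distribution of the values revealed are both invariant under this permutation, so payoff is preserved; iterating this swap rearranges $\alg^*$'s subtree into the form of $\alg'$. Crucially, along the extracted backup path all observed values can be aggregated into a single terminal backup decision, which makes the re-wiring well-defined.

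The main obstacle I anticipate is precisely this re-wiring: when $\alg'$ deviates from its fixed probe order because some $B_{i_s}$ revealed a value exceeding $V$, I need $\alg'$'s continuation to be a legitimate continuation of $\alg^*$ with the same set of boxes already probed but possibly in a different internal order. Formalizing this will require carefully tracking, for each observation history, the set of probed boxes and the induced map from histories to subtrees of $\alg^*$. I expect that a direct coupling between sample paths of $\alg^*$ and $\alg'$—matching them by the multi-set of probes executed and the values observed before the first value exceeding $V$—will make the payoff comparison pathwise, after which optimality of $\alg^*$ delivers $\P(\alg') \geq \P(\alg^*)$.
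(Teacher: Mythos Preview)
Your proposal has a real gap: the re-wiring step is not well defined, and the exchange argument cannot be carried out as stated.

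When $\alg'$ has probed $B_j, B_{i_1}, \ldots, B_{i_{s-1}}$ (all with values at most $V$) and then sees $v_{B_{i_s}} > V$, you want to hand control to ``the subtree that $\alg^*$ would execute from the same set of probed boxes with the currently observed values.'' But that subtree typically does not exist in $\alg^*$. The sequence $B_j, B_{i_1}, \ldots$ appears in $\alg^*$ only along the single history of values that led to the backup; for any other values $\leq V$ observed at earlier steps, $\alg^*$ may branch off and never probe $B_{i_s}$ at all. Your adjacent-swap argument has the same defect: to swap $B_j$ and $B_{i_1}$ you need continuations for every pair $(v_{B_{i_1}}, v_{B_j})$, whereas $\alg^*$ supplies them only for the particular $v_{B_j}$ seen on the backup path. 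A coupling cannot repair this, because the issue is not one of matching distributions but that the target subtree is absent. Even the base case $k = 0$ is wrong as stated: there $\alg'$ takes the backup whenever $v_{B_j} \leq V$, while $\alg^*$ may do something entirely different when $v_{B_j} = h < V$, so the two policies do not coincide.

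The paper proceeds quite differently. It inducts on the value~$V$ (largest first), not on path length, and the key move is a \emph{replacement}, not a reordering: letting $T_h$ be $\alg^*$'s subtree after $v_{B_j} = h$, one replaces every $T_h$ with $h < V$ by a copy of $T_V$. This is feasible because $B_j$ is never used inside $T_V$ (along its $\leV$ path a backup is eventually taken; off that path some value exceeding $V$ has been seen, dominating $v_{B_j}$). And it does not hurt payoff by optimality together with monotonicity in the state: running $T_h$ from the higher state $V$ can only help, and $T_V$ is optimal from state $V$; hence the payoff of $T_V$ from state $h$ equals that from state $V$, which dominates that of $T_h$ from $V$, which dominates that of $T_h$ from $h$. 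The induction on $V$ is used beforehand to arrange that the $\leV$ path of $T_V$ itself ends in a backup. This replace-by-$T_V$ step is precisely the idea your approach is missing.
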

 
\begin{proof}
  The proof proceeds via induction on the value~$V$. 
  
  (Base:) If $V$ is the highest value in the support of all value distributions, the statement is true since it is optimal to end the game immediately and never take a backup (and so the condition cannot be satisfied).
  
(Inductive step:)  Now suppose the statement is true for values larger than~$V$. 
Let $u$ be the child node of~$m$ where box~$B_j$ yields value~$V$, then the subtree rooted at~$u$ is without loss of generality a $\leV$ tree.
Suppose somewhere in this subtree, a backup is taken, we first show that it is without loss of generality to assume that the end of the $\leV$ path from $u$ is a backup.
Suppose this is not the case, 
then some node along the $\leV$ path from~$u$ has a child not on this path which has a backup descendant; let $w$ be such a node (on the $\leV$ path) closest to~$u$.  
($w$ may be $u$ itself.) 
Since the said child of~$w$ with a backup descendant is not on the $\leV$ path, the box opened at~$w$ must yield a value $V' > V$ to arrive at that child.
By the induction hypothesis, we may assume that the subtree rooted at~$w$ is a $\leq_{V'}$ tree, and the $\leq_{V'}$ path originating from~$w$ ends in a backup.
But the $\leq_{V'}$ path from~$w$ is part of the $\leV$ path from~$u$, and therefore the end of the $\leV$ path from $u$ is a backup.

For each value~$h$ possibly taken by box~$B_j$, let $T_h$ be the subtree rooted at the child of~$w$ where $v_{B_j} = h$.
(So $T_V$ is the subtree rooted at~$u$.)
The crucial observation is that, for every $h < V$, one may replace the subtree $T_h$ by $T_{V}$ without decreasing the expected payoff.
(To see this, note first that box~$B_j$ is never used in $T_{V}$: it is not used along the $\leV$ path because a backup is used in the end; $B_j$ is not used elsewhere either, because one must see a value larger than $V$ to leave the $\leV$ path in the first place, and that value is preferred to $v_{B_j} = V$.  Therefore, it is feasible to replace $T_h$ by $T_{V}$.  Such a replacement does not decrease the expected payoff, because the expected payoff of $T_h$ is no more than that of $T_{V}$ for $h < V$, by the optimality of~$\alg^*$.)
After such replacements, the subtree rooted at~$m$ becomes a $\leV$ tree, and its endpoint is a backup, the same backup as the end of the $\leV$ path from~$u$.
The resulting policy is the $\alg^*$ stated by the lemma.



\end{proof}

\begin{proof}[Proof of Theorem~\ref{thm:struc}]
  Let $\alg^*$ be an optimal policy.
Consider the non-trivial case when $\alg^*$ probes at least one box and takes a backup box somewhere.
Suppose $\alg^*$ probes box~$i$ first.
 Let $V_i$ be the maximum value taken by box~$i$ such that a backup is possibly taken later.  
Then, applying Lemma~\ref{lm:induction} to the root of the decision tree, we may assume the decision tree is a $\leq_{V_i}$ tree. 
If box~$i$ yields a value larger than $V_i$, then by the maximality of~$V_i$, no backup is used downstream, and (without loss of generality) $\alg^*$ follows the index policy on the remaining boxes, taking $v_i$ as a free outside option.
Now apply the same argument to the next node along the $\leq_{V_i}$ path.  The theorem follows by a repeated application of this argument until the next node in line is a backup: as long as the next node probes a box, the threshold of that box is given by the largest value it can take to still see some backup somewhere downstream; the ordering $\sigma$ is given by the order in which boxes are probed by the nodes on which Lemma~\ref{lm:induction} is applied.  
\end{proof}

\section{Appendix for Section~\ref{sec:hardness}}
\label{sec:hardness-app}
\supporttwo*
\begin{proof}
A few observations are in order.  
\begin{enumerate}[(1)]
\item A box~$i$ to be taken without inspection can be seen as a box with deterministic value~$\Ex{\vali}$ with no search cost.
Therefore, by the optimality of the index-based policy in the classic Pandora box problem,
one should never take a box~$i$ without inspection if some other unopened box has index larger than~$\Ex{\vali}$.

\item  When a box yields value~$1$ upon inspection, it is payoff-optimal to select the box immediately and quit.  

\item For any policy satisfying (2), the way in which it opens boxes is completely described by a subset $S \subseteq [n]$ and a permutation~$\pi$ on~$S$.
The policy opens boxes in~$S$ in the order specified by~$\pi$: if a box yields value~$1$, the box is taken immediately and the search terminates; otherwise, this goes till all boxes in~$S$ are opened and yield value~$0$, at which point the algorithm may terminate or take a box not in~$S$ without inspection.

\end{enumerate}
From these observations, it is without loss of generality to consider policies that: 
(i) commit to a certain box~$i$ that is possibly selected without inspection;
(ii) inspect boxes with indices at least~$\Ex{\vali}$ in decreasing order of their indices, and if a value~$1$ is found, select that box and quit;
(iii) when all boxes in step~(ii) yield value~$0$, take box~$i$ without inspection and quit.

There are altogether $n$ such policies (up to tie-breaking in step~(ii), which does not matter). 
We can enumerate them and choose the best one in polynomial time.

\end{proof}
\normal*
\begin{proof}
We prove the three properties of a normal policy in order. 

\begin{itemize}
\item It is straightforward to see that $\alg$ should take the last box without inspection if all previous boxes yield value~0.
To see that this is the only situation $\alg$ should bypass inspection, recall by observation~(1) in the proof of Proposition~\ref{prop:support2} that an optimal policy should not take a box without inspection if there are other unopened boxes with higher indexes. 
Note that, for any two boxes $i$ and~$j$, $\Ex{\vali} < \frac{1}{2} \leq \gittins_j$ by definition of LCLRS3 instances.

\item It is straightforward that $\alg$ stops when it sees value~$1$ --- no other box can yield higher values in an LCLRS3 instance, and opening more boxes strictly diminishes the payoff.

\item Since any unopened box has expected value strictly smaller than~$\tfrac 1 2$, $\alg$ should never take an unopened box without inspection if a value~$\tfrac 1 2$ is already seen.
In other words, with value $\tfrac 1 2$ seen, $\alg$ ignores the option to bypass inspection, and the problem degenerates to the classical Pandora box problem for the remaining boxes.
Therefore, after a value~$\tfrac 1 2$ is seen, $\alg$ runs the index-based policy on the remaining boxes.
\end{itemize}
\end{proof}

\utilitypermutation*
\begin{proof}
We first modify $\alg$ to obtain another policy~$\alg'$. 
$\alg'$ has the same behavior as~$\alg$ in all situations, except that when $\alg$ takes the last box $\sigma(n)$ without opening it, $\alg'$ opens box~$\sigma(n)$, pays the cost~$c_{\sigma(n)}$, and takes it.
Since $\alg$ is normal, it takes box~$\sigma(n)$ only when all the other boxes have yielded value~$0$, which happens with probability $\prod_{i=1}^{n-1} r_{\sigma(i)}$. 
So we have 
\begin{align}
\label{eq:a-a'-diff}
    \P(\alg) = \P(\alg') + c_{\sigma(n)}\prod_{i = 1}^{n-1} r_{\sigma(i)}.
    \end{align}

Next, we derive a simple expression for $\P(\alg')$ and then show that the difference between $\P(\alg')$ and $\P(\alg_P)$ gives rise to the second term in~\eqref{eq:diff-pandora}.
We claim $\P(\alg') = \sum_i \Ex{A'_i \amorval_i}$, where $A_i'$ is the indicator variable for the event that $\alg'$ takes box~$i$.
To see this, we make two observations, which amounts to showing that $\alg'$ is non-exposed (Definition~\ref{def:non-exposed}).
Let $I'_i$ be the indicator variable for the event that $\alg'$ opens box~$i$.  

  \begin{itemize}
    \item $\alg'$ never exercises the option to take a box without opening it, so $A'_i \leq I'_i$ with probability~$1$, for all~$i$.
    \item Whenever $\alg'$ opens box~$i$ and sees $\vali > \gittins_i$, $\alg'$ immediately takes box~$i$.
    To see this, if $i = \sigma(n)$, by definition $\alg'$ takes the box after opening it.
    For $i \neq \sigma(n)$, by definition of LCLRS3, $\gittins_i \geq \frac 1 2$, so $\vali > \gittins_i$ implies $\vali = 1$.
    Since $\alg$ is normal, it immediately takes box~$i$ when seeing $\vali = 1$; $\alg'$ copies the behavior of~$\alg$ for $i \neq \sigma(n)$, and hence also immediately takes it.
\end{itemize}




From the second observation, we have $(\vali - \gittins_i)_+ I'_i = (\vali - \gittins)_+ A'_i$ with probability~$1$.
Therefore
  \begin{align*}
    \P(\alg') & = \sum_i \Ex{\vali A'_i - c_i I'_i} = \sum_i \Ex{\vali A'_i - (\vali - \gittins_i)_+ I'_i} \\
    & = \sum_i \Ex{\vali A'_i - (\vali - \gittins_i)_+ A'_i} = \sum_i \Ex{A'_i \amorval_i}.
  \end{align*}
  
  We now compare $\P(\alg_P)$ and $\P(\alg')$.  
  By Theorem~\ref{thm:index},  $\P(\alg_P)=\Ex{\max_i \amorval_i}$.
  For every realization of $\amorval_i$'s, $\max_i \amorval_i \geq \sum_i A'_i \amorval_i$.  
  The inequality is strict only when $\alg'$ takes some box~$i$ with $\amorval_i < \max_j \amorval_j$.
  This can happen only when all boxes opened before~$i$ yield value~$0$, and $\vali = 1$, in which case $\amorval_i = \gittins_i$.
  This happens with probability $p_i \prod_{j = 1}^{\sigma^{-1}(i) - 1} r_{\sigma(j)}$;
  the expected contribution to the difference between $\max_i \amorval_i$ and $\gittins_i$ conditioning on this happening is $\Ex{(\max_{j: \sigma^{-1}(j) > \sigma^{-1}(i)} \gittins_j - \gittins_i)_+}$, which is just $g(i, T_{\sigma}(i))$ as defined in the statement of the lemma.
  Therefore, overall, we have
  \begin{align}
      \P(\alg_P) - \P(\alg') &= \Ex{\max_i \amorval_i} - \sum_i \Ex{A'_i \amorval_i}  \nonumber
      \\
      &= \sum_i \Pr[A'_i = 1] \cdot \Ex{\max_j \amorval_j - \amorval_i \mid A'_i = 1} \nonumber \\
      &= \sum_i p_{i}g(i, T_\sigma(i)) \prod_{j = 1}^{\sigma^{-1}(i)-1} r_{\sigma(j)}.
  \label{eq:ap-a'-diff}
  \end{align}
  
Combining \eqref{eq:a-a'-diff} and \eqref{eq:ap-a'-diff}, we have
\begin{align*}
    \P(\alg_P)=\P(\alg)+ \sum_i p_{i}g(i, T_\sigma(i)) \prod_{j = 1}^{\sigma^{-1}(i)-1} r_{\sigma(j)} 
      - c_{\sigma(n)}\prod_{i = 1}^{n-1} r_{\sigma(i)}.
\end{align*}
\end{proof}

\claimFixFinalBox*
\begin{proof}
  By \eqref{eq:object}, it is easy to verify that when $n\ge 1$ and $\sigma^*(n+2)={n+2}$ one has
  \begin{align*}
        & ~\Ut(\sigma^*)\\
        \geq& ~c_{n+2}r_{n+1}(1-2^{-6n})^{n}-np_{n+2}\max_{i}p_i(\tau_H-\tau_i) - p_{n+1}\Big((\tau_H-\tau_L)p_{n+2}+n\max_{i\in[n]}p_i(\tau_i-\tau_L)\Big) \\
        \geq& ~ \frac{40}{32\Gamma}(1-2^{-6n})^n- \frac{1}{32\Gamma} - O(n\Delta^2)\\
        \geq& ~\frac{38}{32\Gamma}.
  \end{align*}
  
For any permutation $\sigma$ with $\sigma(n+2)\in[n]$, we have
  \begin{align*}
      \Ut(\sigma)\leq  r_{n+1}\max_{i\in[n]}c_i \leq O(\Delta^2)
  \end{align*}
  as $c_i=\frac{p_i}{\tau_i+1}\leq p_i$.

For any permutation $\sigma$ with $\sigma(n+2)=n+1$, we have
\begin{align*}
    \Ut(\sigma)\leq r_{n+2}c_{n+1}
    \leq \frac{3}{8\Gamma}<\frac{38}{32\Gamma}.
\end{align*}
Hence if some permutation $\sigma^*$ maximize Equation~\eqref{eq:object}, then $\sigma^*(n+2)=n+2$.
\end{proof}

\subsection{Omitted Proof from Appendix~\ref{sec:reduction-correct}}
\label{sec:proof_for_sketch}
For ease of presentation, we introduce the following notations.
\begin{definition}
    \begin{align*}
        g_{H} & \coloneqq p_{n+2}(\tau_{n+2}-\tau_H); \\
        g_L & \coloneqq \left[1-\prod_{i\in \Tat}(1-p_i)\right](1-p_{n+2})(\tau_{H}-\tau_L)+p_{n+2}(\tau_{n+2}-\tau_L); \\
        g_i & \coloneqq g(i, T_{\sigma}(i)), \quad \textnormal{for } i = 1, \ldots, n + 1.\\
    \end{align*}
\end{definition}

\begin{restatable}{claim}{claimGi}
    \label{clm:g_ig_n+1}
        \begin{align*}
        g_i & = g_{H} - \frac{p_i p_{n+1} (1-p_{n+2})(\tau_H - \tau_L)} 2 \pm O(n\Delta^3), \quad \textnormal{for } i = 1, \ldots, n; \\
        g_{n+1} & = g_{L} \pm O(n\Delta^3).
        \end{align*}
\end{restatable}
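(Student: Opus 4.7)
\textbf{Proof proposal for Claim~\ref{clm:g_ig_n+1}.}
The plan is to expand $g(i,T_\sigma(i))$ directly via a short case analysis on which box attains the max of $\amorval_j$ over $T_\sigma(i)$, exploiting the fact (Claim~\ref{clm:fix_final_box}) that $\sigma(n{+}2)=n{+}2$ so that box $n{+}2$ is always in $T_\sigma(i)$, and that every $\tau_j$ for $j\in[n]$ is within $O(\Delta^2)$ of $\tau_H$. Since $\tau_i>\tau_H>\tau_L$ for $i\in[n]$ and $\tau_{n+2}=3/4$, we have $\tau_{n+2}>\tau_j$ for every $j\le n+1$, while $\tau_{n+1}=\tau_L<\tau_i$ for $i\in[n]$. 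So $T_\sigma(i)=\{n{+}2\}\cup T$ with $T\subseteq[n]$ consisting of boxes after $i$ in $\sigma$ whose index strictly exceeds $\tau_i$; and $T_\sigma(n{+}1)=\Tat\cup\{n{+}2\}$, since $\Tbf$ is ordered before $n{+}1$ and $\tau_{n+1}=\tau_L$ is not strictly less than itself.

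For $i\in[n]$, split on $v_{n+2}$. If $v_{n+2}=1$ (prob.\ $p_{n+2}$), then $\amorval_{n+2}=\tau_{n+2}$ dominates every $\tau_j$ with $j\in T$ (strictly, since all $\tau_j<3/4$), so the max is exactly $\tau_{n+2}$ and the contribution to $g_i$ is
\begin{align*}
 p_{n+2}(\tau_{n+2}-\tau_i)
  &= p_{n+2}(\tau_{n+2}-\tau_H)-p_{n+2}(\tau_i-\tau_H) \\
  &= g_H-\tfrac{p_ip_{n+1}(1-p_{n+2})(\tau_H-\tau_L)}{2},
\end{align*}
by substituting the closed form of $\tau_i$ from Section~\ref{sec:reduction}. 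If instead $v_{n+2}\le 1/2$ (prob.\ $1-p_{n+2}$), then $\amorval_{n+2}\le 1/2<\tau_i$, and the max can exceed $\tau_i$ only on the event $A=\{\exists\,j\in T:v_j=1\}$. This event has probability at most $\sum_{j\in T}p_j\le n\Delta$, and conditional on $A$ the excess $\max_{j\in T,\,v_j=1}\tau_j-\tau_i$ is in $(0,O(\Delta^2)]$ because every $\tau_j,\tau_i=\tau_H\pm O(\Delta^2)$. Thus this second case contributes $O(n\Delta^3)$, yielding the first formula.

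For $g_{n+1}$, let $M=\max_{j\in\Tat}\amorval_j$ and $A=\{\exists\,j\in\Tat:v_j=1\}$, so $\Pr[A]=1-\prod_{j\in\Tat}(1-p_j)$; conditional on $A$, $M=\max_{j\in\Tat,\,v_j=1}\tau_j=\tau_H\pm O(\Delta^2)$, otherwise $M\le 1/2=\tau_L$. Case-split on $v_{n+2}$: if $v_{n+2}=1$ (prob.\ $p_{n+2}$) the max is $\tau_{n+2}$, contributing $p_{n+2}(\tau_{n+2}-\tau_L)$; if $v_{n+2}\in\{0,1/2\}$ (prob.\ $1-p_{n+2}$) then $\amorval_{n+2}\le\tau_L$ and the max exceeds $\tau_L$ iff $A$ occurs, in which case the excess is $\tau_H-\tau_L\pm O(\Delta^2)$, contributing $(1-p_{n+2})\Pr[A](\tau_H-\tau_L)\pm O(n\Delta^3)$. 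Summing gives $g_L\pm O(n\Delta^3)$.

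The only real obstacle is careful bookkeeping of the $O(n\Delta^3)$ slack: the heavy lifting is the identity $\tau_j=\tau_H+c\,p_j$ with $c=\tfrac{p_{n+1}(1-p_{n+2})(\tau_H-\tau_L)}{2p_{n+2}}=O(\Delta)$, which simultaneously makes the leading term $p_{n+2}(\tau_{n+2}-\tau_i)$ match the target exactly and guarantees that the non-$B_{n+2}$ contributions fall into the error budget. No further tools are needed beyond the parameter setup in Section~\ref{sec:reduction} and $\sigma(n{+}2)=n{+}2$.
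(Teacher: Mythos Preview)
Your proposal is correct and follows essentially the same route as the paper's proof: both split on whether $v_{n+2}=1$, use the exact identity $p_{n+2}(\tau_{n+2}-\tau_i)=g_H-\tfrac{p_ip_{n+1}(1-p_{n+2})(\tau_H-\tau_L)}{2}$ (via the closed form of $\tau_i$), and bound the residual contribution from boxes in $[n]$ by $O(n\Delta^3)$ using $|\tau_j-\tau_i|=O(\Delta^2)$ and $\sum_j p_j\le n\Delta$. Your treatment of $g_{n+1}$ is likewise the same case split as the paper's, just phrased slightly more explicitly.
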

    \begin{proof}
        Recall that $g(i, T) \coloneqq \Ex{(\max_{j \in T} \amorval_j - \gittins_i)_+}$ and $\amorval_i = \min \{\vali, \gittins_i\}$. Also, $\tau_{n+2}>\tau_i > \gittins_H >\tau_L = 1/2$ for all $i \in [n]$ in our LCLRS3 instance.
        Therefore, for each $j \neq i$,
        \begin{displaymath}
            (\amorval_j - \gittins_i)_+ = \left\{
            \begin{array}{ll}
                0, &\textrm{ if $\val_j \leq 1/2$;} \\
                (\gittins_j - \gittins_i)_+, &\textrm{ if $\val_j = 1$.}
            \end{array}
            \right.
        \end{displaymath}
        Since $\gittins_{n+2}$ is by far the largest among all indices, and $\gittins_{n+1} = \gittins_L$ is the lowest index,
        we have for each $i \in [n]$, 
        \begin{align*}
            g_i &= p_{n+2}(\tau_{n+2}-\tau_i) + (1-p_{n+2})\Ex{\max_{j \in T_{\sigma}(i) \backslash \{n+2\}} (\amorval_j - \gittins_i)_+} \\
            &= p_{n+2}(\tau_{n+2}-\tau_H) - \frac 1 2 p_i p_{n+1} (1-p_{n+2})(\tau_H - \tau_L) + (1-p_{n+2})\Ex{\max_{j \in T_{\sigma}(i) \backslash \{n+2\}} (\amorval_j - \gittins_i)_+} \\
            &\leq g_{H} - \frac 1 2 p_i p_{n+1} (1-p_{n+2})(\tau_H - \tau_L) + (1-p_{n+2})\sum_{j \in T_{\sigma}(i) \backslash \{n+2\}} p_j (\gittins_j - \gittins_i)_+ \\
            &\leq g_{H} - \frac 1 2  p_i p_{n+1} (1-p_{n+2})(\tau_H - \tau_L) \pm O(n\Delta^3).
        \end{align*}
        The last inequality is from the fact that $\tau_i = \tau_H + O(\Delta^2)$ for any $i \in [n]$.
        
        Similarly,
        \begin{align*}
            g_{n+1} &= p_{n+2}(\tau_{n+2}-\tau_L) + (1-p_{n+2})\Ex{\max_{j \in \Tat} (\amorval_j - \tau_L)_+} \\
            &\leq p_{n+2}(\tau_{n+2}-\tau_L) + (1-p_{n+2})(1-\Pi_{i\in \Tat}(1-p_i)){\max_{j \in \Tat} (\tau_j - \tau_L)} \\
            &\leq p_{n+2}(\tau_{n+2}-\tau_L) + (1-p_{n+2})(1-\Pi_{i\in \Tat}(1-p_i))(\tau_H \pm O(\Delta^2)- \tau_L)\\
            &= g_{L} \pm O(n\Delta^3).
        \end{align*}
    \end{proof}

 \lemmaFromScheduling*
    \begin{proof}
We prove this lemma by induction on $n$.
When $n=1$, we get $p_1r_0=p_1=c(1-r_1)$ by the assumption.

Suppose one can have $\sum_{i=1}^{n}p_i\prod_{j=0}^{i-1}r_j=c(1-\prod_{i=1}^{n}r_i)$,
then
\begin{align*}
    \sum_{i=1}^{n+1}p_i\prod_{j=0}^{i-1}r_j=& \sum_{i=1}^{n}p_i\prod_{j=0}^{i-1}r_j+p_{n+1}\prod_{j=0}^nr_j\\
    =& c(1-\prod_{i=1}^{n}r_i)+p_{n+1}\prod_{j=0}^nr_j\\
    =& c(1-\prod_{i=1}^{n}r_i)+c(1-r_{n+1})\prod_{j=0}^nr_j\\
    =& c(1-\prod_{i=1}^{n+1}r_i).
\end{align*}
\end{proof}

\claimBeforeApprox*

\begin{proof}
First, we show
    \begin{align*}
        \Ls(\sigma)=&\frac {g_H}2 \left(1-\prod_{i\in \Tbf}r_i \right)  + g_L p_{n+1}\prod_{i\in\Tbf}r_i+\frac{g_Hr_{n+1}} 2\prod_{i\in \Tbf}r_i \left(1-\prod_{i\in\Tat}r_i \right)\\
        &~+ \frac 1 2 \sum_{i\in \Tbf} p_{i}^2 p_{n+1}(1-p_{n+2})(\tau_H - \tau_L) \pm O(n^2\Delta^4).
    \end{align*}
    The claim follows by plugging in the parameters.
One has
        \begin{align*}
        \Ls(\sigma) 
        =& \sum_i p_{i} g(i, T_\sigma(i)) \prod_{j = 1}^{\sigma^{-1}(i)-1} r_{\sigma(j)} \\
        =& \sum_{i\in \Tbf} p_{i} g_i \prod_{j = 1}^{\sigma^{-1}(i)-1} r_{\sigma(j)} + p_{n+1}g_{n+1}\prod_{i\in\Tbf}r_i+\sum_{i\in\Tat}p_i g_i \prod_{j = 1}^{\sigma^{-1}(i)-1} r_{\sigma(j)} \\
        =& \sum_{i\in \Tbf} p_{i} \left(g_H - \frac 1 2 p_i p_{n+1} (1-p_{n+2})(\tau_H - \tau_L) \pm O(n\Delta^3)\right) \prod_{j = 1}^{\sigma^{-1}(i)-1} r_{\sigma(j)} \\
        &+ p_{n+1} (g_L \pm O(n\Delta^3)) \prod_{i\in\Tbf}r_i +r_{n+1}\prod_{s\in \Tbf}r_s\sum_{i \in \Tat}p_i g_i \prod_{j=|\Tbf+2|}^{\sigma^{-1}(i)-1}r_{\sigma(j)}\\
        =& \sum_{i\in \Tbf} p_{i} g_H \prod_{j = 1}^{\sigma^{-1}(i)-1} r_{\sigma(j)} - \sum_{i\in \Tbf} \left(\frac 1 2 p_{i}^2 p_{n+1}(1-p_{n+2})(\tau_H - \tau_L) \right) \prod_{j = 1}^{\sigma^{-1}(i)-1}r_{\sigma(j)}\\
        &+
        p_{n+1} g_L \prod_{i\in\Tbf}r_i+r_{n+1}\prod_{s\in \Tbf}r_s\sum_{i\in \Tat} \prod_{j=|\Tbf+2|}^{\sigma^{-1}(i)-1}r_{\sigma(j)}p_ig_i  \pm O(n^2\Delta^4)\\
        =& \sum_{i\in \Tbf} p_{i} g_H \prod_{j = 1}^{\sigma^{-1}(i)-1} r_{\sigma(j)} - \sum_{i\in \Tbf} \left(\frac {1} {2} p_{i}^2 p_{n+1}(1-p_{n+2})(\tau_H - \tau_L)\right) \prod_{j = 1}^{\sigma^{-1}(i)-1}r_{\sigma(j)}\\
        & +r_{n+1}\prod_{s\in \Tbf}r_s\sum_{i\in \Tat} \prod_{j=|\Tbf+2|}^{\sigma^{-1}(i)-1}r_{\sigma(j)}p_i\Big(g_{H} - \frac{p_i p_{n+1} (1-p_{n+2})(\tau_H - \tau_L)} 2 \pm O (n\Delta^3) \Big)\\
        & +  p_{n+1} g_L \prod_{i\in\Tbf}r_i\pm O(n^2\Delta^4)\\
        =& \sum_{i\in \Tbf} p_{i} g_H \prod_{j = 1}^{\sigma^{-1}(i)-1} r_{\sigma(j)} - \sum_{i\in \Tbf} \left(\frac 1 2 p_{i}^2 p_{n+1}(1-p_{n+2})(\tau_H - \tau_L)\right) \prod_{j = 1}^{\sigma^{-1}(i)-1}r_{\sigma(j)}\\
        & +r_{n+1}\prod_{s\in \Tbf}r_s\sum_{i\in \Tat} \prod_{j=|\Tbf+2|}^{\sigma^{-1}(i)-1}r_{\sigma(j)}p_i\left(g_{H} - \frac 1 2 {p_i p_{n+1} (1-p_{n+2})(\tau_H - \tau_L)} \right)\\
        & +  p_{n+1} g_L \prod_{i\in\Tbf}r_i\pm O(n^2\Delta^4),
        \end{align*}
where in the last equality we used $p_{n+1} = 1 / \Gamma = O(\Delta)$.  
    Further analyzing the last term, we have
    \begin{align*}
        & r_{n+1}\prod_{s\in \Tbf}r_s\sum_{i\in \Tat} \prod_{j=|\Tbf+2|}^{\sigma^{-1}(i)-1}r_{\sigma(j)}p_i \left(g_{H} - \frac 12 {p_i p_{n+1} (1-p_{n+2})(\tau_H - \tau_L)} \right)\\
        & =r_{n+1} \prod_{s\in \Tbf}r_s\sum_{i\in \Tat} \prod_{j=|\Tbf+2|}^{\sigma^{-1}(i)-1}r_{\sigma(j)}p_ig_H-O(n\Delta^4).
    \end{align*}
        Note that $ \prod_{j = 1}^{\sigma^{-1}(i)-1}r_{\sigma(j)}= 1 - O(n\Delta)$ for each $i\in\Tbf$ and $r_{n+1} = 3/\Gamma=O(\Delta)$.
        Besides, notice that, for each $i \in [n]$, we have $\frac{p_i}{1 - r_i} = \frac{s_i / \Gamma}{2s_i / \Gamma} = \frac 1 2$.
        Therefore, Lemma~\ref{lm:from_scheduling} applies, and we have
        \begin{align*}
        \Ls(\sigma)=& \frac{g_H} 2\left(1-\prod_{i\in\Tbf}r_i \right) + (1-O(n\Delta))\sum_{i\in \Tbf} \frac 1 2 p_{i}^2 p_{n+1}(1-p_{n+2})(\tau_H - \tau_L)  + p_{n+1} g_L \prod_{i\in\Tbf}r_i \\
        &~~+r_{n+1}\prod_{s\in \Tbf}r_s\sum_{i\in\Tat} \prod_{j=|\Tbf+2|}^{\sigma^{-1}(i)-1}r_{\sigma(j)}p_ig_H \pm O(n^2\Delta^4)
        \\
        =& \frac{g_H} 2\left(1-\prod_{i\in\Tbf}r_i \right) + \sum_{i\in \Tbf} \frac 1 2 p_{i}^2 p_{n+1}(1-p_{n+2})(\tau_H - \tau_L)  + p_{n+1} g_L \prod_{i\in\Tbf}r_i\\
        &~~+ \frac 1 2 g_Hr_{n+1}\prod_{s\in \Tbf}r_s \left(1-\prod_{i\in \Tat}r_i \right) \pm O(n^2\Delta^4).
        \end{align*}
        
        
        Rewrite the equation in terms of $f, g$ and $k_1, k_2$, and we have
        \begin{align*}
        \Ls(\sigma)=&  \left(-\frac{g_H}{2}+g_Lp_{n+1}+\frac{g_Hr_{n+1}}{2} \right)\prod_{i\in\Tbf}r_i+\frac{1}{2}p_{n+1}(1-p_{n+2})(\tau_H - \tau_L) \sum_{i\in \Tbf}p_i^2\\
        &+ \frac{g_H}{2}-\frac{1}{2}g_H\prod_{i\in [n+1]}r_i \pm O(n^2\Delta^4)\\
        = & \Big(-\frac{g_H(p_{n+1}+q_{n+1})}{2}+p_{n+1}( (1-p_{n+2})(\tau_{H}-\tau_L)+p_{n+2}(\tau_{n+2}-\tau_L) )\Big)\prod_{i\in\Tbf}r_i\\
        & - p_{n+1} (1-p_{n+2}) (\tau_H-\tau_L) \prod_{i\in\Tbf} r_i \prod_{j\in \Tat} (1-p_j) + \frac{1}{2}p_{n+1}(1-p_{n+2})(\tau_H - \tau_L) \sum_{i\in \Tbf}p_i^2\\
        &+ \frac{p_{n+2} (\tau_{n+2}-\tau_H)}{2} \left(1-\prod_{i\in[n+1]}r_i \right) \pm O(n^2\Delta^4)\\
        =& k_1 f(\Tbf)-k_2 f(\Tbf)g(\Tat)- \frac 1 2 k_2\sum_{i\in \Tat} p_{i}^2 + C \pm O(n^2\Delta^4).
        \end{align*}
    \end{proof}
    
    \claimApproxError*
    \begin{proof}
Consider $f(T)$ first.
By Fact~\ref{fct:Taylor_series}, as $p_i\leq 2^n/\Gamma=\Delta$, we know $ e^{-2(p_i+p_i^2)}=1-2p_i+O(p_i^3)$.
Then $f(T)=\Pi_{i\in T}(1-2p_i)=\Pi_{i\in T}(e^{-2(p_i+p_i^2)}-O(p_i^3))\geq \Pi_{i\in T}e^{-2(p_i+p_i^2)}-{|T|}\max_iO(p_i^3)$.

As for the other side, again by Fact~\ref{fct:Taylor_series}
we have $1-2p_i\leq e^{-2(p_i+p_i^2)}$, which directly implies $e^{-\sum_{i\in T}2(p_i+p_i^2)}\geq f(T) $.

The other inequality is based on $ e^{-(p_i+p_i^2/2)}-O(p_i^3) \le 1-p_i\leq e^{-(p_i+p_i^2/2)} $, therefore it holds by the same argument.
\end{proof}

\claimSettingt*

\begin{proof}
We would like 
\begin{align*}
k_2=tk_1=tk_2+t[-p_{n+2}(\tau_{n+2}-\tau_H)(p_{n+1}+q_{n+1})/2+p_{n+1}p_{n+2}(\tau_{n+2}-\tau_{L})],    
\end{align*}
which is equivalent to
\begin{align*}
    (1-t)p_{n+1}(1-p_{n+2})(\tau_{H}-\tau_L)=t[-p_{n+2}(\tau_{n+2}-\tau_H)(p_{n+1}+q_{n+1})/2+p_{n+1}p_{n+2}(\tau_{n+2}-\tau_{L})].
\end{align*}
Rearranging the terms, we get
\begin{align*}
    & [(1-t)p_{n+1}(1-p_{n+2})-tp_{n+2}(p_{n+1}+q_{n+1})/2]\tau_H\\
    =& t[-p_{n+2}\tau_{n+2}(p_{n+1}+q_{n+1})/2+p_{n+1}p_{n+2}(\tau_{n+2}-\tau_L)]+(1-t)p_{n+1}(1-p_{n+2})\tau_L.
\end{align*}
Recall that $p_{n+2}=1/8,\tau_{n+2}=3/4,p_{n+1}=1/\Gamma,q_{n+1}=1-41/\Gamma,\tau_L=1/2$ and thus
\begin{align*}
    &\left[\frac{7(1-t)}{8\Gamma}-\frac{t}{16}\left(1-\frac{40}{\Gamma}\right) \right]\tau_H=\frac{7(1-t)}{16\Gamma}-\frac{3t}{64}\left(1-\frac{40}{\Gamma} \right)+\frac{t}{32\Gamma}\\
    \Rightarrow &~~\tau_H =\frac{\frac{7(1-t)}{16\Gamma}-\frac{3t}{64}(1-\frac{40}{\Gamma})+\frac{t}{32\Gamma}}{\frac{7(1-t)}{8\Gamma}-\frac{t}{16}(1-\frac{40}{\Gamma})}=\frac{-3t\Gamma+28+94t}{-4t\Gamma+56+104t}.
\end{align*}

We need $1/2<\tau_H<3/4$, which means 
\begin{align*}
    \frac{1}{2}<\tau_H=\frac{-3t\Gamma+28+94t}{-4t\Gamma+56+104t}<\frac{3}{4}.
\end{align*}
For the first inequality, noting that $|t- 2|\leq O(y+\Delta^2)=O(\Delta^2)$ and $\Delta=2^{-7n}$, we have
\begin{align*}
    -4t\Gamma+56+104t>-6t\Gamma+2(28+94t),
\end{align*}
which means the first inequality holds.

As for the second inequality, we need
\begin{align*}
    -12t\Gamma+4(28+94t)>-12t\Gamma+3(56+104)t, \\
    \iff 64t>56,
\end{align*}
which holds by our choice of $t$.
\end{proof}

\lemmaApproxLossByh*

\begin{proof} 
By the setup of the parameters, we have $k_1,k_2 > 0$ and $|k_2/k_1-2e^{y/2}|\leq\Delta^2$.
Combining Claim~\ref{lm:before_approx} and Claim~\ref{clm:approx_error} gives
\begin{align*}
    &~ k_1 \left(e^{-\sum_{i\in \Tbf}2(p_i+p_i^2)} \right) \left(1-\frac{k_2}{k_1}e^{-\sum_{i\in\Tat}(p_i+p_i^2/2)} \right)\\
    \leq &~ \Ls(\sigma)-C+ \frac 1 2 k_2\sum_{i\in \Tat} p_{i}^2  \pm O(n^2\Delta^4),\\
    \leq &~ k_1 \left(e^{-\sum_{i\in \Tbf}2(p_i+p_i^2)}-O(n\Delta^3) \right) \left(1-\frac{k_2}{k_1} \left(e^{-\sum_{i\in\Tat}(p_i+p_i^2/2)}-O(n\Delta^3) \right) \right)\\
    \leq & ~k_1 \left(e^{-\sum_{i\in \Tbf}2(p_i+p_i^2)} \right) \left(1-\frac{k_2}{k_1}e^{-\sum_{i\in\Tat}(p_i+p_i^2/2)} \right) + O(|k_1| n\Delta^3).
\end{align*}

Recall that
 $y:=\sum_{i\in S}s_i/\Gamma+(s_i/\Gamma)^2=\sum_{i\in\Tat\cup\Tbf}p_i+p_i^2$, $x:=\sum_{i\in \Tbf}p_i+p_i^2$ and we define $z:=\sum_{i\in \Tat}p_i^2/2$ for analysis.
We know that $e^{-\sum_{i\in \Tbf}2(p_i+p_i^2)}=e^{-2x}$ and $e^{-y+x+z}= e^{-\sum_{i\in\Tat}(p_i+p_i^2/2)}$.
Note that $0<k_1=O(1)$ and we can rewrite the equations above as
\begin{align*}
    &e^{-2x}\left(1-\frac{k_2}{k_1}e^{-y+x+z} \right)\nonumber \\
    \leq &\frac{\Ls(\sigma)-C+ k_2\sum_{i\in \Tat} p_{i}^2 / 2 \pm O(n^2\Delta^4)}{k_1}\nonumber \\
    \leq & e^{-2x}\left(1-\frac{k_2}{k_1} e^{-y+x+z} \right)+ O(n\Delta^3).
\end{align*}

Note that $1+z\le e^{z}\le 1+z+z^2$ as $0\le z\le 1/4$ and $k_2/k_1\approx 2$.
On one hand, we know that
\begin{align*}
    e^{-2x}\left(1-\frac{k_2}{k_1}e^{-y+x+z} \right)\leq & e^{-2x} \left(1-\frac{k_2}{k_1}e^{-y+x}(1+z) \right).
    \\
    \leq & e^{-2x} \left(1-\frac{k_2}{k_1}e^{-y+x} \right)-\frac{k_2}{k_1}z+O(xz).
\end{align*}
On the other hand,
\begin{align*}
    e^{-2x} \left(1-\frac{k_2}{k_1}e^{-y+x+z} \right)\geq & e^{-2x} \left(1-\frac{k_2}{k_1}e^{-y+x}(1+z+z^2) \right)\\
    \geq & e^{-2x} \left(1-\frac{k_2}{k_1}e^{-y+x} \right)-\frac{k_2}{k_1}z -O(z^2).
\end{align*}

Noting that $O(xz)=O(n\Delta^3)$ and $O(z^2)\le O(n^2\Delta^4)$ completes the proof.

\end{proof}

\claimStrongConvexity*

\begin{proof}
    Recall $h(x)=e^{-2x}(1-\frac{k_2}{k_1}e^{-y+x})$. 
    We have $\frac{\dd h(x)}{\dd x}=-2e^{-2x}+\frac{k_2}{k_1}e^{-y-x}$, and
    $\frac{\dd^2 h(x)}{\dd x^2}=4e^{-2x}-\frac{k_2}{k_1}(e^{-y-x})\in [1,4]$ for $-2^{-6n}\leq x\leq 1/2$. Therefore, $\frac{\dd h(x)}{\dd x}\mid_{x^*}=0$. 
    Hence by strong convexity, we know
    $h(x^*+\epsilon)\geq h(x^*)+\epsilon^2/2$.
    
    Now we prove $|x^*-y/2|\leq O(\Delta^2)$.
We know the derivative $|h'(y/2)|=|-2e^{-y}+\frac{k_2}{k_1}e^{-3y/2}|\leq O(\Delta^2)$, which means $|x^* - y/2|\leq O(\Delta^2)$ by the strong convexity.
\end{proof}

\section{Omitted Proofs from Section~\ref{sec:ptas}}
\label{sec:ptas-app}

\subsection{Omitted Proofs from Section~\ref{sec:SSDP}}\label{sec:ptas:app_tree}
    
    \theoremJian*
    
    \begin{proof}[Proof Sketch]
        A policy is described by a decision tree, where each node corresponds to an action to be taken (see Section~\ref{sec:prelim}).
    It can be shown that when $|V| = O(1)$,  every policy may be approximated, with a loss of at most $O(\eps) \MAX$, by an ensemble of policies that is \emph{block adaptive}.
    A block adaptive policy corresponds to a decision tree whose nodes may be grouped into a small number of \emph{blocks}; within each block, the order of actions affects the expected payoff negligibly, and so the blocks may be seen as supernodes.
    When one condenses the nodes to such supernodes, the tree's depth and the maximum degree of each node are both bounded by constants; one may therefore enumerate the topologies of all such trees.
   For each topology, there are exponentially many ways to fill actions into each block, but each block may be approximated by a \emph{signature vector} (signifying the state in a block and the discretized transition probabilities to the other blocks); there are only polynomially many possible signatures, so they can be efficiently enumerated as well. 
    Finally, a dynamic programming is employed to check whether a given set of signatures, one on each node, in a given tree topology can be realized with actual actions.  
    \end{proof}
    
    \paragraph{Decision Tree}
    Given a policy~$\alg$, for each node $u$ in the decision tree, let $I_u$ be its state, let $a_u$ be the action to be taken by the policy, let $S_u$ be the set of boxes that haven't been opened yet,
        let $T(u)$ be the subtree rooted at~$u$,
        let $G(u) = \Ex{G(I_u, a_u)}$ be the expected marginal payoff at node~$u$, 
        let $\Phi(u)$ be the probability with which $u$ is reached in an execution of the policy, 
        and let $H(u)$ be the sum of marginal payoffs accumulated when the process reaches~$u$.
            The expected payoff of~$\alg$ can then be written in two ways:
            \begin{equation}\label{equ:payoff_by_marginal}
                \P(\alg) = \sum_{u} G(u) \cdot \Phi(u) = \sum_{u \textrm{ is leaf}} H(u) \cdot \Phi(u) \; .
            \end{equation}
            

    \lemmaBad*
    
    \begin{proof}
            It suffices to prove that, in the decision tree $T^*$ of an optimal policy $\alg^*$, for any node $u$, $G(u) \geq 0$.
            Assume, towards a contradiction, that $G(u) < 0$ for a node~$u$.
            $a_u$ cannot be in $A^1$, as an optimal policy cannot take without inspection a box with an expected value smaller than the maximum value seen so far.
            If $a_u = a^0_i \in \ActSet^0$ for some $i \in [n]$, we construct a modified policy~$\alg'$ and show it strictly outperforms~$\alg^*$.
            $\alg'$ uses the same decision tree $T^*$, 
            with the only difference being that when $\alg'$ reaches node $u$, 
            instead of probing box~$i$ as $\alg^*$ does, $\alg'$ samples a value $\val'_i$ from $F_i$
            and simulates the rest of $\alg^*$ \emph{pretending} that $\vali$, which is not observed, is equal to~$\val'_i$.
            $\alg^*$ and $\alg'$ reach every subsequent node~$w$ with the same probability. 
            If $\alg^*$ quits by calling $\End$ and taking box~$i$, $\alg'$ instead takes the maximum value seen so far.\footnote{If there is no other opened box, $\alg'$ may quit without taking anything, or take any box without inspection.}
            
            For each subsequent node~$w$, let $I'_w$ be the minimum between $I_w$ and (the actual) $\vali$.  
            $I'_w$ may be seen as the \emph{true} state at~$w$ in the execution of~$\alg'$.
            Then $I'_w \leq I_w$ with probability~$1$. 
            Similarly, let $G'(w)$ be the \emph{true} expected marginal payoff at node $w$ when $\alg'$ is executed at node~$w$; that is, $G'(u) = 0$ and $G'(w) = G(I'_w, a_w)$ for $w \neq u$. 
            Since $G$ is non-decreasing in its first parameter,
            $G'(w) \geq G(w)$ for any node $w \in T^*$; in particular, $0 = G'(u) > G(u)$ by assumption. From Equation~\eqref{equ:payoff_by_marginal}, we conclude that $\P(\alg') > \P(\alg^*)$, a contradiction to the optimality of~$\alg^*$.
        \end{proof}
    
\subsection{Omitted Proofs from Section~\ref{sec:ptas:discretization}}\label{sec:ptas:app_discret}

    \lemmaDiscret*
    
    \begin{proof}
        Fix a deterministic optimal policy $\alg$ on~$\Ins$ and let $T$ be its decision tree. 
        We construct a randomized \Sdiscretized\ policy $\Dalg$ which simulates $\alg$ when running on the \Sdiscretized\ instance $\DIns$. 
        Whenever $\Dalg$ probes a \Sdiscretized\  variable $\DX_i$, $\Dalg$ randomly samples a value $\vali$ from $F_{i \mid \Dis(\vali) =\DX_i}$,  the distribution of $\vali$ conditioning on that it discretizes to $\DX_i$. 
        $\Dalg$ then feeds $\alg$ the value $\vali$ to simulate it.
        
        It is straightforward that $\Dalg$ can be represented by the same decision tree $T$ such that $\Dalg$ reaches each node in $T$ with the same probability as $\alg$ does. 
        Let $\SG(u)$ be the marginal payoff at a node~$u$ when $\Dalg$ is executed on~$\DIns$, and let $G(u)$ be that of $\alg$ on the same node~$u$ when it is executed on~$\Ins$.  
        The state of $\alg$ and~$\Dalg$ differs by at most $\eps\threshold$ on any node, so $\SG(u) \geq G(u) - \eps^2 \threshold$ since $G$ is Lipschitz.
        By Equation~\eqref{equ:payoff_by_marginal}, we have $$\P(\Dalg, \DIns) = \sum_{u \in T}
        \SG(u) \cdot \Phi(u) \geq
        \sum_{u \in T}
        (G(u) - \eps^2 \cdot \threshold) \cdot \Phi(u) \geq 
        \P(\alg, \Ins) - O(\eps^2) \cdot \threshold \; .$$
        This proves the first statement.
        
        For the second statement, notice that for an \Sdiscretized\ policy $\Dalg$, by definition its decision tree is the same when it is executed on $\Ins$ and on~$\DIns$; in particular, each node is reached with the same probability. 
        Since $\val_i \geq \Dis(\vali)$, we have $\P(\Dalg, \DIns) \leq \P(\Dalg, \Ins)$.
    \end{proof}

    \lemmaNoTwoJackpot*
    
        \begin{proof}
                
            Since $\OPT \geq \max_i \Ex{\vali}$, after a box with value $v^* \geq \threshold$ is opened, an optimal policy never exercises the option to take a box without inspection, and the (only) optimal policy is to follow the index-based strategy on the remaining boxes with indices at least $v^*$. 

            Recall from the proof of Theorem~\ref{thm:index} that the profit of the index-based strategy is $$\Ex{\max_{i} \amorval_i} \leq \OPT,$$ where $\amorval_i = \min \{\vali, \gittins_i\}$. 
         Since $\OPT \leq \eps \threshold$, by Markov inequality, with probability at $\eps$, $\max_i \amorval_i \geq \threshold$.
            For a box~$i$ with $\gittins_i \geq v^* \geq \threshold$, $\vali \geq \threshold$ implies $\amorval_i \geq \threshold$.
            Therefore, with probability at most $\eps$, any such box has value at least~$\threshold$.
        \end{proof}

    \lemmaApproxF*
        \begin{proof}
            The first inequality is obvious. 
            For the second, let $M_i$ be the event $i = \argmax_{i \in S} \amorval_i - v$.\footnote{If there is a tie, break it lexicographically.}
            By definition,
                \begin{align*}
                    \PI(S, v) &= \Ex{(\max_{i \in S} \amorval_i - v)_+} \\
                            &= \sum_{i \in S} \Prx{M_i} \cdot \Ex{(\amorval_i - v)_+ \mid M_i} \\
                            &\geq \sum_{i \in S} \Prx{\forall j \neq i, \amorval_j \leq v \land \amorval_i > v} \cdot \Ex{(\amorval_i - v)_+ \mid M_i} \\
                            &= \sum_{i \in S} \Prx{\forall j \neq i, \amorval_j \leq v}  \Prx{\amorval_i > v} \cdot \Ex{(\amorval_i - v)_+ \mid M_i} \\
                            & \geq \sum_{i \in S} (1-\eps) \Prx{\amorval_i > v} \cdot \Ex{(\amorval_i - v)_+ \mid M_i} 
                            \\
                            &\geq \sum_{i \in S} (1-\eps) \Prx{M_i} \cdot \Ex{(\amorval_i - v)_+ \mid M_i}       \\
                            &= (1-\eps) \sum_{i \in S} \Ex{(\amorval_i - v)_+}   
                            = (1- \eps) \cdot  F(S,v)
                \end{align*}
            The second inequality follows from Lemma~\ref{lem:no2jactpot}.
        \end{proof}

    \claimLPNOIPNOI*
    
    \begin{proof}
	  Actions taken by~$\alg^L$ are valid actions in $B$: $a^0_i$ for opening box~$i$, and $a^1_i$ for taking box~$i$ without inspection.
	  Since $\UP(v) \geq v$ for any~$v$, a simple induction shows that, after taking the same sequence of actions, the state in $B^L$ is no less than the state in~$B$, the latter being the largest value revealed so far.  
	  The marginal payoff of an action in $B$ is given by the value increase on top of its state, and therefore the same action yields weakly less marginal payoff in~$B^L$.
            
        \end{proof}

    \lemmaIndexLPNOI*
        \begin{proof}
    	  For any $i \in S$, if $\Ex{G^L(v, a^0_i)} > 0$, then $\gittins_i > v \geq \threshold$.  
    	  Therefore, by Lemma~\ref{lem:no2jactpot}, each such box~$i$ is opened by the quasi-index policy with probability at least $1 - \eps$.  
    	  The marginal payoff of the policy is therefore at least $(1 - \eps) \sum_{i \in S} [\Ex{(\vali - v)_+} - c_i]_+ = (1 - \eps) \sum_{i \in S} (\amorval_i - v)_+ = (1 - \eps) F(S, v)$.
        \end{proof}

    \lemmaHighDiscret*
    
    \begin{proof}
	  The first inequality results from Claim~\ref{cl:LPNOI2PNOI}.  
	  We only need to show $\OPT^L \geq (1 - \eps) \OPT$.
	  Let $\alg^*$ be an optimal policy for~$B$, and let $T^*$ be its decision tree. 
	  Let $\SUB$ be the set of nodes in~$T^*$ where the policy first sees a revealed value at least~$\threshold$.  
	  Formally, $\SUB \coloneqq \{u \in T^*: I_u \geq \threshold, I_{\fa(u)} < \threshold\}$, where $\fa(u)$ is the father node of $u$. 
            
            Recall that $\Phi(u)$ denotes the probability of $\alg^*$ reaching a node $u$ in~$T^*$. 
            Let $S_u$ be the set of boxes not opened yet when node~$u$ is reached.
            By equation~\eqref{equ:payoff_by_marginal}, we have 
            \begin{align}
                \P(\alg) &= \sum_{u \in T^*} G(u) \cdot \Phi(u) = \sum_{\substack{u \in T^* \\ I_u < \threshold} } G(u) \cdot \Phi(u) + \sum_{u \in Q} \PI(S_u, I_u) \cdot \Phi(u) 
                \nonumber\\
                &\leq \sum_{\substack{u \in T^* \\ I_u < \threshold} } G(u) \cdot \Phi(u) + \sum_{u \in Q} F(S_u, I_u) \cdot \Phi(u) .\label{equ:opt_profit}
            \end{align}
            
	    Let $\alg^L$ be the following policy on~$B^L$: 
	    $\alg^L$ copies the behavior of $\alg$ as long as the value seen so far is at most $\threshold$; once its state reaches $\threshold$, $\alg^L$ implements a quasi-index policy.  
            
            Let $T^L$ be the decision tree of~$\alg^L$.  
	    To distinguish the notations from those in PNOI, we denote a typical node in~$T^L$ as $u^L$, write $I^L_{u^L}$ as its state, $S^L_{u^L}$ the set of boxes not opened yet, $\Phi^L(u^L)$ the probability of $\alg^L$ reaching~$u^L$, and $\P^L(\alg^L)$ the expected payoff of $\alg^L$ on~$B^L$.
	    Similarly, let $\SUB^L$ be the set of nodes in~$T^L$ with states at least $\threshold$ and whose parents have states below $\threshold$.
            
            $T^L$ is idential to $T^*$ from the root down to the nodes in~$Q^L$; each node $u \in T^*$ with $I_u \leq \threshold$ has an image $\image(u) \in T^L$, with $\alg^L$ reaching $\image(u)$ with probability $\Phi(u)$, taking the same action as $\alg^*$ does on $u$, and making expected profit.
            For a node $u \in \SUB$, let $R(u)$ be the node in~$\SUB^L$ such that when $\alg^*$ reaches~$u$, $\alg^L$ reaches $R(u)$.  
            Then $\UP(I_u) = I^L_{\image(u)}$, and $S^L_{\image(u)} = S_u$;
            for any $u^L \in Q^L$, $\Phi^L(u^L) = \sum_{u \in \image^{-1}(u^L)} \Phi(u)$.
	    Inheriting the notation from Lemma~\ref{lem:index_LPNOI}, the expected additional utility $\alg^L$ makes after it reaches a node $u^L \in \SUB^L$ is $\PI^L(S^L_{u^L}, I^L_{u^L})$. 
            By Lemma~\ref{lem:index_LPNOI} and the definition of~$\UP(\cdot)$, for any $u \in \image^{-1}(u^L)$, 
            \begin{align*}
            \PI^L(S^L_{u^L}, I^L_{u^L}) &\geq (1- \eps) \cdot F(S^L_{u^L}, I^L_{u^L}) 
            = (1 - \eps) \cdot F(S_u, \UP(I_u))
            \\
            &\geq (1 - \eps) \cdot (F(S_u, I_u) - \eps \cdot \OPT)\\
            &= F(S_u, I_u) - O(\eps) \cdot \OPT.
            \end{align*}
            
            Using inequality~\eqref{equ:opt_profit}, we have \begin{align*}
                \P(\alg) - \P^L(\alg^L) 
	      & \leq \sum_{u \in Q} \Phi(u) F(S_u, I_u) - \sum_{u^L \in \SUB^L} \Phi^L(u^L) \PI^L(S^L_{u^L}, I^L_{u^L})  \\
		&= \sum_{u \in Q} \Phi(u) \cdot \left(F(S_u, I_u) - \PI^L(S^L_{R(u)}, I^L_{R(u)}) \right) \leq O(\eps) \cdot \OPT 
            \end{align*}
            
            Therefore $\OPT^L \geq \P^L(\alg^L) \geq (1-O(\epsilon)) \cdot \OPT$.
        \end{proof}
        
    \theoremGeneralPTAS*
    \begin{proof}
            Given a PNOI instance $\Ins$, 
            denote the corresponding \Sdiscretized\  instance as $\Ins^S$. By Lemma~\ref{lem:discret}, we know $\OPT^S \geq (1-O(\eps))\cdot \OPT$, where $\OPT^S$ is the optimal expected profit of $\Ins^S$.
            
            We further discretize the values above  $\threshold$ in $B^S$ using the discretization technique from Section~\ref{sec:ptas:VH_discrete}. 
            Let $B^{SL}$ be the resulting instance.
            From Lemma~\ref{lem:high_discret}, we have $\OPT^{SL} \geq (1- O(\eps)) \cdot \OPT^S = (1-O(\eps))\cdot \OPT$, where $\OPT^{SL}$ is the optimal expected profit of $\Ins^{SL}$.
            
            
            The number of states in $\Ins^{SL}$ is $O(1 / \eps)$. 
            So we could apply Theorem~\ref{thm:jian} to  $\Ins^{SL}$ and get a policy $\alg^L$ with $\P^L(\alg^L, \Ins^{SL}) \geq (1- O(\eps)) \cdot \OPT^{SL}$.
            
            Finally, using Lemma~\ref{lem:discret} and Claim~\ref{cl:LPNOI2PNOI}, we prove that $\P(\alg_L, \Ins) \geq (1- O(\eps)) \cdot \OPT$.
        \end{proof}

\Xcomment{
    \subsection{Proof of Lemma~\ref{lem:trun_ptas23} and Lemma~\ref{lem:trun_ptas}}\label{sec:ptas:trun_appendix}
    
    \begin{proof}
        Define a truncated version of \Pandora called \TPandora, such that \TPandora will immediately end once a value greater than $\threshold$ is revealed. In other words, any valid policy for the \TPandora problem is a valid \emph{truncated} policy for the \Pandora problem and vice versa.
                
                \TPandora could be modeled as an SSDP, with the following modifications to the previous SSDP model of \Pandora in Section~\ref{sec:ptas:prob},
                \begin{enumerate}
                    \item All the states with value greater than $\threshold$ are replaced by a new state $\TState \coloneqq \threshold$.
                    \item The state transition function $f_T$ in \TPandora will be truncated accordingly, namely,
                    \begin{displaymath}
                        \begin{array}{ll}
                             f_T(I, a_i^0) &= \min \{\max\{I, \val_i\}, \TState \}; \\
                             f_T(I, a_i^1) &= \TState.
                        \end{array}
                    \end{displaymath}
                    \item The marginal payoff function $g_T$ is set to $0$ once reaching the state $\TState$, namely, 
                    \begin{displaymath}
                         g_T(\TState, \cdot) = 0 \; .  \\
                    \end{displaymath}
                        
                \end{enumerate}
                
                Therefore, we could apply Theorem~\ref{thm:const} and the discretization technique to the \TPandora model, and get a polynomial-time computable policy $\alg_T$ with expected payoff at least $\OPT_T - \eps \cdot \OPT$, where $\OPT_T$ is the optimal expected payoff of the \TPandora problem. 
                
                Noticing that $\P(\alg^*_T) \leq \OPT_T$ and $\OPT_T \leq \OPT$, we conclude that 
                $$ \P(\alg_T) \geq  \OPT_T - \eps \cdot \OPT \geq \P(\alg^*_T) - \eps \cdot \OPT \; .$$
    \end{proof}
}

\end{document}